\newcommand{\BC}{\mathbb{C}}
\newcommand{\F}{\mathbb{F}}
\newcommand{\Z}{\mathbb{Z}}
\newcommand{\PP}{\mathbb{P}}
\newcommand{\Ga}{\alpha}
\newcommand{\Gb}{\beta}
\newcommand{\calP}{\mathcal{P}}
\newcommand{\cB}{\mathcal{B}}
\newcommand{\hcalP}{\hat{\mathcal{P}}}
\newcommand{\calQ}{\mathcal{Q}}
\newcommand{\tcB}{\tilde{\mathcal{B}}}
\newcommand{\fP}{\mathfrak{P}}
\newcommand{\cL}{\mathcal{L}}
\newcommand{\cQ}{\mathcal{Q}}
\newcommand{\GL}{\mathrm{GL}}
\newcommand{\PGL}{\mathrm{PGL}}
\newcommand{\AGL}{\mathrm{AGL}}
\DeclareMathOperator{\Div}{\mathrm{Div}}
\DeclareMathOperator{\Gal}{\mathrm{Gal}}
\DeclareMathOperator{\Aut}{\mathrm{Aut}}
\DeclareMathOperator{\Char}{\mathrm{Char}}
\DeclareMathOperator{\ddiv}{\mathrm{div}}
\DeclareMathOperator{\Supp}{\mathrm{Supp}}
\DeclareMathOperator{\ev}{\mathrm{Ev}}
\DeclareMathOperator{\bX}{\textbf{X}}
\DeclareMathOperator{\be}{\textbf{e}}
\DeclareMathOperator{\bz}{\textbf{z}}
\DeclareMathOperator{\bbe}{\underline{\textbf{e}}}
\newtheorem{theorem}{Theorem}[section]
\newtheorem{lemma}[theorem]{Lemma}
\newtheorem{corollary}[theorem]{Corollary}
\theoremstyle{definition}
\newtheorem{definition}[theorem]{Definition}
\newtheorem{example}[theorem]{Example}
\newtheorem{remark}[theorem]{Remark}
\numberwithin{equation}{subsection}
\def\enoteheading{\section*{\notesname
  \@mkboth{\MakeUppercase{\notesname}}{\MakeUppercase{\notesname}}}%
  \mbox{}\par\vskip-2.3\baselineskip\noindent\rule{.5\textwidth}{0.4pt}\par\vskip\baselineskip}
\begin{document}
\title[Fast Fourier Transform]{Fast Fourier transform via automorphism groups of rational function fields}

\author{Songsong Li}
\address{School of Electronic Information and Electrical Engineering, Shanghai Jiao Tong University, Shanghai 200240,  China}
\email{songsli@sjtu.edu.cn}

\author{Chaoping Xing}
\address{School of Electronic Information and Electrical Engineering, Shanghai Jiao Tong University, Shanghai 200240,  China}
\email{xingcp@sjtu.edu.cn}

\subjclass[]{}
\keywords{}
		
\maketitle	
\begin{abstract}

The Fast Fourier Transform (FFT) over a finite field $\mathbb{F}_q$ computes evaluations of a given polynomial of degree less than $n$ at a specifically chosen set of $n$ distinct evaluation points in $\mathbb{F}_q$. If $q$ or $q-1$ is a smooth number, then the divide-and-conquer approach leads to the fastest known FFT algorithms. Depending on the type of group that the set of evaluation points forms, these algorithms are classified as multiplicative (Math of Comp. 1965) and additive (FOCS 2014) FFT algorithms. In this work, we provide a unified framework for FFT algorithms that include both multiplicative and additive FFT algorithms as special cases, and beyond: our framework also works when $q+1$ is smooth, while all known results require $q$ or $q-1$ to be smooth. For the new case where $q+1$ is smooth (this new case was
not considered before in literature as far as we know), we show that if $n$ is a divisor of $q+1$ that is $B$-smooth for a real $B>0$, then our FFT needs $O(Bn\log n)$ arithmetic operations in $\mathbb{F}_q$. Our unified framework is a natural consequence of introducing the algebraic function fields into the study of FFT.

\end{abstract}

\section{Introduction}\label{sec: introduction}	
The discrete Fourier transform (DFT for short) of length $n$ over a field $F$ is a transform from $n$ coefficients of a polynomial $f(x)$ over $F$ of degree less than $n$ to $n$ evaluations of $f(x)$ at all $n$-th roots of unity in $F$. The inverse DFT (iDFT for short) is just the reverse process from $n$ evaluations to  $n$ coefficients. However, the computation of DFT directly from the definition requires $O(n^2)$ operations in $F$ which is usually too slow for practical purposes. Thus, we desire to design a faster DFT to fulfill various applications. This motivates the study of fast Fourier transform (FFT for short). FFT is a classical topic in complexity theory and has found various applications in theoretical computer science such as fast polynomial arithmetic \cite{von13, bur13, ben23ec}, and coding theory \cite{jus76,  gao03, Gao10, lin2014novel, lin16novel, lin16fft, han21}.



\subsection{Previous work}\label{subsec:pw}
The origin of FFT can be traced back to  Gauss's unpublished work in 1805, which was pointed out by Heideman et al. in \cite{heideman1984}. After 160 years, Cooley and Tukey~\cite{cooley1965} independently rediscovered this algorithm and popularized it. Now this FFT is known as the Cooley-Tukey algorithm. Cooley-Tukey's algorithm is a divide-and-conquer algorithm that implements DFT over the complex numbers $\BC$; namely, given a polynomial $f(x)=\sum_{i<n} a_ix^i \in\BC[X]$, evaluate $f(x)$ at the $n$-th roots of unity in $\BC$. If $n$ is a $B$-smooth number, i.e., all its prime factors are not greater than $B$, Cooley-Tukey's algorithm computes DFT in $O(Bn\log n)$ arithmetic operations of $\BC$. In the following, we also call $n$ is smooth if $B=O(1)$ is a constant. Using Cooley-Tukey's algorithm,  DFT over $\F_q$ can also be done in $O(n\log n)$ field operations of $\F_q$ if $\F_q$ contains an $n$-th root of unity for smooth integer $n$~\cite{Pollard71}. For DFTs of polynomials in $\F_q[x]_{<n}$, but the evaluation points in an extension field $\F_{q^d}$ which has smooth $n$-th roots of unity, van der Hoeven and Larrieu \cite{van17} showed that the Frobenius automorphism $\Phi \in\Gal(\F_{q^d}/\F_q)$ can be used to accelerate the FFT over $\F_{q^d}$. 
 The above class of FFTs is called multiplicative FFT due to the fact that the evaluation set is a multiplicative subgroup of $\F_q^*$.

However, if $\F_q$ does not have the desired $n$-th roots of unity, 
for example, $\F_q=\F_{2^r}$ and $2^r-1$ is a prime number, then the multiplicative FFT is not efficient. 
To implement FFT over $\F_q$ in this case, a new type of FFT algorithm was discovered by Zhu-Wang~\cite{WZ88} and Cantor~\cite{Cantor89} independently. To distinguish Cooley-Tukey's FFT algorithm from the one by Zhu-Wang and Cantor, Mateer, and Gao \cite{Gao10} named the latter one as the additive FFT based on the fact that the evaluation set is an additive subgroup of $\F_q$. In their work \cite{Gao10}, Mateer and Gao improved the previous additive FFT algorithm \cite{WZ88, Cantor89} and showed that their additive FFT algorithm requires $O(n\log^2 n)$ additions and $O(n\log n)$ multiplications in $\F_q=\F_{2^r}$. Particularly, in case  $r$ is a power of two, their improved FFT needs $O(n\log n)$ multiplications and only $O(n\log n\cdot\log \log n)$ additions.  Four years later, Lin et al. \cite{lin2014novel} showed that additive FFT can be run in $O(n\log n)$ operations (including additions and multiplications) over $\F_{2^r}$ by taking a novel polynomial basis of $\F_{2^r}[x]_{<n}$ (the $\F_{2^r}$-vector space consisting of all polynomials over $\F_{2^r}$ of degree less than $n$).
Moreover, their new FFT algorithm is applicable to finite field $\F_{2^r}$ with arbitrary $r$.
The same group of authors later gave a new interpretation of their algorithm and presented some applications of FFT in encoding and decoding of Reed-Solomon codes \cite{lin16fft, han21}. Note that for the additive FFT given in \cite{lin2014novel, lin16fft}, it requires that every polynomial in $\F_q[x]_{<n}$ is represented under a certain basis consisting of products of linearized polynomials instead of the standard monomial basis $\{1, x, \dots, x^{n-1}\}$.

As we have seen, both multiplicative and additive FFTs over $\F_q$ have constraints. Namely, for multiplicative FFT, one requires that $q-1$ is smooth; while for additive FFT, one requires that the characteristic $p$ of $\F_q$ is a small constant. 
Recently, Ben-Sasson et al. \cite{ben23ec} made a breakthrough in the FFT-like algorithm over an arbitrary finite field.
The new algorithm is based on elliptic curves and isogenies of smooth degree, thus it is called elliptic-curve-based FFT (ECFFT for short). 
More precisely, assume $n$ is a smooth number. Let $f(x)\in\F_q[x]_{<n}$ be a polynomial and $S'\subsetneq \F_q$ a carefully selected subset of cradinality $n$. Then the multipoint evaluation (MPE for short) of $f$ at $S'$ can be done in $O(n\log n)$ operations of $\F_q$ under the representation of $f$: $f=(f(s_1), \ldots, f(s_n))$, where $S\subsetneq\F_q$ is another subset of size $n$ and $S\cap S'=\emptyset$ (This is different from previous FFTs which take as input the coefficients of $f$ under a certain basis of $\F_q[x]_{<n}$).
They made use of isogenies between elliptic curves to construct the transform from the MPE of $f$ at $S$ to the MPE of $f$ at $S'$. During the transformation, some precomputations are required. The authors did not give the total storage for the precomputations, but at least $\Omega(n)$ is required. This is an additional overhead compared to the multiplicative/additive FFTs. Besides, a constraint for the ECFFT over $\F_q$ is that length $n$ is upper bounded by $O(\sqrt{q})$, especially for odd $q$. This restricts many applications such as encoding and decoding of $q$-ary Reed-Solomon codes where code lengths $n$ are usually proportional to $q$. 


\subsection{Sketch of the multiplicative and additive FFT techniques}
To compare our results and better understand the FFT algorithm, let us sketch the idea of Cooley-Tukey's algorithm and Lin-Chung-Han's algorithm \cite{lin2014novel}, which correspond to the multiplicative and additive cases, respectively. For both cases, let $n=2^r$ and $f(x)\in\F_q[x]$ be a polynomial of degree less than $n$. Denote the complexity of FFT with respect to the number of additions and multiplications by $A(n)$ and $M(n)$, respectively.

For Cooley-Tukey's algorithm, the evaluation set is selected to be $\mu_n\subseteq \F_q^*$ which is the set of all $n$-th roots of unity, where $n$ is equal to $2^r$ for an integer $r\ge 1$. Then the square map $s(x)=x^2$ maps $\mu_n$ onto $\mu_{n/2}$. Moreover, the polynomial $f(x)$ can be decomposed as a combination of two lower-degree polynomials meanwhile, i.e., $f(x)=f_0(x^2)+x\cdot f_1(x^2)$, where $f_0, f_1\in \F_q[x]$ are polynomials of degree less than $n/2$. Thus, the DFT of $f(x)$ at $\mu_n$ can be reduced to DFTs of $f_0$ and $f_1$ at $\mu_{n/2}$ and then a combination of these two sets of $n/2$ evaluations by using $O(n)$ additions/multiplications in $\F_q$. Then the running time $A(n)$ and $M(n)$ both satisfy the recursive formula
\begin{equation}\label{eq:recurs}
A(n)=2A(n/2)+O(n), \ M(n)=2M(n/2)+O(n).
\end{equation}
After $r$ steps of recursions, we have $A(n)=M(n)=O(n\log n)$.

Lin-Chung-Han's additive FFT algorithm inherited the idea of the FFT algorithm \cite{fid72} through polynomials modular arithmetic. Assume $\F_q=\F_{2^r}=\{\alpha_i\}_{i=0}^{2^r-1}$. Then the evaluations of $f(x)\in \F_{2^r}[x]_{<2^r}$ at $\F_{2^r}$ are exactly the $2^r$ residues
\[\big( f(x)\bmod (x-\alpha_0), f(x)\bmod (x-\alpha_1), \dots, f(x)\bmod (x-\alpha_{2^r-1}) \big).\]
Note that $\prod_{i=0}^{2^r-1}(x-\alpha_i)=x^{2^r}-x$ and $f(x)=f(x)\bmod (x^{2^r}-x)$. To efficiently get these $2^r$ residues, Lin et al. introduced linearized polynomials to decompose the modulo $(x^{2^r}-x)$ into $2^r$ modulus $\{(x-\alpha_i)\}_{i=0}^{2^r-1}$ recursively. Assume
\[\{0\}=W_0\subsetneq W_1\subsetneq\cdots \subsetneq W_{r-1}\subsetneq W_r=\F_{2^r}\]
is a subspace chain of $\F_{2^r}$ and $W_i=W_{i-1}\cup(W_{i-1}+v_{i})$ for some $v_i\in\F_{2^r}$ and each $i\in[1, r]$. In particular, we have $\dim_{\F_2}(W_i)=i$ and $\{v_1,\dots, v_r\}$ constitutes a basis of $\F_{2^r}$ as a vector space over $\F_2$. Define the linearized polynomial as $\ell_i(x)=\prod_{w\in W_i}(x-w)$. Then $\deg(\ell_i)=2^{i}$ and \[\ell_i(x+\beta)=\ell_{i-1}(x+\beta)\cdot\ell_{i-1}(x+\beta+v_{i})\ \text{for\ any}\ \beta\in\F_{2^r}.\]
For any $e\in[0,2^r-1]$, assume the $2$-adic expansion of $e$ is $e=\sum_{i=0}^{r-1}e_i 2^i$. Then
\[\deg(\ell_0^{e_0}(x)\ell_1^{e_1}(x)\cdots\ell_{r-1}^{e_{r-1}}(x)\})=e.\]
Hence $\cB=\{\ell_0^{e_0}(x)\ell_1(x)^{e_1}\cdots\ell_{r-1}^{e_{r-1}}(x)\ \mid e\in [0,2^r-1]\}$ is a basis of $\F_{2^r}[x]_{<2^r}$. Under the basis $\cB$, write $f(x)=f_0(x)+\ell_{r-1}(x)\cdot f_1(x)$, where $f_0, f_1\in \F_{2^r}[x]$ are polynomials of degree less than $2^{r-1}$. Since $x^{2^r}-x=\ell_{r-1}(x)\cdot \ell_{r-1}(x+v_{r})$, by the Chinese Reminder theorem,
\[f(x)\bmod (x^{2^r}-x)=\Big(f_0(x)\bmod \ell_{r-1}(x), \big(f_0(x)+\ell_{r-1}(v_r)f_1(x)\big)\bmod \big(\ell_{r-1}(x)+\ell_{r-1}(v_r)\big)\Big).\]
Thus, the computation of $2^r$ residues $\{f(x)\bmod (x-\alpha)\}_{\alpha\in W_r}$ of $f(x)$ can be reduced to $2^{r-1}$ residues $\{f_0\bmod (x-\alpha)\}_{\alpha\in W_{r-1}}$ of $f_0(x)$ and $2^{r-1}$ residues $\{\big(f_0+\ell_{r-1}(v_r)\cdot f_1\big)\bmod (x-\alpha)\}_{\alpha\in W_{r-1}+v_r}$ of $f_0(x)+\ell_{r-1}(v_r)f_1(x)$. Since $f_0, f_1$ both have degree less than $2^{r-1}$, the computation of $f_0+\ell_{r-1}(v_r)f_1$ needs $O(n)$ additions and $O(n)$ multiplications in $\F_{2^r}$, respectively. Thus the running time $A(n)$ and $M(n)$ also satisfy equation ~\eqref{eq:recurs} leading to $A(n)=M(n)=O(n\log n)$.

So far, we have briefly described the ideas of multiplicative FFT and additive FFT. Although there are similarities between multiplicative and additive FFTs, the methods used are different. This makes great inconvenience for the generalization of these FFT algorithms and their applications.

\subsection{Our result and comparisons}
In this work, we provide a unified framework for FFT in $\F_q$ that includes both multiplicative and additive FFT. More importantly, our framework works well when either $q-1$, $q$, or $q+1$ is smooth. In other words, we give a new FFT algorithm that includes both the multiplicative DFT and the additive DFT as two special cases. Besides, we discuss a new case that has never been considered: if $n$ is a divisor of $q+1$ that is $B$-smooth for a real number $B>0$; namely, $n$ can be factored into the product of $\prod_{i=1}^rp_i$ satisfying $p_i\leq B$ for every $1\le i\le r$, then our FFT algorithm runs in $O(B\cdot n\cdot \log n)$ operations in $\F_q$.

\begin{theorem}\label{thm:1.1} Let $B>0$ be a real. Let $\F_q[x]_{<n}$ be the space of polynomials over $\F_q$ of degree less than $n$. If $n$ is a $B$-smooth divisor of either $q-1$, $q$ or $q+1$, then one can run FFT for $f(x)\in\F_q[x]_{<n}$ at a well-chosen multipoint set of size $n$ in $O(B\cdot n\cdot\log n)$ field operations of $\F_q$.
\end{theorem}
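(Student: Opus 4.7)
The plan is to realize the DFT as a special case of a generalized ``function-field DFT'' on a Riemann--Roch space $\cL(D)$ of the rational function field $F = \F_q(x)$, where the speedup comes from a tower of Galois subextensions cut out by a chain of subgroups of $\Aut(F/\F_q) = \PGL_2(\F_q)$. The first step is to match each case of the theorem to an appropriate cyclic subgroup $G \leq \PGL_2(\F_q)$ of order $n$: if $n \mid q-1$, take $G = \langle x \mapsto \zeta x\rangle$ for $\zeta$ of order $n$; if $n \mid q$ (so $q = p^r$ and $n = p^s$), take $G$ generated by translations $x \mapsto x+a$ inside a subgroup of $(\F_q,+)$ of order $n$; if $n \mid q+1$, take a cyclic subgroup of the non-split (anisotropic) torus, realized as the image in $\PGL_2(\F_q)$ of a generator of order $n$ inside $\F_{q^2}^*/\F_q^*$. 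For each $G$ I would choose a $G$-stable divisor $D$ with $\dim_{\F_q}\cL(D) = n$ (containing $\F_q[x]_{<n}$ after a linear change of coordinate when needed), and take the evaluation set $\calP$ to be a single $G$-orbit of $\F_q$-rational places disjoint from $\Supp(D)$; Riemann--Roch then guarantees that evaluation at $\calP$ is an $\F_q$-linear isomorphism $\cL(D) \xrightarrow{\sim} \F_q^n$.

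Next I would exploit a $B$-smooth factorization $n = n_1 n_2 \cdots n_r$ (with each $n_i \leq B$) to build a chain $G = G_0 \supset G_1 \supset \cdots \supset G_r = \{1\}$ with $[G_{i-1}:G_i] = n_i$. By L\"uroth's theorem each fixed field $F^{G_i}$ is rational, say $F^{G_i} = \F_q(x_i)$, and the inclusion $F^{G_i} \hookrightarrow F^{G_{i-1}}$ is described by an explicit rational map $\phi_i \colon \PP^1 \to \PP^1$ of degree $n_i$ (a power map $x \mapsto x^{n_i}$ in the multiplicative case, an additive/Artin--Schreier polynomial in the additive case, and a more subtle quotient map in the new $q+1$ case). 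Any $f \in \cL(D)$ admits an expansion $f = \sum_{j=0}^{n_i-1} g_j \cdot h_j$, where $\{g_j\}$ is a fixed basis of $F^{G_{i-1}}$ over $F^{G_i}$ and the $h_j \in F^{G_i}$ lie in Riemann--Roch spaces of dimension $n/n_i$ over the smaller function field. Evaluating $f$ at a $G_{i-1}$-orbit of size $n$ therefore reduces to evaluating the $n_i$ pushed-down functions $h_j$ at a single $G_i$-orbit of size $n/n_i$ in the quotient curve, plus $O(n_i \cdot n)$ field operations to recombine using precomputed values of $\{g_j\}$. Iterating gives the recursion $T(n) = n_i \cdot T(n/n_i) + O(n_i \cdot n)$, which telescopes to $T(n) = O(Bn\log n)$ since $r \leq \log_2 n$.

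The main obstacle I anticipate is the novel $q+1$ case. The anisotropic torus has no $\F_q$-rational fixed points on $\PP^1$---its two fixed points form a single place of degree $2$ over $\F_q$---and its natural character decomposition of $\cL(D)$ lives over $\F_{q^2}$ rather than $\F_q$. Consequently, both the quotient coordinate $x_i$ and the expansion basis $\{g_j\}$ at each level must be chosen carefully to remain defined over $\F_q$; I would model this on how van der Hoeven and Larrieu exploit the Frobenius automorphism to keep intermediate computations over the base field, using the conjugation action of $\Gal(\F_{q^2}/\F_q)$ on the torus to pair up characters into $\F_q$-rational two-dimensional blocks. A secondary obstacle is that the theorem takes inputs in the monomial basis $\{1,x,\dots,x^{n-1}\}$, whereas the function-field DFT naturally accepts inputs in a $G$-adapted basis of $\cL(D)$; I would need to verify that the change of basis between these two is either itself realizable in $O(n\log n)$ operations, or absorbable into the same recursive structure at each level so that the overall bound is preserved.
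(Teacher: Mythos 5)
Your overall architecture is the paper's: the three cases are matched to the split torus, the translation group, and the non-split torus inside $\PGL_2(q)\cong\Aut(\F_q(x)/\F_q)$, a $B$-smooth chain of subgroups gives a tower of rational fixed fields (L\"uroth), each degree-$p_i$ step splits $f$ into $p_i$ functions of the next field, and the recursion $T(n)=p_i\,T(n/p_i)+O(p_i n)$ telescopes to $O(Bn\log n)$. For $n\mid q-1$ and $n\mid q$ this is exactly Section 3 of the paper and works. Your ``secondary obstacle'' about monomial input is resolved in the paper not by a fast change of basis but by convention: Theorem 1.1 is proved for $f$ given in the group-adapted basis (standard monomials in the multiplicative case, the linearized-polynomial basis in the additive case), and the paper concedes that converting from the standard basis costs $O(n\log^2 n)$ in the additive case and $O(M(n)\log n)$ in the $q+1$ case.

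The genuine gap is in the $q+1$ case, which is the new content of the theorem, and your framework premise there actually fails. You ask for a $G$-stable divisor $D$ with $\dim_{\F_q}\cL(D)=n$ such that $\cL(D)$ contains $\F_q[x]_{<n}$ (possibly after a coordinate change), with evaluation on a $G$-orbit disjoint from $\Supp(D)$. But if $\cL(D)\supseteq\cL((n-1)P'_\infty)$ with both spaces of dimension $n$, then they are equal and one checks $D=(n-1)P'_\infty$ for that rational place $P'_\infty$; since $\sigma$ acts as a $(q+1)$-cycle on the rational places, the order-$n$ subgroup $G$ of the non-split torus fixes no rational place, so no such $G$-stable $D$ exists. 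The paper's key idea is precisely to drop containment: it works inside $\cL(nQ)$ (dimension $2n+1$), where $Q$ is the totally ramified quadratic place, builds an explicit basis from $x_i=\sum_{\tau\in G_i}\tau(x)$ and $y_i=\prod_{\tau\in G_i}\tau(1/Q(x))$ (Lemmas 4.2--4.3), isolates an $n$-element subset $\tcB$ whose span is $\frac{u_{r,0}(x)}{Q^n(x)}\F_q[x]_{<n}$ (Lemma 4.4), runs the recursion on $\tilde f=\frac{u_{r,0}}{c_{r,0}Q^n}f$, and recovers $f(\alpha)$ by dividing by $n$ precomputed scalars $u_{r,0}(\alpha)/Q^n(\alpha)$; moreover, when $n=q+1$ the evaluation orbit is all of $\PP^{(1)}_F$ and unavoidably contains poles of the basis functions, which the paper evaluates separately via closed-form expressions with $O(\log n)$ extra precomputation (this also breaks your ``orbit disjoint from the support'' requirement). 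Your alternative sketch --- Frobenius descent pairing $\F_{q^2}$-characters into $\F_q$-rational blocks \`a la van der Hoeven--Larrieu --- is not developed enough to substitute: one still needs, at every level of the tower, an explicit $\F_q$-basis in which the degree-$p_i$ split is an $O(n)$-cost rewriting, and producing that basis (together with the $u_{r,0}/Q^n$ twist and the pole bookkeeping) is exactly the work your proposal leaves undone.
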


\begin{remark}
\begin{itemize}
\item[{\rm (1)}] In our framework, we need to represent a polynomial $f\in\F_q[x]_{<n}$ under a certain basis $\cB$ of $\F_q[x]_{<n}$, then implement FFT of $f$ at a well-chosen multipoint set (the roots in $\F_q$ of a polynomial of degree $n$). As we will see in Section~\ref{sec:affine}, in the case of $n\mid q-1$, then a basis $\cB$ of  $\F_q[x]_{<n}$  constructed from our framework is actually the standard monomial basis, i.e., $\cB=\{x^i\}_{i=0}^{n-1}$; in the case of $n\mid q$, then $\cB$ is constructed by the products of a series of linearized polynomials, which is the same as in \cite{lin2014novel}. 

Moreover, if $n\mid q$ is smooth and the polynomial $f$ is given under the standard basis $\{x^i\}_{i=0}^{n-1}$, then our framework can implement DFT of $f$ in $O(n\log^2n)$ operations in $\F_q$ (see Corollary~\ref{coro: MGG} in Section~\ref{sec:affine}), which is a generalization of Mateer-Gao's additive FFT over $\F_{2^r}$ to arbitrary characteristics.

Although the results for multiplicative and additive FFTs are known, we present them under a unified framework for FFT by the theory of algebraic function fields.

\item[{\rm (2)}] If both $q$ and $q-1$ are not smooth, then neither additive FFT nor multiplicative FFT in $\F_q$ is applicable. In this case, if $q+1$ has a smooth divisor $n$, then our framework ensures that there is an efficient FFT in $\F_q$. Therefore, our work loosens the restriction of FFT on the finite field to some extent. 

For instance, let $p$ be a Sophie Germain prime, i.e., $2p+1$ is also a prime. Then both $q:=2p+1$ and $q-1=2p$ are not smooth. Thus, if $p+1$ is smooth, then $q+1$ is smooth. The other instance is that $p$ is a Sophie Germain prime and $q=2^{2p+1}-1$ is a Mersenne prime, then $q$ is not smooth. Furthermore, $q-1=2^{2p+1}-2=2\left(2^{p}-1\right)\left(2^{p}+1\right)$. Thus, with high probability $q-1$ is not smooth as $ 2^{p}-1$ is a Mersenne number. On the other hand, we have $q+1=2^{2p+1}$ is smooth.

\item[{\rm (3)}] For the new case: $q+1$ is smooth, we give a practical FFT algorithm over $\F_q$ with multipoint set $\calP\subset\F_q$ in Section~\ref{sec:q+1}. Although one could also consider multiplicative FFT in the quadratic extension field $\F_{q^2}$ in this case, the corresponding multipoint set $\calP_1$ is a subgroup of $\F_{q^2}^*$. Note that $\calP_1\cap\F_q=\{1,-1\}$. Thus, the multiplicative FFT over $\F_{q^2}$ does not imply the FFT over $\F_q$. 
\end{itemize}

\end{remark}

We compare our result with known results in Subsection~\ref{subsec:pw}, see Table~1.

\begin{table}[h!]~\label{tab:comparison}
  \begin{center}
    \caption{Comparisons of our result with known FFTs over $\F_q$ 
    with complexity $O(n\log n)$}
    \begin{tabular}{|c|c|c|c|} 
    \hline\hline
                            & \textbf{evaluation set $\calP$} & \textbf{Representation of $\F_q[x]_{<n}$} & \textbf{constraints}\tablefootnote{in all cases, $n$ need to be $O(1)$-smooth.} \\ \hline
     Multiplicative  &  $\calP\leq \F_q^*$ & $f=\sum_{i=0}^{n-1}a_ix^i$ is rep.  & $n\mid q-1$  \\
    FFT~\cite{cooley1965}   & $n$-th roots of unity &  under the standard basis & \\ \hline
      Additive  & $\calP\leq \F_q$ & $f$ is rep. under a &  $n\mid q$\\
    FFT\cite{lin2014novel}     & subspace of $\F_q$  & non-standard basis $\cB$ & \\ \hline
      Elliptic curve & $\calP\subset \F_q$ &  $f$ is rep. as MPE&  $n=O(\sqrt{q})$\\
   FFT \cite{ben23ec}& $x$-coord. of an $n$-coset\tablefootnote{a coset $C=E(\F_q)/G$, where $G\leq E(\F_q)$ is a subgroup of order $n$.}  & $f=(f(s_1),\dots,f(s_n))$& \\ \hline
   \textbf{Our} & $\calP\subset \F_q$ & $f$ is rep. under a  &  $n\mid q-1,\ q$,\\ 
    \textbf{results} & roots of an $n$-poly.\tablefootnote{a polynomial determined by the automorphism subgroup of $\Aut(\F_q(x)/\F_q)$ of order $n$.}   &  (non-)standard basis $\cB$ &  or $q+1$\\ \hline\hline
    \end{tabular}
  \end{center}
\end{table}


\subsection{Our techniques}
Let $\Aut(\F_q(x)/\F_q)$ be the automorphism group consisting of all automorphisms of $\F_q(x)$ keeping elements of $\F_q$ invariant.  Let  $G$ be a subgroup of $\Aut(\F_q(x)/\F_q)$ and let $\F_q(x)^G$ be the subfield of $\F_q(x)$ fixed by $G$. If $G$ satisfies the following conditions (see Section~\ref{sec: AFF} for precise definitions):
\begin{enumerate}
\item[\rm (i)] $G$ is an abelian group with smooth order $n=|G|$, i.e., $n=\prod_{i=1}^r p_i$, where all prime divisors $p_i$ are small constant.
\item[\rm (ii)] There is a place $\cQ$ of $\F_q(x)^G$ that is totally ramified in $\F_q(x)/\F_q(x)^G$.
\item[\rm (iii)] $\F_q(x)^G$ has a rational place $\fP$ that splits completely in $\F_q(x)$. Let $\calP$ denote the set of places of $\F_q(x)$ lying over $\fP$.
\end{enumerate}
then we can construct an FFT for any polynomial in $\F_q[x]_{<n}$ at $\calP$ with complexity $O(n\log n)$ (since every rational place $P\in\calP$ is the zero of $x-\alpha$ for $\alpha\in\F_q$ and $f(P)=f(\alpha)$ for any $f\in\F_q[x]$, we identify $\calP$ as a subset of $\F_q$). Our FFT is based on the Galois theory, in order to differentiate it from the classical FFT, we name it G-FFT.

By condition (i) and the structure of finite abelian groups, $G$ has an ascending chain of subgroups
\[\{1\}=G_0\subsetneq G_1\subsetneq \cdots G_{r-1}\subsetneq G_r=G,\]
where $|G_{i}|/|G_{i-1}|=p_{i}$ for $i=1,\cdots, r$. By the Galois theory \cite{hir08}, each group $G_i$ determines a fixed subfield $\F_q(x)^{G_i}$ of $\F_q(x)$. As any subfield of $\F_q(x)$ is again a rational function field  \cite[Proposition 3.5.9]{sti09}, we assume $\F_q(x)^{G_i}=\F_q(x_i)$ for some $x_i\in \F_q(x)$.
Thus the subgroups chain leads to a tower of fields
\[
\F_q(x)=\F_q(x_0)\supsetneq \F_q(x_1) \supsetneq \cdots \supsetneq \F_q(x_r)=\F_q(x)^G.\]
Furthermore, if the pole place $P_{\infty}$ of $x$ totally ramifies in the extension $\F_q(x)/\F_q(x)^G$, then we can choose the $x_i$ such that each $x_{i+1}$ is a polynomial of $x_i$ of degree $p_{i+1}$ for $0\leq i\leq r-1$. As a result, we have $\nu_{P_{\infty}}(x_i)=-|G_i|$ for each $0\leq i\leq r$. Based on these facts, we can construct an $\F_q$-basis of the polynomial space $\F_q[x]_{<n}$ as following
\begin{equation}\label{eq:basis}
\cB=\{x_0^{e_0}x_1^{e_1}\cdots x_{r-1}^{e_{r-1}} \mid \be=(e_0,e_1,\ldots,e_{r-1})\in\Z_{p_1}\times\Z_{p_2}\times\cdots\times\Z_{p_r}\}.
\end{equation}
Thus, any $f(x)\in\F_q[x]_{<n}$ can be represented as
\begin{equation}\label{eq:de}
\begin{split}
f(x)&=\sum_{e_0=0}^{p_1-1}\sum_{e_1=0}^{p_2-1}\ldots\sum_{e_{r-1}=0}^{p_r-1} a_{\be}\cdot x_0^{e_0}x_1^{e_1}\cdots x_{r-1}^{e_{r-1}}\\
&=f_0(x_1)+x\cdot f_1(x_1)+\dots +x^{p_1-1}\cdot f_{p_1-1}(x_1),\end{split}
\end{equation}
where the second ``=" is the result of combining all terms with the same $x^{e_0}$ for $e_0=0,1,\ldots, p_1-1$. Then $f_0(x_1),\ldots,f_{p_1-1}(x_1)\in\F_q[x_1]$ are polynomials of degree less than $n/p_1$ with respect to variable $x_1$. Thus, the evaluations of $f(x)$ at $\calP$ can be reduced to the evaluations of $f_0, f_1,\dots, f_{p_1-1}$ at $\calP$ and then a combination of these $p_1$ sets of values by using $p_1O(n)$ additions/multiplications in $\F_q$. However, for every $f_j\in \F_q(x_1)$, its evaluation at every $P\in\calP$ satisfying $f_j(P)=f\big(P\cap \F_q(x_1)\big)$. According to (iii), let $\calP_1=\calP\cap \F_q(x_1)$ be the set of places of $\F_q(x_1)$ lying over $\fP$. Then $|\calP_1|=n/p_1$. Denote the complexity of FFT with respect to the number of additions and multiplications by $A(n)$ and $M(n)$, respectively. Then $A(n)$ and $M(n)$ both satisfy the recursive formula
\begin{equation}\label{eq:recurs2}
A(n)=p_1\cdot A(n/p_1)+p_1\cdot O(n), \ M(n)=p_1\cdot M(n/p_1)+p_1\cdot O(n).
\end{equation}
Similarly, the DFTs of $f_0, \dots, f_{p_1-1}$ at $\calP_1$ can be reduced to DFTs of polynomials of lower-degree with respect to the variable $x_2$ at $\calP_2=\calP_0\cap \F_q(x_2)$. Then, after $r$ steps of recursions, we obtain $A(n)=M(n)=O(B\cdot n\log n)$, where $B=\max_{1\leq i\leq r}\{ p_i\}$.

The above idea works well for a subgroup $G$ of the affine linear group $\AGL_2(q)$ as the pole place $P_{\infty}$ of $x$ totally ramifies in $\F_q(x)/\F_q(x)^G$. However, if we choose a subgroup that is not contained in $\AGL_2(q)$, then there are no rational places that totally ramify in $\F_q(x)/\F_q(x)^G$. In this paper, we consider a cyclic subgroup $G$ of $\Aut(\F_q(x)/\F_q)$ of order $q+1$. There is a place $Q$ of degree $2$ that totally ramifies in the extension $\F_q(x)/\F_q(x)^G$, and the rational place $P_{\infty}$ is splitting completely. Assume $n=q+1=2^r$. By some elementary analysis, we first construct a basis $\cB$ of the Riemann-Roch space $\cL\left(nQ\right)=\frac{1}{Q^{n}}\F_q[x]_{\leq 2n}$ (here we identify the place $Q$ with a quadratic irreducible polynomial $Q$). Then we show that a subset $\tcB$ of $\cB$ spans the $\F_q$ vector space $\frac{u_{r,0}(x)}{Q^n}\F_q[x]_{<n}$, where $u_{r,0}(x)$ is a polynomial of degree $n$ with no roots in $\F_q$ (see Lemma~\ref{lem:subbasisq+1}). In other words, we get a basis of $\F_q[x]_{<n}$, namely, $\frac{Q^n}{u_{r,0}(x)}\tcB$. 

For a polynomial $f\in\F_q[x]$ which is represented under the basis $\frac{Q^n}{u_{r,0}(x)}\tcB$, let $\tilde{f}=\frac{u_{r,0}(x)}{Q^n}f\in\cL(nQ)$. By the same recursive reduction, we first show that the multipoint evaluation $\{\tilde{f}(\alpha)\mid \alpha\in \F_q\}$ of $\tilde{f}$ can be computed via G-FFT in time $O(n\log n)$. Next, by precomputing $\frac{Q^n(\alpha)}{u_{r,0}(\alpha)}$ for all $\alpha\in \F_q$, we get $f(\alpha)=\frac{Q^n(\alpha)}{u_{r,0}(\alpha)}\tilde(\alpha)$ for every $\alpha\in \F_q$.

In the process, we need $n$ precomputations of $\{\frac{Q^n(\alpha)}{u_{r,0}(\alpha)}\mid \alpha\in \F_q\}$. Moreover, in the recursive reduction, we also need some precomputations to get around the pole places of elements in the basis $\tcB$, which will cost $O(\log n)$ storage space. Therefore, the total storage in the G-FFT algorithm is $O(n)$.



\subsection{Organization}
The paper is organized as follows.
 In Section 2, we present some preliminaries on algebraic function fields, in particular rational function fields. In Section 3,  we construct the G-FFT via affine linear subgroups $G$ of $\Aut(\F_q(x)/\F_q)$ and instantiate $G$ by the multiplicative group $\F_q^*$ and additive group $\F_q$ as two examples. Moreover, in the case $q$ is smooth, we show that G-FFT under the standard basis of $\F_q[x]_{<n}$ can be done in $O(n\log^2 n)$. In Section 4, we construct a new basis of $\F_q[x]_{<n}$ via a non-affine cyclic subgroup of $\Aut(\F_q(x)/\F_q)$ of order $n$. Then we show that the G-FFT of smooth length $n\mid q+1$ can be done in $O(n\log n)$ and give the G-FFT algorithm for the case $q+1=2^r$. Finally, Section 5 concludes our work.

\section{Preliminary}\label{sec: Preliminary}	
In this section, we will introduce some preliminaries on algebraic function fields, especially some known results about the rational function field $\F_q(x)$ and its automorphism group. We also introduce the definition of $B$-smooth groups which are similar to the $B$-smooth integers. 

\subsection{Algebraic extensions of function fields}\label{sec: AFF} We briefly introduce the theory of rational function fields in this subsection. The reader may refer to \cite{sti09}  for details.

For a prime power $q$, let $\F_q$ denote the finite field of $q$ elements. An algebraic function field over $\F_q$ in one variable is a field extension $F\supset \F_q$ such that $F$ is an algebraic extension of $\F_q(x)$ for some transcendental element $x$ over $\F_q$. In particular, the rational function field $\F_q(x)$ is an algebraic function field of one variable.  In the following, we say $F/\F_q$ is an algebraic function field with the assumption that $\F_q$ is the full constant field of $F$, i.e., every algebraic element of $F$ over $\F_q$ belongs to $\F_q$ as well.

Now let $F$ be the rational function field $\F_q(x)$ over $\F_q$. For every irreducible polynomial $P(x)\in \F_q[x]$, we define a discrete valuation $\nu_P$ which is a map from $\F_q[x]$ to $\Z\cup\{\infty\}$ given by $\nu_P(0)=\infty$ and $\nu_P(f)=a$, where $f$ is a nonzero polynomial and $a$ is the unique nonnegative integer satisfying $P^a|f$ and $P^{a+1}\nmid f$. This map can be extended to $\F_q(x)$ by defining $\nu_P(f/g)=\nu_P(f)-\nu_P(g)$ for any two polynomials $f, g\in\F_q[x]$ with $g\neq0$.
Apart from the above finite discrete valuation $\nu_P$, we have an infinite valuation $\nu_{\infty}$ defined by $\nu_{\infty}(f/g)=\deg(g)-\deg(f)$ for any two polynomials $f, g\in\F_q[x]$ with $g\neq0$. Note that we define $\deg(0)=\infty$. The set of places of $F$ is denoted by $\PP_F$.

For each discrete valuation $\nu_P$ ($P$ is either a polynomial or $\infty$), by abuse of notion we still denote by $P$ the set $\{y\in F:\; \nu_P(y)>0\}$. Then the set $P$ is called a place of $F$.
If $P=x-\Ga$, then we denote $P$ by $P_{\Ga}$. The degree of the place $P$ is defined to be the degree of the corresponding polynomial $P(x)$. If $P$ is the infinite place $\infty$, then the degree of $\infty$ is defined to be $1$. A place of degree $1$ is called rational. In fact, there are exactly $q+1$ rational places for the rational function field $F$ over $\F_q$. 

Let $\mathbb{P}_F$ be the set of all places of $F/{\F_q}$ and let $\mathbb{P}_F^{(1)}$ be the subset of $\mathbb{P}_F$ consisting of all rational places, i.e., places of degree one. 
Assume $F'/F$ is a finite algebraic extension of degree $n$ with the constant field  $\F_q$. A place $P'\in \mathbb{P}_{F'}$ is said to be lying over $P$, written as  $P'\mid P$, if $P\subseteq P'$. The ramification index of $P'$ over $P$ is denoted by $e(P'\mid P)$.
Let $P_1, \dots, P_m$ be all the places of $F'$ lying over $P$. If $e(P_i\mid P)=[F':F]$ for a place $P_i\mid P$, then $m=1$ and $P$ is said to be totally ramified in $F'$; if $e(P_i\mid P)=1$ for every $P_i$, $i=1,\dots, m$, $P$ is said to be unramified in $F'$; furthermore, if $e(P_i\mid P)=1$ and $m=[F': F]$, then $P$ is said to split completely in $F'$. An important fact that is used in the design of our G-FFT is that for every place $P$ of $F$ and a function $f\in F$ with $\nu_P(f)\ge 0$, the value $f(P')$ is constant for all places $P'$ of $F'$ lying over $P$.

\subsection{The Riemann-Roch Space}
Let $F=\F_q(x)$ be a rational function field.
For a place $Q$ of $F$ and an integer $m$, we define the Riemann-Roch space associated with $Q$ as
\[\cL(mQ):=\{f\in F^*:\; \nu_P(f)\ge 0\;\mbox{if $P\neq Q$};\;  \nu_Q(f)  \geq -m \}\cup \{0\}.\]
Then $\cL(mQ)$ is a finite-dimensional vector space over $\F_q$. If $Q$ is the pole of $x$, then $\cL(mQ)$ is the polynomial space $\F_q[x]_{\le m}$.
The dimension $\dim_{\F_q}\cL(mQ)$ is usually denoted by $\ell(mQ)$.  Then we have
\[\ell(mQ)=m\deg(Q)+1\]
for $m\ge 0$; and $\ell(mQ)=0$, otherwise.

\subsection{Subfields of the rational function field}\label{sec:subfofr}
Let $F=\F_q(x)$ be the rational function field for some transcendental element $x\in F$ over the finite field $\F_q$.
We denote by  $\Aut(F/\F_q)$ the automorphism group of $F$ over $\F_q$, i.e.,
\begin{equation}
\Aut(F/\F_q)=\{\sigma: F\rightarrow F :\; \sigma  \mbox{ is an } \F_q\mbox{-automorphism of } F\}.
\end{equation}
It is clear that an automorphism $\sigma\in \Aut(F/\F_q)$ is uniquely determined by $\sigma(x)$.
It is well known that every automorphism $\sigma\in \Aut(F/\F_q)$ is given by
\begin{equation}\label{abcd}
\sigma(x)=\frac{ax+b}{cx+d}
\end{equation}
for some constant $a,b,c,d\in\F_q$ with $ad-bc\neq0$ (see \cite{hir08}).
Denote by  $\GL_2(q)$ the general linear group of $2\times 2$ invertible matrices over $\F_q$.
Thus, every matrix $A=\left(\begin{array}{cc}a&b\\ c&d\end{array}\right)\in \GL_2(q)$ induces an automorphism of $F$ given by \eqref{abcd}.
Two matrices of $\GL_2(q)$ induce the same automorphism of $F$ if and only if they belong to the same coset of $Z(\GL_2(q))$, where $Z(\GL_2(q))$ stands for the center $\{aI_2:\; a\in\F_q^*\}$ of $\GL_2(q)$.  This implies that $\Aut(F/\F_q)$ is isomorphic to the projective linear group $\PGL_2(q):=\GL_2(q)/Z(\GL_2(q))$. Thus, we can idenitify $\Aut(F/\F_q)$ with $\PGL_2(q)$.

Consider the subgroup of $\PGL_2(q)$
\begin{equation}\label{eq:x3}
\AGL_2(q):=\left\{\left(\begin{array}{cc}a&b\\ 0&1\end{array}\right):\; a\in\F_q^*,b\in\F_q\right\}.
\end{equation}
$\AGL_2(q)$ is called the affine linear group. Every
element $A=\left(\begin{array}{cc}a&b\\ 0&1\end{array}\right)\in \AGL_2(q)$ defines an affine automorphism
$
\sigma(x)=ax+b.$


In addition to the subgroup $\AGL_2(q)$, $\PGL_2(q)$ has a cyclic subgroup with order $q+1$. Let $P(x)=x^2+ax+b$ be a primitive polynomial over $\F_q[x]$. Consider the element $\sigma=\left(\begin{array}{cc}0&1\\ -b&-a\end{array}\right)\in \PGL_2(q)$. Then the order of $\sigma$ is $q+1$ (one can refer to \cite[Lemma V.1]{JMX19}). Thus $\langle \sigma \rangle$ induces a $q+1$-cyclic subgroup of $\Aut(\F_q(x)/\F_q)$. 

For a subgroup $G$ of $\Aut(F/\F_q)$, let $F^{G}$ be the fixed subfield by $G$, i.e.,
\[F^{G}=\{u\in F :\; \sigma(u)=u,\ \text{for\ all}\ \sigma\in G\}\]
By the Galois theory \cite{hir08},
we know that $F/{F^{G}}$ is a Galois field extension with $\Gal(F/{F^{G}})=G$. Moreover, L\"uroth's Theorem~\cite[Proposition 3.5.9]{sti09} asserts that  if $\F_q\subsetneq E\subseteq F$, then $E=\F_q(z)$, where $z\in \F_q(x)$ is a rational function of $x$ and $z$ is transcendental over $\F_q$.

\begin{lemma}[\cite{hir08}]\label{lem:simple}
Assume $\F_q(z)$ is a subfield of $\F_q(x)$ and $m=[\F_q(x): \F_q(z)]$. Then $z=\frac{f(x)}{g(x)}$ for some polynomial $f(x), g(x)\in \F_q[x]$ with $g(x)\neq 0$ satisfying $\gcd(f(x), g(x))=1$ and $m=\max\{\deg f(x), \deg g(x)\}$.
In particular, $\F_q(z)=\F_q(x)$ if and only if there exist $a, b, c, d\in \F_q$ such that $z=\frac{ax+b}{cx+d}$ and $ad-bc\neq0$.
\end{lemma}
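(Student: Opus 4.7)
The plan is to reduce the degree formula to exhibiting an explicit minimal polynomial for $x$ over $\F_q(z)$. Since $z \in \F_q(x)$, one may write $z = f(x)/g(x)$ for some coprime $f,g \in \F_q[x]$ with $g \neq 0$. The finiteness of $m = [\F_q(x):\F_q(z)]$ forces $z$ to be transcendental over $\F_q$, so at least one of $f,g$ is non-constant and $m' := \max\{\deg f, \deg g\} \geq 1$. The natural candidate for a minimal polynomial is
\[
T(Y) \;:=\; f(Y) - z\, g(Y) \;\in\; \F_q(z)[Y],
\]
which satisfies $T(x) = 0$ by construction and has $Y$-degree $m'$, yielding the easy bound $m \leq m'$.

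The heart of the proof is to show $T$ is irreducible in $\F_q(z)[Y]$. The trick is to instead view $T$ inside the UFD $\F_q[z,Y]$, where it is \emph{linear in $z$}: $T = f(Y) + (-g(Y))\,z$. In any factorization $T = A \cdot B$ there, the $z$-degree must concentrate in one factor, so one of them, say $A$, lies in $\F_q[Y]$; then $A$ must divide both $z$-coefficients $f(Y)$ and $g(Y)$, and coprimality of $f$ and $g$ forces $A \in \F_q^*$. Hence $T$ is irreducible in $\F_q[z,Y]$. A parallel argument shows that $T$ is primitive as a polynomial in $Y$ over $\F_q[z]$ (any non-unit $\F_q[z]$-content would, by the same linearity observation, yield a nontrivial common factor of $f$ and $g$), so Gauss's lemma promotes irreducibility to $\F_q(z)[Y]$. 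Consequently $T$ is, up to a unit in $\F_q(z)$, the minimal polynomial of $x$ over $\F_q(z)$, and $[\F_q(x):\F_q(z)] = \deg_Y T = m'$.

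For the ``in particular'' clause, $\F_q(z) = \F_q(x)$ is equivalent to $m = 1$, hence to $\max\{\deg f, \deg g\} = 1$, so $z = (ax+b)/(cx+d)$ with $a,b,c,d \in \F_q$. Two polynomials in $\F_q[x]$ of degree at most one are coprime if and only if they are non-proportional, which is precisely the condition $ad-bc \neq 0$. Conversely, given any $a,b,c,d \in \F_q$ with $ad-bc\neq 0$, the numerator and denominator are automatically coprime, so the degree formula yields $[\F_q(x):\F_q(z)] = 1$ and hence $\F_q(z) = \F_q(x)$. I do not expect a serious obstacle in this argument; the only subtle step is the invocation of Gauss's lemma, which in turn rests on the primitivity of $T$ that the coprimality of $f$ and $g$ supplies.
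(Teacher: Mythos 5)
Your proof is correct. The paper itself does not prove this lemma---it is quoted from the reference \cite{hir08}---and your argument is exactly the standard one found there: take $T(Y)=f(Y)-z\,g(Y)$, use linearity in $z$ together with $\gcd(f,g)=1$ to get irreducibility in $\F_q[z,Y]$, and promote it to $\F_q(z)[Y]$ via Gauss's lemma, so that $T$ is (up to a unit) the minimal polynomial of $x$ over $\F_q(z)$. The only detail worth spelling out is that $\deg_Y T=\max\{\deg f,\deg g\}$ because the leading coefficient $f_{m'}-z\,g_{m'}$ cannot vanish when $z\notin\F_q$, which your observation that $z$ is transcendental over $\F_q$ already supplies; the handling of the ``in particular'' clause via $ad-bc\neq 0$ being equivalent to non-proportionality of the degree-$\le 1$ numerator and denominator is likewise fine.
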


\subsection{B-smooth groups}
Let $B>0$ be a constant integer. Recall that a positive integer $m$ is called $B$-smooth if every prime factor of $m$ is upper bounded by $B$, i.e., $m=\prod_{i=1}^r p_i$ with $p_i\leq B$ for all $1\le i\le r$. Note that we do not require these $r$ prime divisors $p_i$ to be distinct. We give the definition of $B$-smooth finite groups as follows.
\begin{definition}
A finite group $G$ is called $B$-smooth if there exists a chain of subgroups of $G$:
\begin{equation}\label{eq:grpchain}
\{1\}=G_0\subsetneq G_{1}\subsetneq \dots \subsetneq G_{r-1}\subsetneq G_r=G
\end{equation}
such that $\frac{|G_i|}{|G_{i-1}|}\leq B$ for $i=1, \dots, r$.
\end{definition}

\begin{lemma}\label{lem:smooth}
Let $G$ be a finite abelian group. If $|G|$ is $B$-smooth, then $G$ is $B$-smooth.
\end{lemma}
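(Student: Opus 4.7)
The plan is a straightforward induction on $|G|$, with Cauchy's theorem as the only non-trivial ingredient. The base case $|G|=1$ is vacuous (take $r=0$). For the induction step, since $|G|$ is $B$-smooth we may write $|G|=p_1 p_2\cdots p_r$ with each prime factor $p_i\le B$. Fix any prime $p=p_1$ dividing $|G|$. As $G$ is abelian (so certainly satisfies the hypothesis of Cauchy's theorem), $G$ contains an element of order $p$, and hence a subgroup $H\le G$ of order $p$.

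Now consider the quotient $G/H$. It is an abelian group of order $|G|/p$, and $|G|/p=p_2\cdots p_r$ is again $B$-smooth. By the induction hypothesis, $G/H$ admits a chain of subgroups
\[
\{\bar 1\}=\overline{G}_0\subsetneq \overline{G}_1\subsetneq\cdots\subsetneq \overline{G}_{r-1}=G/H
\]
with $|\overline{G}_i|/|\overline{G}_{i-1}|\le B$ for each $i$. By the lattice correspondence theorem for the quotient map $\pi\colon G\to G/H$, pulling back yields a chain of subgroups of $G$
\[
H=\pi^{-1}(\overline{G}_0)\subsetneq \pi^{-1}(\overline{G}_1)\subsetneq\cdots\subsetneq \pi^{-1}(\overline{G}_{r-1})=G,
\]
and the successive indices are preserved: $[\pi^{-1}(\overline{G}_i):\pi^{-1}(\overline{G}_{i-1})]=[\overline{G}_i:\overline{G}_{i-1}]\le B$. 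Setting $G_0=\{1\}$, $G_1=H$, and $G_i=\pi^{-1}(\overline{G}_{i-1})$ for $i\ge 2$, we obtain the desired chain in $G$, since $|G_1|/|G_0|=p\le B$.

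There is essentially no obstacle: the only slightly non-routine input is Cauchy's theorem (which in the abelian case has a particularly short proof, so the argument is entirely self-contained). Note that although the statement only requires $|G_i|/|G_{i-1}|\le B$, the construction above actually produces a chain all of whose successive quotients are of prime order, which is a stronger property we will in fact exploit in the applications later in the paper. One could alternatively invoke the structure theorem $G\cong \mathbb{Z}/p_1^{a_1}\oplus\cdots\oplus\mathbb{Z}/p_k^{a_k}$ and refine each cyclic factor by its unique subgroup chain of prime-index steps, but the inductive argument above is shorter and avoids appealing to classification.
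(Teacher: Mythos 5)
Your proof is correct, but it takes a different route from the paper. The paper's argument goes top-down via the structure theorem for finite abelian groups: writing $|G|=\prod_{i=1}^r p_i$ with $p_i\le B$, it invokes the structure theorem to produce a subgroup $G_{r-1}$ of order $\prod_{i=1}^{r-1}p_i$ directly, then repeats inside $G_{r-1}$, peeling off one prime at a time until reaching $\{1\}$. You instead induct bottom-up: Cauchy's theorem gives a subgroup $H$ of prime order $p\le B$, the induction hypothesis gives a chain in the abelian quotient $G/H$, and the correspondence theorem pulls it back to $G$ with indices preserved. The trade-off is essentially the one you note yourself: your argument needs only Cauchy's theorem (elementary for abelian groups) and the lattice correspondence, so it is self-contained and avoids the classification of finite abelian groups, whereas the paper's proof is shorter on the page because it delegates the existence of a subgroup of any prescribed order dividing $|G|$ to the structure theorem. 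Both constructions yield the stronger conclusion that every successive index is a prime $p_i\le B$, which is the form actually used later (the towers in Sections 3 and 4 are built with $[G_i:G_{i-1}]=p_i$), so your remark about prime-order steps is apt; only the cosmetic point that the chain length in $G/H$ is $r-1$ is something you implicitly get from the prime-step construction rather than from the bare induction hypothesis, and it is immaterial to the statement.
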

\begin{proof}
Assume $|G|=\prod_{i=1}^r p_i$, where $p_1, \dots, p_r$ are divisors not larger than $B$.  Since $G$ is abelian, by the structure theorem for finite abelian groups \cite{KS04}, there exists a subgroup $G_{r-1}$ of order $\prod_{i=1}^{r-1} p_i$. As $G_{r-1}$ is also abelian, there is a subgroup $G_{r-2}$ of $G_{r-1}$ of order $\prod_{i=1}^{r-2} p_i$. Continuing in this fashion until $ G_0=\{1\}$, we  get a series of finite subgroups of $G$ that satisfy the condition \eqref{eq:grpchain} and $p_{j}=\frac{|G_{j}|}{|G_{j-1}|}\leq B$.
\end{proof}

\section{Fast Fourier transform via  affine subgroups of $\Aut(\F_q(x)/\F_q)$}\label{sec:affine}
Assume the affine linear group $\AGL_2(q)$ has a smooth subgroup $G$ with order $|G|=n$. We will first construct a polynomial basis $\cB$ of $\F_q[x]_{<n}$ in terms of $G$. Under this basis $\cB$, we will then show that a Galois-group-based FFT (G-FFT for short) over $\F_q$ of length $n$ can be computed in quasi-linear time of $n$. We then instantiate our G-FFT by the multiplicative group $\F_q^*$ and the additive group $\F_q$ as two examples.


\subsection{G-FFT via  affine subgroups}\label{sec:affine1}
Assume $G\leq \AGL_2(q)$ is a $B$-smooth affine subgroup with order $n$, i.e., $|G|=n=\prod_{i=1}^rp_i$ such that $p_i\le B$ for all $1\le i\le r$. Then there is an ascending subgroup chain
\[\{1\}=G_0\subsetneq G_{1}\subsetneq \dots \subsetneq G_{r-1}\subsetneq G_r=G\]
satisfying $p_{i}=\frac{|G_{i}|}{|G_{i-1}|}\leq B$ for $i=1, \dots, r$. Let $F_i=\F_q(x)^{G_{i}}$ be the fixed subfield of $\F_q(x)$ under $G_i$; namely,
\[F_i=\{u\in \F_q(x) :\; \sigma(u)=u\ \text{for\ all}\ \sigma\in G_{i}\}.\]
By the Galois theory \cite{hir08}, $F_0/F_i$ is a Galois extension with $F_0=\F_q(x)$ and $[F_0: F_i]=|G_{i}|=\prod_{j=1}^i p_j$ for $0\leq i\leq r$.
Furthermore, we have $F_i\subsetneq F_{i-1}$ resulting a tower of fields
\[\F_q(x)=F_0\supsetneq F_{1}\supsetneq \dots \supsetneq F_{r-1}\supsetneq F_r.\]

\begin{lemma}\label{lem: new basis}
Assume $G$ is a B-smooth affine subgroup with order $n$ and $n=\prod_{i=1}^r p_i$. Let $G_i$ be a subgroup of $G$ of order $\prod_{j=1}^i p_j$ and $F_i=F^{G_i}$ be the fixed subfield under $G_i$ for $0\leq i\leq r$. Then we have
\begin{itemize}
\item[(1)] For $0\leq i< r$, $F_i=\F_q(x_i)$ and $x_i=\prod_{\sigma\in G_{i}} \sigma(x)$; moreover, each $x_{i+1}$ can be written as a polynomial of $x_i$ of degree $p_{i+1}$.
\item[(2)] The set
\[\cB=\{x_0^{e_0}x_1^{e_1}\cdots x_{r-1}^{e_{r-1}}\mid \be=(e_0,e_1,\ldots,e_{r-1})\in\Z_{p_1}\times\Z_{p_2}\times\cdots\times\Z_{p_r}\}.\]
forms a basis of the $\F_q$-vector space $\F_q[x]_{<n}$.
\end{itemize}
\end{lemma}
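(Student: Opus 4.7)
The plan is to establish (1) first by combining direct symmetry with Lüroth-type dimension counting, and then deduce (2) by showing the elements of $\cB$ have pairwise distinct degrees exhausting $\{0, 1, \ldots, n-1\}$.

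For part (1), I would start by noting that every $\sigma \in G \leq \AGL_2(q)$ acts as $\sigma(x) = a_\sigma x + b_\sigma$, so $x_i := \prod_{\sigma \in G_i} \sigma(x)$ is a polynomial in $x$ of degree exactly $|G_i| = \prod_{j=1}^i p_j$. For any $\tau \in G_i$, applying $\tau$ (an $\F_q$-automorphism) gives $\tau(x_i) = \prod_{\sigma \in G_i}(\tau\sigma)(x) = \prod_{\sigma' \in G_i}\sigma'(x) = x_i$, so $x_i \in F_i$ and hence $\F_q(x_i) \subseteq F_i$. Lemma~\ref{lem:simple} applied to $x_i = x_i/1$ yields $[\F_q(x) : \F_q(x_i)] = \max\{\deg x_i, 0\} = |G_i|$, and Galois theory gives $[\F_q(x) : F_i] = |G_i|$, so $\F_q(x_i) = F_i$.

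Next, I would argue $x_{i+1}$ is a polynomial in $x_i$ of degree $p_{i+1}$ via pole analysis. Since every element of $\AGL_2(q)$ fixes the infinite place $P_\infty$ of $\F_q(x)$, the group $G_i$ fixes $P_\infty$, so $P_\infty$ is the unique place of $\F_q(x)$ above its restriction $\fp_\infty := P_\infty \cap F_i$; hence $P_\infty$ is totally ramified in $\F_q(x)/F_i$ with $e(P_\infty | \fp_\infty) = |G_i|$. The only pole of $x_{i+1}$ in $\F_q(x)$ is $P_\infty$ with $\nu_{P_\infty}(x_{i+1}) = -|G_{i+1}|$, so in $F_i = \F_q(x_i)$ its only pole is $\fp_\infty$ with order $|G_{i+1}|/|G_i| = p_{i+1}$. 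Therefore $x_{i+1} \in \cL(p_{i+1}\fp_\infty) = \F_q[x_i]_{\leq p_{i+1}}$, and the pole order being exactly $p_{i+1}$ forces the $x_i$-degree to equal $p_{i+1}$.

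For part (2), I would compute the $x$-degree of a generic basis element: using $\deg_x(x_i) = |G_i|$, the element $x_0^{e_0} x_1^{e_1} \cdots x_{r-1}^{e_{r-1}}$ has degree $\sum_{i=0}^{r-1} e_i \prod_{j=1}^{i} p_j$ (with the empty product equal to $1$). As $e_i$ ranges over $\{0, 1, \ldots, p_{i+1}-1\}$, this is exactly the mixed-radix expansion with bases $p_1, p_2, \ldots, p_r$, which provides a bijection between the index set $\Z_{p_1} \times \cdots \times \Z_{p_r}$ and $\{0, 1, \ldots, n-1\}$. Hence $\cB$ consists of $n$ polynomials of pairwise distinct degrees filling $\{0, 1, \ldots, n-1\}$, which are therefore $\F_q$-linearly independent; since $\dim_{\F_q} \F_q[x]_{<n} = n$, they form a basis.

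The main obstacle is the polynomiality of $x_{i+1}$ in $x_i$ (as opposed to a mere rational function), which is not combinatorial but requires the ramification input that $P_\infty$ is totally ramified in $\F_q(x)/F_i$; this is precisely where the hypothesis $G \leq \AGL_2(q)$, rather than a general subgroup of $\PGL_2(q)$, is used, and this is also the feature that will fail in Section~\ref{sec:q+1} when $G$ is taken to be the non-affine cyclic subgroup of order $q+1$.
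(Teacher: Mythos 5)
Your proof is correct and takes essentially the same route as the paper: both arguments rest on the total ramification of $P_\infty$ in $\F_q(x)/F_i$ (you derive it directly from the fact that affine maps fix $P_\infty$, while the paper cites \cite{JMX19}), a valuation computation showing $x_{i+1}$ is a polynomial of degree $p_{i+1}$ in $x_i$, and linear independence of $\cB$ via pairwise distinct degrees, your mixed-radix bijection being exactly the paper's distinct-valuations-at-$P_\infty$ argument. The only step worth making explicit is that $\nu_{\fp_\infty}(x_i)=\nu_{P_\infty}(x_i)/e(P_\infty\mid\fp_\infty)=-1$, so that $\fp_\infty$ really is the pole place of $x_i$ in $F_i=\F_q(x_i)$ and hence $\cL(p_{i+1}\fp_\infty)=\F_q[x_i]_{\le p_{i+1}}$; this follows at once from $\deg_x x_i=|G_i|$ and the ramification index you already computed.
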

\begin{proof}
(1)  
Let $N_i(x)=\prod_{\sigma\in G_{i}} \sigma(x)$. Since $G$ is affine, $\sigma(x)$ is a linear polynomial of $x$ for each $\sigma\in G$ by \S~\ref{sec:subfofr}. Thus $N_i(x)\in \F_q[x]$ is a polynomial of degree $|G_i|$. Let $x_i=N_i(x)$. Then $x_i\in F_i$ and $x$ is a root of $\varphi_i(T)=N_i(T)-x_i$. These lead to
\[|G_{i}|=[F_0: F_i]\leq [F_0: \F_q(x_i)]\leq |G_{i}|.\]
Thus $F_i=\F_q(x_i)$. Moreover, by definition,
\[x_{i+1}=\prod_{\sigma\in G_{i+1}}\sigma(x)=\prod_{\sigma\in G_{i+1}/G_i}\sigma(x_i).\]
By \cite[Proposition IV.2]{JMX19}, the infinity place $P_{\infty}$ totally ramifies in $\F_q(x)/\F_q(x)^G$. Let $P_{i, \infty}=P_{\infty}\cap F_i$ be the place of $F_i$ lying below $P_{\infty}$. Then $e(P_{\infty}\mid P_{i, \infty})=|G_i|$ for $0\leq i\leq r$.
Thus
 \[\nu_{P_{i, \infty}}(x_i)=\nu_{P_{\infty}}(x_i)/e(P_{\infty}\mid P_{i, \infty})=-1;\]
namely, $P_{i, \infty}$ is a pole of $x_i$. Since $P_{i, \infty}$ is totally ramified in $F_i/F_{i+1}$, for any $\sigma\in G_{i+1}/G_i=\Gal(F_i/F_{i+1})$, $\sigma(P_{i,\infty})=P_{i, \infty}$. Thus
\[\nu_{P_{i, \infty}}\left(\sigma(x_i)\right)=\nu_{P_{i, \infty}}(x_i)=-1.\]
Therefore, each $\sigma(x_i)$ is a linear polynomial of $x_i$ and $x_{i+1}$ can be seen as a polynomial of $x_i$ of degree $|G_{i+1}/G_i|=p_{i+1}$.

(2) It is easy to see that $|\cB|=n$. For any $\bX_{\be}=x_0^{e_0}x_1^{e_1}\cdots x_{r-1}^{e_{r-1}}\in \cB$,
by computation, \[\nu_{P_{\infty}} (\bX_{\be})=-\sum_{i=0}^{r-1} e_i|G_i| > -p_r\cdot|G_{r-1}|=-n,\]
thus $\bX_{\be}\in \F_q[x]_{<n}$ and $\cB\subsetneq  \F_q[x]_{<n}$. Note that, if $\be\neq \be'=(e_0',\ldots,e_{r-1}')$, assume $k$ is the largest index in $[0, r-1]$ such that $e_k\neq e_k'$. Then
\[\nu_{P_{\infty}} (\bX_{\be})\equiv \sum_{i=0}^{k}-e_i\cdot|G_i| \bmod |G_{k+1}| \neq \nu_{P_{\infty}} (\bX_{\be'})\equiv \sum_{i=0}^{k}-e_i'\cdot|G_i| \bmod |G_{k+1}|.\]
Thus $\nu_{P_{\infty}} (\bX_{\be})\neq \nu_{P_{\infty}} (\bX_{\be'})$. As each element in $\cB$ has pairwise distinct valuation at the infinity place $P_{\infty}$, they must be linearly independent over $\F_q$. Since $\F_q[x]_{<n}$ has dimension $n$, then $\cB$ must be a basis of $\F_q[x]_{<n}$.
\end{proof}

By the above Lemma, under the basis $\cB$, we can express any $f(x)\in \F_q[x]_{<n}$ as
\[f(x)=\sum_{\be} a_{\be}\bX_{\be}=f(x, x_1,\dots, x_{r-1}),\]
and $x_i, x_{i+1},\dots, x_{r}$ are polynomials of $x_{i-1}$ for $1\leq i\leq r$.

\begin{theorem}\label{thm:main1}
 Let $F$ be the rational function field $\F_q(x)$. Let $G$ be a $B$-smooth subgroup of the affine group $\AGL_2(q)$ with $|G|=n$.  Assume $\fP$ is a rational place of $F^G$ that splits completely in the extension $F/F^G$. Let $\calP$ be the set of rational places of $F$ lying over $\fP$ (hence $|\calP|=n$). Then, under the basis $\cB$ as in Lemma~\ref{lem: new basis}, the evaluations of every function $f(x)\in\F_q[x]_{<n}$ at $\calP$ can be computed in $O(B\cdot n\log n)$ operations in $\F_q$.
\end{theorem}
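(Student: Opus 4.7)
The plan is a standard divide-and-conquer recursion grounded in the Galois tower $F_0 = \F_q(x) \supsetneq F_1 \supsetneq \cdots \supsetneq F_r = F^G$. Using the basis $\cB$ of Lemma~\ref{lem: new basis}, I collect terms by the exponent of $x_0 = x$ and write
\[
f(x) \;=\; \sum_{e_0=0}^{p_1-1} x^{e_0}\, f_{e_0}(x_1),
\]
where each $f_{e_0}$ lies in the $\F_q$-span of $\{x_1^{e_1}\cdots x_{r-1}^{e_{r-1}} : e_i \in \Z_{p_{i+1}}\}$. The sub-tower $F_1 \supsetneq F_2 \supsetneq \cdots \supsetneq F_r$ inherits every hypothesis of Lemma~\ref{lem: new basis}: each $x_{i+1}$ is a polynomial in $x_i$ of degree $p_{i+1}$, and the pole of $x_1$ is totally ramified in $F_1/F_r$ because $P_\infty$ is totally ramified in $F_0/F_r$. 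The same valuation-at-infinity argument used in part (2) of Lemma~\ref{lem: new basis} then shows that the displayed family is a basis of $\F_q[x_1]_{<n/p_1}$, so each $f_{e_0}$ is represented in exactly the form required to recurse.

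Next, I set $\calP_1 := \{P \cap F_1 : P \in \calP\}$. Since $\fP$ splits completely in $F_0/F_r$, multiplicativity of ramification indices and residue degrees forces $\fP$ to also split completely in the intermediate extension $F_1/F_r$, and the map $\calP \to \calP_1$ is surjective with fibers of size $p_1$; in particular $|\calP_1| = n/p_1$, and $\calP_1$ is precisely the multipoint set required by the theorem applied to the sub-tower $F_1 \supsetneq \cdots \supsetneq F_r$. Because $x_1 \in F_1$, every $P \in \calP$ lying over $P_1 \in \calP_1$ satisfies $x_1(P) = x_1(P_1)$, so
\[
f(P) \;=\; \sum_{e_0=0}^{p_1-1} x(P)^{e_0}\, f_{e_0}(P_1).
\]

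The algorithm therefore recursively computes the $p_1$ tables $\{f_{e_0}(P_1) : P_1 \in \calP_1\}$ for $e_0 = 0, 1, \ldots, p_1 - 1$, each of which is a G-FFT instance of size $n/p_1$ in the sub-tower, and then for each of the $n$ places $P \in \calP$ combines the retrieved values by Horner's rule in $x(P)$ at cost $O(p_1)$. This yields the recurrence
\[
T(n) \;=\; p_1\, T(n/p_1) \;+\; O(p_1\, n),
\]
and since every $p_i \leq B$ and $r \leq \log_2 n$, an easy induction gives $T(n) = O(B\, n \log n)$. The only subtle point in the argument, and what I expect to be the main obstacle worth writing down carefully, is the stability of the hypotheses under recursion: verifying that the sub-basis, the pole/ramification structure, and the completely split place all descend correctly to $F_1$. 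Once this descent is in place, both the correctness of the combination step and the cost estimate follow immediately.
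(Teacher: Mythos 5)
Your proposal is correct and follows essentially the same route as the paper's proof: decompose $f$ along powers of $x$ using the basis $\cB$, observe that the coefficient polynomials $f_{e_0}$ lie in $\F_q[x_1]_{<n/p_1}$ (the paper gets this by bounding $\nu_{P_{1,\infty}}$, you get it by re-running the Lemma~\ref{lem: new basis} argument on the sub-tower, which is the same valuation computation), evaluate them recursively at $\calP_1=\calP\cap F_1$ using constancy of functions in $F_1$ along fibers, and combine at each place, giving the recurrence $C(n)=p_1C(n/p_1)+p_1\cdot O(n)$ and hence $O(B\,n\log n)$. Your explicit verification that the completely split place and the ramification structure descend to $F_1$ is a point the paper treats more tersely, but the argument is the same.
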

\begin{proof}
First, denote the complexity (i.e., the total number of operations in $\F_q$) of evaluating $f(x)$ at the set $\calP$ by $C(n)$.
Under the basis $\cB$ defined in Lemma~\ref{lem: new basis}, assume $f=f(x, x_1, \dots, x_{r-1})$, where $\deg_{x_i}(f)<p_{i+1}$. To perform FFT of $f(x)$ at $\calP$, write $f=f(x, x_1, \dots, x_{r-1})$  in at most $O(n)$ steps as follows
\begin{equation}\label{eq:rec}
f=f_0(x_1, \dots, x_{r-1})+x\cdot f_1(x_1, \dots, x_{r-1})+\dots +x^{p_1-1}\cdot f_{p_1-1}(x_1, \dots, x_{r-1}).\end{equation}
Note that
\[\begin{split}\nu_{P_{1,\infty}} f_i(x_1, \dots,x_{r-1})&\geq-\big((p_2-1)+(p_3-1)p_2+\cdots+(p_r-1)p_2\cdots p_{r-2}\big)\\
&>-p_2\cdots p_r=-n/p_1\ \text{for}\ i=0,\cdots, p_1-1,\end{split}\]
where $P_{1, \infty}=P_{\infty}\cap F_1$ is the infinity place of $F_1$. Thus we have $f_0, \dots, f_{p_1-1}\in \cL\left((n/p_1)P_{1, \infty}\right)=\F_q[x_1]_{<n/p_1}$. By equation \eqref{eq:rec}, the DFT of $f$ at $\calP$ can be reduced to the DFTs of $f_0, f_1,\dots, f_{p_1-1}$ at $\calP_{1}=\calP\cap F_1$ and then a combination of these values by using $p_1O(n)$ operations in $\F_q$. Therefore, the running time $C(n)$ satisfies the following recursive formula
\begin{equation}\label{eq:time}
C(n)=p_1C(n/{p_1})+p_1\cdot O(n).
\end{equation}
We continue the reduction in this fashion till to the DFTs of functions in $\F_q[x_{r}]_{<1}$. We get a total of $(p_1\cdots p_{r})$ constant polynomials and need to compute their's evaluations at $\calP_{r}=\calP\cap F_{r}=\{\fP\}$, which can be done in $(p_1\cdots p_r)O(1)$ orperations. Thus, the recursive formula \eqref{eq:time} leads the total running time equal to
\[C(n)=p_1\cdots p_r\cdot O(1)+(p_1+\dots+p_{r})O(n)=O(B\cdot n\log n).\]
This completes the proof.\end{proof}

\begin{remark}\label{rmk:ifft}
Let $\ev_{\calP}(f)$ denote the MPE of $f\in \F_q[x]_{<n}$ at $\calP$. The inverse G-FFT means that given the $\ev_{\calP}(f)$, output the coefficients of $f(x)$ under the basis defined by G-FFT (see Lemma~\ref{lem: new basis}). In the following, we show that the inverse  G-FFT can also be done in $O(n\log n)$ operations. Let $I(n)$ denote the complexity of inverse G-FFT of length $n$. For any $P^{(1)}\in\calP\cap F_1=\calP^{(1)}$, let $P_{1,1},\ldots,P_{1,p_1}$ be the $p_1$ places in $\PP_{F}$ lying over $P^{(1)}$. By the recursive Equation~\eqref{eq:rec}, 
\begin{equation}\label{eq:irec}
\left(\begin{matrix} f_0(P^{(1)})\\
f_1(P^{(1)})\\
\vdots\\
f_{p_1-1}(P^{(1)})\end{matrix}\right)=\left(\begin{matrix}
1            &            x(P_{1,1})    & \cdots  & x^{p_1-1}(P_{1,1})\\ 
1 & x(P_{1,2}) & \cdots  & x^{p_1-1}(P_{1,2}) \\
\vdots       & \vdots       & \cdots   & \vdots \\
1 &   x(P_{1,p_{1}}) & \cdots &  x^{p_1-1}(P_{1,p_{1}})\end{matrix}\right)^{-1}\cdot \left(\begin{matrix} f(P_{1,1})\\
f(P_{1,2})\\
\vdots\\
f(P_{1,p_1})\end{matrix}\right)\end{equation}
Thus, we can first compute the MPEs of $f_0, f_1,\ldots, f_{p_1-1}$ at $\calP^{(1)}$ from the above equation, which will cost $p_1O(n)$ operations. Then we can reduce the inverse G-FFT of $\ev_{\calP}(f)$ to $p_1$ inverse G-FFTs of $\ev_{\calP^{(1)}}(f_0)$, $\ev_{\calP^{(1)}}(f_1)$, $\ldots$, $\ev_{\calP^{(1)}}(f_{p_1-1})$. Finally, we get $f$ by Equation~\eqref{eq:rec}. Therefore, $I(n)$ satisfies the following recursive formula
\[I(n)=p_1I(n/{p_1})+p_1O(n)=O(n\log n).\]
\end{remark}

\subsection{Instantiation}
\subsubsection{Multiplicative G-FFT} Let us first look at the multiplicative case. Let $F$ be the rational function field $\F_q(x)$. Consider a subgroup $T$ of $\F_q^*$ of order $n$ and an automorphism subgroup
\[G_T=\{\sigma\in\Aut(F/\F_q):\; \sigma(x)=ax\;\mbox{for some $a\in T$}\}.\]
When $T=\F_q^*$, we denote $G_T$ by $G_*$. Then the field extension $F/F^{G_*}$ has extension degree $[F:F^{G_*}]=q-1$. Furthermore, there are a few facts about this extension (refer to \cite[Proposition IV.2]{JMX19})
\begin{enumerate}
\item The pole of $x$ is totally ramified in the extension $F/F^{G_*}$.
\item There is a rational place $\wp$ of $F^{G_*}$ that splits completely in $F/F^{G_*}$.
\end{enumerate}
Let $\fP$  be a rational place of $F^{G_T}$ that lies over $\wp$. Then $\fP$ splits completely in the extension $F/F^{G_T}$. Let $\calP$ be the set of rational places of $F$ lying over $\fP$. Then $|\calP|=n$.

Assume that $n$ has factorization $n=\prod_{i=1}^rp_i$ with $p_i\le B$ for a positive integer $B$. Consider an ascending chain of subgroups
\[\begin{split}
\{1\}=T_0\subsetneq T_{1}\subsetneq \dots \subsetneq T_{r-1}\subsetneq T_r=T. \end{split}\]
with $|T_i|=\prod_{j=1}^ip_i$ for $i\ge 1$. Let $G_i=G_{T_i}$ and $F_i=F^{G_i}$. If $\Char(\F_q)=2$, let $x_i:=\prod_{\sigma\in G_i} \sigma(x)$. Then $F_i=\F_q(x_i)$ and $x_i=(\prod_{a\in T_i} a)\cdot x^{\prod_{j=1}^i p_j}=x^{\prod_{j=1}^i p_j}$. If $\Char(\F_q)$ is odd, then $2\mid q-1$. We choose $T_1$ with even order (hence $n$ is also even). Then $2\mid |T_i|$ and $\prod_{a\in T_i} a=-1$ for each $T_i$.  Let $x_i=-\prod_{\sigma\in G_i}\sigma(x)$. Then $F_i=\F_q(x_i)$ and $x_i=x^{\prod_{j=1}^i p_j}$. Therefore, we always have $x_{i}=x_{i-1}^{p_i}$ for all $i\ge 1$. Furthermore, it is easy to see that $x_i=0$ gives only one solution $x=0$, i.e., the zero of $x_i$ totally ramifies in $F/F_i$. Let $\Ga$ be a primitive element of $\F_q$. Then for any $\Gb\in\F_q^*$, the equation $x_r=\Gb^n$, i.e., $x^{n}=\Gb^n$ gives solutions $x=\Gb\Ga^{j(q-1)/n}$ for $j=0,1,\dots,n-1$. This implies that $x_r-\Gb^n$ splits into $n$ rational places of $F$. Let $\calP$ be the set of rational places of $F$ lying over  $x_r-\Gb^n$, i.e., $\calP=\{x-\Gb\Ga^{j(q-1)/n}:\; j=0,1,\dots,n-1\}$. Thus, we have $\calP_i=\calP\cap F_i=\{x_i-\Gb^{\prod_{j=1}^ip_i}\Ga^{k(q-1)\prod_{j=1}^ip_j/n}:\; k=0,1,\dots, \frac{n}{\prod_{j=1}^ip_j}-1\}$.  Partition $\calP$ into $\frac{n}{\prod_{j=1}^ip_j}$ subsets $\Re_k:=\{(x-\Gb\cdot\Ga^{k(q-1)/n}\cdot \Ga^{\ell(q-1)/\prod_{j=1}^ip_j}):\; \ell=0,1,\dots,\prod_{j=1}^ip_j-1\}$ for $k=0,1,\dots, n/(\prod_{j=1}^ip_j)-1$. It is easy to see that evaluation of $x_i$ at every place of $\Re_k$ is a constant that is equal to $\Gb^{\prod_{j=1}^ip_i}\Ga^{k(q-1)\prod_{j=1}^ip_j/n}$.

In our G-FFT algorithm, we have to decompose a polynomial in variable $x_{i-1}$ in terms of polynomials in variable $x_i$. So we need represent $f\in\F_q[x]_{<n}$ under the basis $\cB$ defined in Lemma~\ref{lem: new basis}. Actually, in this case,
\[\begin{split}
\cB_T&=\{x_0^{e_0}x_1^{e_1}\cdots x_{r-1}^{e_{r-1}}\mid \be=(e_0,e_1,\ldots,e_{r-1})\in\Z_{p_1}\times\Z_{p_2}\times\cdots\times\Z_{p_r}\}\\
&=\{x^e\mid e=\sum_{i=0}^{r-1}e_i\cdot(p_1\cdots p_{i+1})\in[0,n-1]\}=\{1,x,\cdots,x^{n-1}\},
\end{split}\]
namely, $\cB_T$ is exactly the standard basis of $\F_q[x]_{<n}$. To illustrate, let us consider the first step, i.e., express a polynomial $f(x)\in\F_q[x]_{<n}$ as a combination of $\{x^k\}_{k=1}^{p_1-1}$ with coefficients of polynomials in variable $x_1=x^{p_1}$. Then, in at most $O(n)$ steps, $f(x)$ can be written as
\begin{equation}
\begin{split}f(x)&=f_0(x^{p_1})+x f_1(x^{p_1})+\cdots+x^{p_1-1}f_{p_1-1}(x^{p_1})\\
&=f_0(x_1)+x f_1(x_1)+\cdots+x^{p_1-1}f_{p_1-1}(x_1).\end{split}
\end{equation}
It is clear that each of $f_i(x_1)$ has degree less than $n/p_1$. We can then decompose each $f_i(x_1)$ as a combination of $\{x_1^k\}_{k=0}^{p_2-1}$ with coefficients of polynomials in variable $x_2=x_1^{p_2}$. Then the polynomials in $x_2$ all have degrees less than $n/(p_1p_2)$. We continue in this fashion until in the last step, all polynomials in variable $x_r$  have degrees less than $n/\prod_{j=1}^rp_j=1$, i.e., they are all constants.

In conclusion, we have the following result.
\begin{corollary} If $n$ is a divisor of $q-1$ that is $B$ smooth, then one can run FFT for polynomials $f(x)\in\F_q[x]_{<n}$ in $O(B\cdot n\cdot\log n)$ field operations of $\F_q$.
\end{corollary}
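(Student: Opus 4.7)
The plan is to exhibit this corollary as an immediate application of Theorem~\ref{thm:main1} to the multiplicative automorphism subgroup $G_T$ constructed in the preceding discussion, so that essentially all the work has already been done and only the hypotheses of the theorem need to be verified.

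First I would fix a subgroup $T\leq \F_q^*$ of order $n$; such a $T$ exists and is unique since $n\mid q-1$ and $\F_q^*$ is cyclic. Let $G=G_T=\{\sigma\in\Aut(F/\F_q):\sigma(x)=ax,\ a\in T\}$, which is a subgroup of $\AGL_2(q)$ of order $n$ via the identification $a\mapsto \sigma_a$. Since $G$ is cyclic, in particular abelian, and $|G|=n$ is $B$-smooth, Lemma~\ref{lem:smooth} gives an ascending chain of subgroups of $G$ with successive indices bounded by $B$; hence $G$ itself is a $B$-smooth affine subgroup in the sense required by Theorem~\ref{thm:main1}.

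Next I would verify the splitting hypothesis. From the preceding subsection, the pole of $x$ is totally ramified in $F/F^{G_*}$, and there is a rational place $\wp$ of $F^{G_*}$ that splits completely in $F/F^{G_*}$; taking any rational place $\fP$ of $F^{G_T}$ lying above $\wp$, transitivity of splitting behavior in towers gives that $\fP$ splits completely in $F/F^{G_T}$ into $n$ rational places. Concretely, picking a primitive element $\alpha\in\F_q$ and any $\beta\in\F_q^*$, the set $\calP=\{x-\beta\alpha^{j(q-1)/n}:j=0,\ldots,n-1\}$ is exactly the fiber above such a place, as already computed in the instantiation discussion.

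It then suffices to invoke Theorem~\ref{thm:main1} with this $G$, $\fP$, and $\calP$: the theorem provides an FFT that evaluates any $f\in\F_q[x]_{<n}$ at $\calP$ using $O(B\cdot n\log n)$ operations, provided $f$ is represented in the basis $\cB$ of Lemma~\ref{lem: new basis}. The final point I would make explicit (this is the only nontrivial thing beyond plugging in) is that in the multiplicative case this basis is the \emph{standard} monomial basis: since $x_i=\prod_{\sigma\in G_i}\sigma(x)$ equals $x^{|G_i|}=x^{p_1\cdots p_i}$ up to a constant (the constant being $\prod_{a\in T_i}a$, which is $1$ in characteristic $2$ and $-1$ when $|T_i|$ is even, which we can arrange by ordering the chain so that the $2$-part of $n$ is absorbed first), the monomials $x_0^{e_0}\cdots x_{r-1}^{e_{r-1}}$ with $0\leq e_i<p_{i+1}$ range (up to sign) exactly over $\{1,x,\ldots,x^{n-1}\}$ via the mixed-radix expansion $e=\sum e_i(p_1\cdots p_i)$. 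Thus no change of basis is required on the input, and the stated bound $O(B\cdot n\log n)$ follows directly. The only place where any care is needed is the sign issue in odd characteristic, which is handled by the choice of $T_1$ described above; this is really the only ``obstacle,'' and it is more bookkeeping than substance.
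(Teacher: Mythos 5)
Your proposal is correct and follows essentially the same route as the paper: instantiate Theorem~\ref{thm:main1} with the multiplicative subgroup $G_T$, verify $B$-smoothness via Lemma~\ref{lem:smooth} and complete splitting via the cited facts about $F/F^{G_*}$, and observe that the basis of Lemma~\ref{lem: new basis} reduces (up to the sign adjustment in odd characteristic, which the paper handles by taking $T_1$ of even order and setting $x_i=-\prod_{\sigma\in G_i}\sigma(x)$) to the standard monomial basis via the mixed-radix expansion, so no change of basis is needed.
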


\subsubsection{Additive G-FFT}
We now turn to the additive case. Let $F$ be the rational function field $\F_q(x)$. Let $p$ be the characteristic of $\F_q$. Consider an  $\F_p$-subspace $W$ of $\F_q$ of order $n$ and the associated automorphism group
\[G_W=\{\sigma\in\Aut(F/\F_q):\; \sigma(x)=x+b\;\mbox{for some $b\in W$}\}.\]
When $W=\F_q$, we denote $G_W$ by $G_+$. Then the field extension $F/F^{G_+}$ has extension degree $[F: F^{G_+}]=q$. Furthermore, there are a few facts about this extension \cite{JMX19}:
\begin{enumerate}
\item The pole  of $x$ is totally ramified in the extension $F/F^{G_+}$.
\item There is a rational place $\wp$ of $F^{G_+}$ that splits completely in $F/F^{G_+}$.
\end{enumerate}
Let $\fP$  be a rational place of $F^{G_W}$ that lies over $\wp$. Then $\fP$ splits completely in the extension $F/F^{G_W}$. Let $\calP$ be the set of rational places of $F$ lying over $\fP$. Then $|\calP|=n$.

Assume that $n$ has factorization $n=p^r$. For simplicity, choose a set $\{\Ga_1,\Ga_2,\dots,\Ga_r\}$ of $\F_p$-linearly independent elements in $\F_q$. Put $W_i=\sum_{j=1}^i\F_p\Ga_j$. Then $W_i$ is an $\F_p$-subspace of $\F_q$ of dimension $i$. Hence, $|W_i|=p^i$ for $1\le i\le r$. Put $G_i=\{\sigma\in\Aut(F/\F_q):\; \sigma(x)=x+b\;\mbox{for some $b\in W_i$}\}$, $F_i=F^{G_i}$ and $x_i=\prod_{\sigma\in G_i}\sigma(x)=\prod_{a\in W_i}(x-a)$. Then each $x_i$ is a linearized polynomial of degree $p^i$, denoted by $\ell_i(x)$. Furthermore, $x_i=x_{i-1}^p-\Gb_i x_{i-1}$ where $\Gb_i=\ell_{i-1}^{p-1}(\alpha_i)\in\F_q$. It is easy to see that $x_i=0$ gives solutions $x=a$ for all $a\in W_i$. This implies that $x_r$ splits into $n$ rational places of $F$. Let $\calP$ be the set of rational places of $F$ lying over  $x_r$, i.e., $\calP=\{x-a:\; a\in W\}$. Thus, there exists an $\F_p$-vector space $V_i$ of dimension $r-i$ such that $\calP_i=\calP\cap F_i=\{x_i-a:\; a\in V_i\}$.

To perform G-FFT, we need to represent $f(x)\in \F_q[x]_{<n}$ under a basis $\cB=\cB_W$ constructed in Lemma~\ref{lem: new basis}. In this case, each $x_i=\ell_i(x)=x_{i-1}^p-\Gb_i x_{i-1}$ is a linearized polynomial of degree $p^i$ for $0\leq i\leq r$. In particular, $\ell_0(x)=x$. Thus,
\begin{equation}\label{eq:basisq}
\cB_W=\{\ell_0(x)^{e_0}\ell_1(x)^{e_1}\cdots\ell_{r-1}(x)^{e_{r-1}}:\ (e_0,\dots, e_{r-1})\in \Z_p^r\}.
\end{equation}
We note that $\cB_W$ is constructed in the same way as in \cite{lin2014novel}. Under this basis, we can decompose a polynomial in variable $x_{i-1}$ in terms of polynomials in variable $x_i$ step by step. To illustrate, let us consider the first step only. Let $f(x)$ be a polynomial of degree less than $n$. Under the basis $\cB_W$, write $f(x)$ as
\begin{equation}\label{eq:q}
\begin{split} f(x)&=f_0(x_1)+x f_1(x_1)+\cdots+x^{p_1-1}f_{p_1-1}(x_1)\\
&=f_0(x^{p}-\Gb_1x)+x f_1(x^{p}-\Gb_1x)+\cdots+x^{p-1}f_{p_1-1}(x^{p}-\Gb_1x).\end{split}
\end{equation}
It is clear that each of $f_i(x_1)$ has degree less than $n/p=p^{r-1}$. We can then decompose each $f_i(x_1)$ as a combination of $\{x_1^i\}_{i=1}^{p-1}$ and polynomials in variable $x_2=x_1^{p}-\Gb_2x_1$. Then the polynomials in $x_2$ all have degrees less than $n/p^2=p^{r-2}$. We continue in this fashion until the last step: all polynomials in variable $x_r$ have degrees less than $n/p^r=1$, i.e., all constant polynomials. In conclusion, we have the following result.
\begin{corollary}\label{coro:q}
Let $\cB_W$ be a basis of $\F_q[x]_{<n}$ defined as in equation~\eqref{eq:basisq}. Let $f(x)\in\F_q[x]_{<n}$ be a given polynomial that is represented under $\cB_W$. If $n$ is a divisor of $q$, then one can run FFT for $f(x)$ in $O(p\cdot n\cdot\log n)$ field operations of $\F_q$. \end{corollary}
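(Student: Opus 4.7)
The plan is to derive Corollary~\ref{coro:q} as a direct instantiation of Theorem~\ref{thm:main1} applied to the affine subgroup $G_W \le \AGL_2(q)$. Since $n \mid q$ and $q$ is a power of the characteristic $p$, I may write $n = p^s$ for some $s$, choose an $\F_p$-subspace $W \subset \F_q$ of dimension $s$, and work with $G = G_W$, which acts on $\F_q(x)$ by translations $x \mapsto x + b$ for $b \in W$ and satisfies $|G| = n$.

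Step one is to verify the hypotheses of Theorem~\ref{thm:main1} with smoothness constant $B = p$. The $\F_p$-flag $\{0\} = W_0 \subsetneq W_1 \subsetneq \cdots \subsetneq W_s = W$ lifts under the translation action to a chain of affine subgroups $G_0 \subsetneq G_1 \subsetneq \cdots \subsetneq G_s = G_W$ of successive index $p$, which certifies $p$-smoothness. For the rational-place hypothesis, the discussion preceding the corollary already exhibits a rational place $\fP$ of $F^{G_W}$ (inherited from the completely splitting rational place $\wp$ of $F^{G_+}$) that splits completely in $F/F^{G_W}$ into $\calP = \{x - a : a \in W\}$; indeed this is visible from the factorization $\ell_s(x) = \prod_{a \in W}(x - a)$ over $\F_q$.

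Step two is to identify the abstract basis $\cB$ produced by Lemma~\ref{lem: new basis} with the linearized-polynomial basis $\cB_W$ of \eqref{eq:basisq}. By Lemma~\ref{lem: new basis}(1), one may take $x_i = \prod_{\sigma \in G_i} \sigma(x) = \prod_{a \in W_i}(x - a)$, and this product is exactly the linearized polynomial $\ell_i(x)$. Substituting $x_i = \ell_i(x)$ into \eqref{eq:basis} recovers \eqref{eq:basisq} verbatim, so a polynomial $f \in \F_q[x]_{<n}$ expanded in $\cB_W$ is already in the form required by Theorem~\ref{thm:main1}.

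With these identifications in place, Theorem~\ref{thm:main1} immediately produces the evaluations of $f$ at $\calP$ in $O(p \cdot n \log n)$ operations in $\F_q$, which is the claimed bound. The only (very mild) obstacle is bookkeeping: one must confirm that the Galois-theoretic divide step $f = \sum_{j < p} x^j f_j(x_1)$ behind Theorem~\ref{thm:main1} coincides with the concrete additive recursion in \eqref{eq:q} driven by $x_1 = x^p - \beta_1 x$, once the identification $x_i = \ell_i(x)$ is fixed. No genuinely new estimate is required beyond Theorem~\ref{thm:main1}.
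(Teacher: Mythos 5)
Your proposal is correct and follows essentially the same route as the paper: Section~\ref{sec:affine}'s additive instantiation likewise lifts the flag $W_0\subsetneq\cdots\subsetneq W_r=W$ to the translation subgroups $G_i$, identifies $x_i=\prod_{a\in W_i}(x-a)=\ell_i(x)$ so that the basis of Lemma~\ref{lem: new basis} becomes $\cB_W$, and invokes the recursion of Theorem~\ref{thm:main1} with $B=p$ to get $O(p\cdot n\log n)$. No substantive difference from the paper's argument.
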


The condition ``$f(x)\in\F_q[x]_{<n}$ is represented under $\cB_W$" in Corollary~\ref{coro:q} is necessary to decompose a polynomial in variable $x_{i-1}$ in terms of polynomials in variable $x_i$ step by step in our G-FFT. If we are given a representation of $f(x)$ under the standard basis, i.e., $f(x)=\sum_{i=0}^{n-1}a_ix^i$, we need to compute its $(x^p-\Gb_1 x)$-adic expansion (refer to \cite{Gao10, von13}) first, i.e.,
\[f(x)=a_0(x)+a_1(x)(x^p-\Gb_1 x)+\dots+a_m(x)(x^p-\Gb_1x)^m,\ \deg(a_i)<p\ \text{for}\ 0\leq i\leq m<p^{r-1}.\]
Then we can write $f(x)$ as in the form of \eqref{eq:q} in at most $O(n)$ steps from its $(x^p-\Gb_1 x)$-adic expansion. In the second recursion, we need also to compute the $(x_1^p-\Gb_2 x_1)$-adic representation of each $f_i(x_1)$ to write it as a combination of $\{x_1^i\}_{i=1}^{p-1}$ and polynomials in variable $x_2=x_1^{p}-\Gb_2x_1$. We continue in this fashion until the last step: all polynomials in variable $x_r$ are constant. This has been considered in \cite{Gao10} for the case that $\F_q=\F_{2^r}$. We notice that if the characteristic $p$ is a constant (not only for $p=2$), then the $(x^p-\beta x)$-adic expansion of any $f(x)\in\F_q[x]_{<n}$ can be computed in $O(n\log n)$ for any $\beta\in \F_q$ (see Lemma~\ref{lem:poly-adic}). Consequently, if $f(x)=\sum_{i=0}^{n-1}a_i x^i$ is given under the standard basis, we must add the complexity of computing the $(x^p-\beta x)$-adic expansions in the recurse formula of $C(n)$. Then $C(n)$ satisfies the following recursive formula
\[C(n)=pC(n/p)+pO(n)+O(n\log n),\]
After $r$ steps of recursion, we have
\[C(n)=p^rC(1)+O\left(n\cdot(\log p+\cdots+\log n)\right)+r\cdot p\cdot O(n)=O(n\log^2 n)\]
 In conclusion, we have the following result.

\begin{corollary}\label{coro: MGG}
Let $\F_q$ be a finite field with constant characteristic $p$. Let $f(x)\in\F_q[x]_{<n}$ be a given polynomial under the standard basis.  If $n$ is a divisor of $q$, then one can run FFT for $f$ in $O(n\log^2 n)$ field operations of $\F_q$.
\end{corollary}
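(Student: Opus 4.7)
The plan is to run the same recursive G-FFT as in Corollary~\ref{coro:q}, but to interleave each recursive step with a change-of-basis from the standard monomial representation to the $\cB_W$-type representation \eqref{eq:q} needed to peel off one variable. Concretely, at the top level I would first compute the $(x^p-\beta_1 x)$-adic expansion of $f(x)=\sum_{i=0}^{n-1}a_i x^i$:
\[
f(x) = a_0(x)+a_1(x)(x^p-\beta_1 x)+\cdots+a_m(x)(x^p-\beta_1 x)^m,
\]
with $\deg(a_i)<p$ and $m<n/p$. Collecting like powers of $x$ then rewrites $f$ as $f_0(x_1)+x f_1(x_1)+\cdots+x^{p-1}f_{p-1}(x_1)$ with $x_1=x^p-\beta_1 x$, which is exactly the decomposition used in Corollary~\ref{coro:q}. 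This collection step costs $O(n)$ operations once the $(x^p-\beta_1 x)$-adic expansion is known.

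The cost of the expansion itself is the one nontrivial ingredient, and I would simply invoke Lemma~\ref{lem:poly-adic}, which tells us that for constant characteristic $p$ the $(x^p-\beta x)$-adic expansion of any $h(x)\in\F_q[x]_{<n}$ can be computed in $O(n\log n)$ field operations. Thus, at the root of the G-FFT recursion we pay $O(n\log n)$ for the basis conversion plus $O(n)$ for the collection, and then we are in the situation of Corollary~\ref{coro:q} at level $1$: we have $p$ sub-polynomials, each of degree less than $n/p$, represented in the standard basis of $\F_q[x_1]$, and we must FFT each of them at $\calP_1$.

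Applying the same procedure recursively, each of the $p$ sub-problems costs $C(n/p)$, plus $pO(n)$ for combining the evaluations as in the proof of Theorem~\ref{thm:main1}, plus an additional $O(n\log n)$ at the current level for the conversion to the next polynomial variable $x_{i+1}$ via the $(x_i^p-\beta_{i+1}x_i)$-adic expansion. This yields the recurrence
\[
C(n) = p\cdot C(n/p) + p\cdot O(n) + O(n\log n).
\]
Unwinding through the $r=\log_p n$ recursion levels, the $p\cdot O(n)$ term contributes $O(r\cdot p\cdot n)=O(n\log n)$ (since $p$ is constant), and the $O(n\log n)$ overhead at level $i$ becomes $O((n/p^i)\log(n/p^i))$ summed over $p^i$ subproblems, giving $O(n\log(n/p^i))$ per level and $\sum_{i=0}^{r-1}O(n\log(n/p^i))=O(n\log^2 n)$ in total. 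Adding the $O(p^r)=O(n)$ base-case cost yields $C(n)=O(n\log^2 n)$, as claimed.

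The only substantive step beyond Corollary~\ref{coro:q} is the bound on the $(x^p-\beta x)$-adic expansion for arbitrary constant characteristic $p$, which the paper has isolated as Lemma~\ref{lem:poly-adic}; once this is in hand, the argument is a straightforward modification of the recurrence analysis in Theorem~\ref{thm:main1}.
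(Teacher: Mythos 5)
Your proposal is correct and follows essentially the same route as the paper: convert to the $(x^p-\beta x)$-adic form at each recursion level via Lemma~\ref{lem:poly-adic}, then peel off one variable as in Corollary~\ref{coro:q}, giving the recurrence $C(n)=pC(n/p)+p\cdot O(n)+O(n\log n)$ and hence $O(n\log^2 n)$. The level-by-level accounting you give for the $O(n\log n)$ conversion overhead matches the paper's summation $O\left(n(\log p+\cdots+\log n)\right)=O(n\log^2 n)$.
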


\section{Fast Fourier transform via  a cyclic subgroup of order $q+1$}\label{sec:q+1}
We continue to use the notation $F$ to denote the rational function field $\F_q(x)$. Assume $n\mid q+1$ is a $O(1)$-smooth divisor. Let $m(x)=x^2+ax+b\in\F_q[x]$ be an irreducible and primitive polynomial over $\F_q$ and $Q(x)=x^2+\frac a{b}x+\frac 1{b}$. Let $Q$ denote the quadratic place of $F$ corresponding to $Q(x)$. In this section, we will show that the DFT of any polynomial $f(x)\in\F_q[x]_{< n}$ at a well-chosen set can be done in $O(n\log n)$ by using our G-FFT algorithm.

Recall that $\sigma=\left(\begin{matrix}0&1\\ -b&-a\end{matrix}\right)$ has order $q+1$ in the group $\Aut(F/\F_q)=\PGL_2(q)$, where $a, b$ are coefficients of $m(x)$. Let $\PP_F^{(1)}$ be the set of $q+1$ rational places of $F$. Then $\sigma$ acts as a $(q+1)$-cycle on the set $\PP_F^{(1)}$.

\begin{lemma}[\cite{JMX19}]\label{lem:q+1subf}
Let $Q(x)$ and $\sigma$ be defined as above. Let $G\leq \langle \sigma\rangle$ be a subgroup of order $n$. Then the fixed subfield of $F$ by $G$ is $F^G=\F_q(z),\ \text{where}\ z=\sum_{\tau\in G}\tau(x).$ Moreover,
\begin{itemize}
\item[(1)] the pole place $\fP\in\PP_{F^G}$ of $z$ splits completely in $F/F^G$;
\item[(2)] the quadratic place $Q\in\mathbb{P}_{F}$ is the unique place that is totally ramified in $F/F^G$.
\end{itemize}
\end{lemma}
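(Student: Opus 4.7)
The plan is to first identify $F^G$ explicitly and then to analyze the two distinguished places separately. The element $z=\sum_{\tau\in G}\tau(x)$ satisfies $\tau_0(z)=\sum_{\tau\in G}(\tau_0\tau)(x)=z$ for every $\tau_0\in G$, so $z\in F^G$ and $\F_q(z)\subseteq F^G$; by Galois theory $[F:F^G]=|G|=n$. To force equality I would invoke L\"uroth's theorem (Lemma~\ref{lem:simple}): each summand $\tau(x)$ is a M\"obius function of $x$, so clearing denominators writes $z$ as a reduced rational function in $x$ whose numerator and denominator both have degree at most $n$, giving $[F:\F_q(z)]\le n$. Since only the $\tau=\mathrm{id}$ summand contributes a pole at $P_\infty$ (and there is no cancellation), $z$ is non-constant, so $[F:\F_q(z)]=n=[F:F^G]$ forces $F^G=\F_q(z)$.

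For part (1), I would use the fact (stated in the preamble above) that $\sigma$ acts as a $(q+1)$-cycle on the $q+1$ rational places of $F$. Writing $G=\langle\sigma^{(q+1)/n}\rangle$, the $G$-orbit of $P_\infty$ therefore consists of exactly $n$ distinct rational places $\{\tau(P_\infty):\tau\in G\}$. Each summand $\tau(x)$ has its unique pole at $\tau(P_\infty)$; since these $n$ poles are pairwise distinct there is no cancellation, so $z$ has a simple pole at each $\tau(P_\infty)$ and no pole elsewhere. These $n$ rational places all restrict to a single place $\fP$ of $F^G$, which is the pole of $z$; since $[F:F^G]=n$ and there are $n$ places of $F$ above $\fP$, $\fP$ must split completely.

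For part (2), the key is to show that every element of $G$ lies in the inertia group at $Q$. The fixed points of $\sigma(x)=-1/(bx+a)$ are the solutions of $bx^2+ax+1=0$, i.e., the two conjugate roots $\alpha,\alpha^q\in\F_{q^2}$ of $Q(x)=x^2+(a/b)x+1/b$; hence every $\tau\in G$ fixes $Q$ setwise and the decomposition group at $Q$ equals $G$. Reducing modulo $Q$ sends $x\mapsto\alpha$, and $\sigma(x)$ reduces to $-1/(b\alpha+a)=\alpha$, using $b\alpha+a=-1/\alpha$ from $Q(\alpha)=0$. Thus $\sigma$ acts trivially on the residue field $k(Q)\cong\F_{q^2}$, so $G$ coincides with the inertia group at $Q$; writing $\fQ=Q\cap F^G$, this gives $e(Q/\fQ)=n$ and $f(Q/\fQ)=1$, i.e., $Q$ is totally ramified. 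For uniqueness, $\gcd(n,p)=1$ since $p\mid q$ and $n\mid q+1$, so $F/F^G$ is tame; both fields have genus zero, so Riemann--Hurwitz yields $\deg(\mathrm{Diff}(F/F^G))=2n-2$, and the contribution of $Q$ alone is $(e(Q/\fQ)-1)\deg(Q)=2(n-1)=2n-2$, exhausting the different and leaving no other ramified place. The main subtlety I anticipate is this residue-field computation: it is what distinguishes total ramification from the alternative $e=n/2$, $f=2$, and it is the only step in which the precise normalization of $Q(x)$ (rather than the primitive polynomial $m(x)$) enters in an essential way.
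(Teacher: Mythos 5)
Your argument is correct, but there is nothing in the paper to compare it with: the paper does not prove this lemma, it simply imports it by citation from \cite{JMX19}, so your proposal is in effect supplying the omitted proof rather than paralleling one. As a standalone derivation it is sound and is the natural one for the non-split torus in $\PGL_2(q)$: the identification $F^G=\F_q(z)$ via the degree bound from Lemma~\ref{lem:simple} together with $\F_q(z)\subseteq F^G$ and $[F:F^G]=n$ is fine; part (1) works because the $(q+1)$-cycle action of $\sigma$ on $\PP_F^{(1)}$ (quoted just before the lemma) makes the $n$ poles of $z$ a single $G$-orbit of simple poles lying over the unique pole $\fP$ of $z$ in $\F_q(z)$, and the fundamental identity then forces complete splitting; and part (2) correctly reduces total ramification to the residue computation $b\alpha+a=-1/\alpha$, which shows the inertia group at $Q$ is all of $G$, with uniqueness (indeed the stronger fact that $Q$ is the only ramified place at all) delivered by the tame Riemann--Hurwitz count $2n-2=(n-1)\deg Q$. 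Two points you leave implicit and could state for completeness: that no nonidentity element of $G$ fixes $P_\infty$ (this is exactly the $(q+1)$-cycle statement, and it is what guarantees both that only the identity summand of $z$ has a pole at $P_\infty$ and that the $n$ poles are pairwise distinct), and that $Q(x)$ is irreducible over $\F_q$ so that $k(Q)\cong\F_{q^2}$ and $\deg Q=2$; the latter follows because its roots are the inverses of the roots of the irreducible primitive polynomial $m(x)$ (equivalently, a rational fixed point of $\sigma$ would contradict the $(q+1)$-cycle action). With these remarks added, the proof is complete and fully consistent with how the lemma is used in Section~\ref{sec:q+1}.
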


Assume $G=\langle \sigma^{(q+1)/n}\rangle$ is a $B$-smooth subgroup of order $n$, i.e., $|G|=n=\prod_{i=1}^r p_r$, where $p_1,\dots,p_r\leq B$. Then there is an ascending chain of subgroups:
\[G_0=\{1\},\ G_{i-1}\leq G_{i}\ \text{with\ index}\ [G_i: G_{i-1}]=p_i,\ i=1, \dots, r.\]
Let $F_i=F^{G_{i}}$ be the fixed subfield. By the same analysis as in Subsection~\ref{sec:affine1}, we have $F_i\subsetneq F_{i-1}$ resulting a tower of fields
\[F_0=\F_q(x),\ F_{i-1}\supsetneq F_{i}\ \text{with\ extension\ degree}\ [F_{i-1}: F_{i}]=p_i,\ i=1, \dots, r.\]
Define
\begin{equation}\label{eq:defxy}
x_i=\sum_{\tau\in G_i} \tau(x),\ y_i=\prod_{\tau\in G_i} \tau\left(\frac{1}{Q(x)}\right).
\end{equation}
Then $F_i=\F_q(x_i)$ by taking $G=G_i$ in Lemma~\ref{lem:q+1subf}. Let $E_i=\F_q(y_i)$. It is clear that $E_i\subsetneq F_i$ according to the definition of $F_i$. Furthermore, since $Q$ is totally ramified in $F/F^G$, $\tau(Q)=Q$ for all $\tau\in G$ \cite[Theorem 3.8.2]{sti09}. Thus the pole divisor of $y_i$ is $(y_i)_{\infty}=\sum_{\tau\in G_i}\tau(Q)=|G_i|Q$. By \cite[Theroem 1.4.11]{sti09},
\[ [F: E_i]=\deg (y_i)_{\infty}=2|G_i|.\]
Thus $[F_i: E_i]=[F:E_i]/[F:F_i]=2$. We have the following diagram
\[\begin{tikzcd}[row sep=tiny, column sep = large]
	\F_q(x) \ar[dd, dash,"p_1"]\ar[dddr, dash, "2p_1"]  &        \\
	                              &                 \\
	\F_q(x_1)\ar[dd, dash,"p_2"]\ar[rd, dash, "2"]\ar[dddr, dash, "2p_2"]         &          \\
	                              &         \F_q(y_1)    \\
	\F_q(x_2)\ar[dd, dash, "p_3"]      \ar[rd, dash, "2"]  &           \\
	                              &          \F_q(y_2)     \\
	 \vdots\ar[dd, dash, "p_r"]      \ar[dddr, dash, "2p_r"]                                  &           \\
	                              &         \vdots           \\
	  \F_q(x_r) \ar[rd, dash, "2"]                                     &           \\
	                              &  \F_q(y_r).                                             	
\end{tikzcd}\]

Let $P_{i,\infty}\in\PP_{F_i}$ denote the pole place of $x_i$ for $0\leq i\leq r$. According to Lemma~\ref{lem:q+1subf}, we know that $P_{i,\infty}$ splits completely in $F/F_i$. In particular, let $\calP\subset\PP_F$ be the set of places lying over $P_{r,\infty}$. Then $|\calP|=n$. Let $\calP^{(i)}=\calP\cap F_i$ be the set of all rational places in $\PP_{F_i}$ lying above $P_{r,\infty}$. Then $|\calP^{(i)}|=n/{|G_i|}$. By the definition of $x_i$, $P_{0,\infty}\mid P_{i,\infty}$ for every $i$, thus $P_{i,\infty}\in \calP^{(i)}$. Define
\begin{equation}\label{eq:calQ_i}
\hcalP^{(i)}=\calP^{(i)}\setminus \{P_{i,\infty}\},\ i=0,1,\ldots,r-1.
\end{equation}
Since every rational place $P\in \hcalP^{(i)}$ is the zero of $x_i-\alpha$ for some $\alpha\in \F_q$, we identify $P$ with $\alpha$ in this correspondence. Thus, $\hcalP^{(i)}$ can be viewed as a subset of $\F_q$ for all $i$. In particular, $\calP^{(0)}=\calP$ and $\hcalP=\calP\setminus\{P_{\infty}\}$.

The following lemma presents the relationships between $x_{i}, y_{i}, x_{i-1}, y_{i-1}$, which is crucial for us to construct a basis of $\F_q[x]_{<n}$ and perform the G-FFT for any $f\in\F_q[x]_{<n}$.

\begin{lemma}\label{lem:key lemma}
For $1\leq i\leq r$, let $x_i$, $y_i$ be defined as in Equation~\eqref{eq:defxy}. Let $Q_i=Q\cap F_i$ be the quadratic place of $F_i$ lying below $Q$. Then, the followings hold:
\begin{itemize}
\item[(1)] $1/{y_i}$ is a prime element of $Q_i$, i.e., $1/y_i={Q_i(x_i)}$, where $Q_i(x_i)$ is a quadratic irreducible polynomial corresponding to $Q_i$.
\item[(2)] The pole place $P_{i,\infty}$ of $x_i$ splits into $p_i$ rational places including $P_{i-1,\infty}$ in $F_{i-1}$; namely, $x_i=\frac{u_{i, i-1}(x_{i-1})}{(x_{i-1}-\lambda_{i,1})\cdots (x_{i-1}-\lambda_{i,|p_i|-1})}$ for some polynomial $u_{i, i-1}(T) \in \F_q[T]$ of degree $p_i$, and $\lambda_{i,1}, \dots, \lambda_{i,|p_i|-1}\in \F_q$ are pairwise distinct.
\item[(3)] $y_i=c_iy_{i-1}^{p_i}\cdot(x_{i-1}-\lambda_{i,1})^2\cdots (x_{i-1}-\lambda_{i,|p_i|-1})^2$, where $c_i\in\F_q^*$ and $\lambda_{i,1}, \dots, \lambda_{i,|p_i|-1}$ are given in (2).
\item[(4)] $x_r=\frac{u_{r,i}(x_i)}{\prod_{\alpha\in\hcalP^{(i)}}(x_i-\alpha)}$ for some $u_{r,i}(T)\in\F_q[T]$ of degree $n/|G_i|$, and $y_r=\frac{\prod_{\alpha\in\hcalP^{(i)}}(x_i-\alpha)^2}{c_{r,i}Q_i^{n/|G_i|}(x_i)}$, where $c_{r,i}\in\F_q^*$ and $\hcalP^{(i)}$ is defined by Equation~\eqref{eq:calQ_i}.
\end{itemize}
\end{lemma}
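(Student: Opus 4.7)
My plan is to prove the four parts in sequence by divisor bookkeeping in the tower $F\supset F_{i-1}\supset F_i\supset F_r$, using the two structural facts from Lemma~\ref{lem:q+1subf}: $Q$ is the unique place that is totally ramified in $F/F_r$, and the pole $P_\infty$ of $x$ splits completely in $F/F_r$. From these two facts alone, complete splitting descends to every intermediate extension (so $P_{i,\infty}$ has exactly $n/|G_i|$ rational preimages in $F$, and exactly $p_i$ in $F_{i-1}$), while $e(Q\mid Q_i)=|G_i|$ and $e(Q_{i-1}\mid Q_i)=p_i$. These ramification/splitting data are the only inputs I will use.

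For part~(1), I compute the principal divisor of $y_i$ in $F$. Because $\tau(Q)=Q$ for every $\tau\in G$, each factor $\tau(1/Q(x))$ has divisor $2\tau(P_\infty)-Q$, so $(y_i)_F=2\sum_{\tau\in G_i}\tau(P_\infty)-|G_i|\,Q$. Pushing down to $F_i$ via the ramification indices gives $(1/y_i)_{F_i}=Q_i-2P_{i,\infty}$. Thus $1/y_i\in\F_q(x_i)$ has a unique pole (of order $2$ at infinity) and a unique zero (simple at $Q_i$), so it must be an irreducible quadratic polynomial in $x_i$, which I take as $Q_i(x_i)$; since $\nu_{Q_i}(1/y_i)=1$ this is automatically a uniformizer. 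For part~(2), complete splitting of $P_\infty$ descends to $F_{i-1}/F_i$, so $P_{i,\infty}$ splits into $p_i$ unramified rational places of $F_{i-1}$. One of them is $P_{i-1,\infty}$ (because $\nu_{P_{i-1,\infty}}(x_i)=-1$ by construction) and the remaining $p_i-1$ correspond to elements $\lambda_{i,j}\in\F_q$. Reading the conorm of $(x_i)_\infty=P_{i,\infty}$ from $F_i$ to $F_{i-1}$ gives the simple pole divisor $\sum_{j=0}^{p_i-1}R_j$ of $x_i$, which translates directly to the claimed rational expression with $\deg u_{i,i-1}=p_i$.

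For part~(3), I compare divisors in $F_{i-1}$. The divisor of $y_i\in F_i\subset F_{i-1}$ is, by conorm from $F_i$ using the ramification data above, $2\sum_{j=0}^{p_i-1}R_j-p_iQ_{i-1}$, with $R_0=P_{i-1,\infty}$ and $R_j$ the zero of $x_{i-1}-\lambda_{i,j}$. On the other hand, from part~(1) applied to $i-1$, $(y_{i-1}^{p_i})_{F_{i-1}}=2p_iP_{i-1,\infty}-p_iQ_{i-1}$, and each $x_{i-1}-\lambda_{i,j}$ has divisor $R_j-P_{i-1,\infty}$; a short cancellation matches the two divisors, so the functions differ by a constant $c_i\in\F_q^*$. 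Part~(4) proceeds in exactly the same style but applied directly between $F_r$ and $F_i$: the pole divisor of $x_r$ in $F_i$ is $\sum_{R\in\calP^{(i)}}R$ with the term $R=P_{i,\infty}$ accounting for the infinite pole, and the divisor of $y_r=1/Q_r(x_r)$ conormed to $F_i$ is $2\sum_{R\in\calP^{(i)}}R-(n/|G_i|)Q_i$; reading these as rational functions in $x_i$ yields the two stated formulas with $\deg u_{r,i}=n/|G_i|$.

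I expect the main obstacle to be the careful bookkeeping in part~(3), where one must balance three simultaneous contributions in $F_{i-1}$: a totally ramified pole at $Q_{i-1}$, a completely split orbit at the $R_j$, and the fact that $P_{i-1,\infty}=R_0$ appears both as a zero of $y_{i-1}^{p_i}$ (with multiplicity $2p_i$) and as the common pole of the $p_i-1$ linear factors $x_{i-1}-\lambda_{i,j}$. Once these cancellations are tracked correctly, everything else is the divisor-to-polynomial dictionary in $\F_q(x_i)$, which determines each function up to a nonzero constant $c_i$ or $c_{r,i}\in\F_q^*$.
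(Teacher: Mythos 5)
Your proposal is correct and follows essentially the same route as the paper: compute the principal divisors of $x_i$ and $y_i$ from the $G_i$-orbit of $P_\infty$, the total ramification of $Q$, and the complete splitting of the infinite place, then read each identity off the divisor in the rational function field $\F_q(x_i)$ up to a constant (the paper carries out the $F_{i-1}$-level step via the cyclic action of $G_i/G_{i-1}$ on the places over $P_{i,\infty}$ rather than your conorm-and-compare, but the bookkeeping is the same). One small slip: $P_{i,\infty}$ has $[F:F_i]=|G_i|$ (not $n/|G_i|$) rational preimages in $F$ — the number $n/|G_i|$ is $|\calP^{(i)}|$, the count of places of $F_i$ over $P_{r,\infty}$ — but this figure is never used in your argument, so nothing breaks.
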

\begin{proof}
Let $\tau_i=\sigma^{(q+1)/|G_i|}$. Then $\tau_i$ is a generator of $G_i$, i.e., $G_i=\langle \tau_i\rangle$. Since $\sigma$ acts as a $(q+1)$-cycle on the set $\mathbb{P}_{\F_q(x)}^{(1)}$, then, as a permutation of $\mathbb{P}_{\F_q(x)}^{(1)}$, $\tau_i$ can be decomposed as a product of $(q+1)/|G_i|$ cycles of length $|G_i|$. Assume the cycle which contains $P_{0,\infty}=P_{\infty}$ is as follows
\begin{equation}\label{eq:cycle}
P_{\infty}\stackrel{\tau_i}{\longrightarrow}P_{\alpha_{i, 1}}\stackrel{\tau_i}{\longrightarrow}\dots \stackrel{\tau_i}{\longrightarrow}P_{\alpha_{i, {|G_i|-1}}}\stackrel{\tau_i}{\longrightarrow}P_{\infty},
\end{equation}
where $\alpha_{i,1},\dots,\alpha_{i,|G_i|-1}\in \F_q$ are pairwise distinct. Thus
\begin{equation}\label{eq:usfx}
x_i=x+\tau_i(x)+\cdots \tau_i^{|G_i|-1}(x)=\frac{u_i(x)}{(x-\alpha_{i,1})\cdots (x-\alpha_{i,|G_i|-1})},
\end{equation}
for some $u_i(x) \in \F_q[x]$. Consider the principal divisor $\ddiv(1/Q(x))=2P_{\infty}-Q$. Since $Q$ is totally ramified, $\tau_i^{j}(Q)=Q$ for all $\tau_i^j\in G_i$ by \cite[Theorem 3.8.2]{sti09}. Then \[\nu_{Q}\left( \tau_i^{j}(1/Q(x))\right)=\nu_{\tau_i^{-j}(Q)}(1/Q(x))=\nu_{Q}(1/Q(x))=-1.\]
 According to equation~\eqref{eq:cycle},
\[\ddiv(y_i)=\sum_{j=0}^{|G_i|-1}\ddiv(\left( \tau_i^{j}(1/Q(x))\right)=2P_{\infty}+\sum_{j=1}^{|G_i|-1}2P_{\alpha_{i, j}}-|G_i|Q.\]
Thus, we have
\begin{equation}\label{eq:usfy}
y_i=c_i'\cdot\frac{(x-\alpha_{i,1})^2\cdots (x-\alpha_{i,|G_i|-1})^2}{Q^{p_1\cdots p_i}(x)},\ \text{for\ some} \ c_i'\in\F_q^*.
\end{equation}
Let $P_{i, \infty}\in \mathbb{P}_{F_i}$ be the pole of $x_i$ for $0\leq i\leq r$. By Equations \eqref{eq:usfx} and \eqref{eq:usfy}, $P_{i, \infty}$ is a double zero of $y_i$ and $Q_i$ is the unique pole of $y_i$ in $F_i$. Furthermore,
\[\nu_{Q_i}(y_i)=\nu_Q(y_i)/e(Q\mid Q_i)=-1.\]
Thus $1/y_i$ must be a prime element of $Q_i$, denoted by $Q_i(x_i)$. Then $y_i=1/Q_i(x_i)$. This completes the proof of (1).

(2) For the relationship between $x_i$ and $x_{i-1}$, we note that
\begin{equation}\label{eq:i&i-1}
x_i=\sum_{\bar{\tau}\in G_{i}/G_{i-1}} \bar{\tau}(x_{i-1}).
\end{equation}
By Equation~\eqref{eq:usfx}, the pole $P_{i, \infty}$ of $x_i$ splits completely in the extension $F/F_i$. Thus $P_{i, \infty}$ splits into $p_i$ places of $F_{i-1}$. It is obvious to see that $P_{i-1,\infty}\mid P_{i,\infty}$ from equation~\eqref{eq:i&i-1}. Assume $P_{i-1, \infty}, P_{\lambda_{i, 1}},\dots, P_{\lambda_{i, p_i-1}}$ are $p_i$ places of $F_{i-1}$ lying above $P_{i, \infty}$. Then $\bar{\tau}_i$, as an automorphism of $F_{i-1}$, permutes these $p_i$ places. Without loss of generality, assume
\begin{equation}\label{eq:cycle2}
P_{i-1, \infty}\stackrel{\bar{\tau}_i}{\longrightarrow}P_{\lambda_{i, 1}}\stackrel{\bar{\tau}_i}{\longrightarrow}\dots \stackrel{\bar{\tau}_i}{\longrightarrow}P_{\lambda_{i, {p_i-1}}}\stackrel{\bar{\tau}_i}{\longrightarrow}P_{i-1, \infty},\end{equation}
Then we have
\[x_i=\frac{u_{i, i-1}(x_{i-1})}{(x_{i-1}-\lambda_{i,1})\cdots (x_{i-1}-\lambda_{i,|p_i|-1})},\]
for some polynomial $u_{i, i-1}(T) \in \F_q[T]$. As $[\F_q(x_{i-1}): \F_q(x_i)]=p_i$, by Lemma~\ref{lem:simple}, we then have $\deg u_{i,i-1}=p_i$. This completes the proof of (2).

(3) For the relationship between $y_i$ and $x_{i-1}$, we note that
\[\begin{split}
y_i=\prod_{\bar{\tau}\in G_{i}/G_{i-1}} \bar{\tau}(y_{i-1})=\prod_{\bar{\tau}\in G_{i}/G_{i-1}} \bar{\tau}(1/Q_{i-1}(x_{i-1}))\end{split}.\]
The second equality in the above display follows from (1). 
Consider the principal divisor $\ddiv(1/(Q_{i-1}(x_{i-1})))=2P_{{i-1}, \infty}-Q_{i-1}$ in $F_i$. By Equation~\eqref{eq:cycle2},
\[\ddiv(y_i)=\sum_{j=0}^{p_i-1}\ddiv\left( \bar{\tau}_i^{j}(1/Q_{i-1}(x_{i-1}))\right)=2P_{i,\infty}+\sum_{j=0}^{p_i-1}(2P_{\lambda_{i, j}})-p_iQ_{i-1}.\]
Thus $y_i=c_i\cdot\frac{(x_{i-1}-\lambda_{i,1})^2\cdots (x_{i-1}-\lambda_{i,|p_i|-1})^2}{Q_{i-1}(x_{i-1})^{p_i}}=c_i\cdot y_{i-1}^{p_i}\cdot(x_{i-1}-\lambda_{i,1})^2\cdots (x_{i-1}-\lambda_{i,|p_i|-1})^2$ for some $c_i\in\F_q^*$. 

(4) can be proved similarly by substituting $i$ with $r$ in the proof of (2) and (3). We do not repeat it here.
\end{proof}

\begin{remark}
If $n=q+1=2^r$, by Equations \eqref{eq:usfx} and \eqref{eq:usfy} in the proof of Lemma~\ref{lem:key lemma}, then
\begin{equation}\label{eq:xyr}
\begin{split}
&x_r=\frac{u_{r,0}(x)}{x^q-x}\ \text{for\ some}\ u_{r,0}(x)\in\F_q[x]\ \text{with}\ \deg(u_{r,0}(x))=q+1,\\
&y_r=\frac{(x^q-x)^2}{c_{r,0}Q^{q+1}(x)}\ \text{for\ some}\ c_{r,0}\in\F_q^*.
\end{split}
\end{equation}
\end{remark}

In the following, we will first construct a basis $\cB$ of the Riemann-Roch space $\cL(nQ)$.
For simplicity, we define the following notions:
\begin{align}\label{eq: defzk}
\begin{split}
&\text{for}\ 0\leq i\leq r-1,\ z_i^{(e_i)}:=\begin{cases}
1,\ &\text{if}\ e_i=0,\\
 \frac 1{(x_{i}-\lambda_{i+1,1})(x_{i}-\lambda_{i+1,2})\cdots(x_{i}-\lambda_{i+1, e_i})},\ &\text{if}\ 1\leq e_i\leq p_{i+1}-1,\end{cases}\\
 &z_r^{(0)}:=1,\ z_r^{(1)}:=x_r,\ \text{and}\ p_{r+1}:=2.\\
 \end{split}
\end{align}
where all $(x_{i}-\lambda_{i+1, e_i})$ are given in Lemma~\ref{lem:key lemma}(2) for $0\leq i\leq r-1$ and $e_i\in\Z_{p_{i+1}}$.

\begin{lemma}\label{lem:basisq+1}
We continue the notations in Lemma~\ref{lem:key lemma}. Let $z_i^{(e_i)}$ be defined as in Equation~\eqref{eq: defzk} for $i=0, 1, \ldots, r$ and $e_i\in \Z_{p_{i+1}}$. Then
\[
\cB=\left\{z_{0}^{(e_{0})}z_{1}^{(e_1)}\cdots z_{r-1}^{(e_{r-1})}z_r^{(e_r)}y_r \mid \be=(e_0,e_1,\ldots,e_{r-1}, e_{r})\in\prod_{i=1}^{r+1}\Z_{p_i}\right\}\cup \left\{1\right\}
\]
is a basis of the Riemann-Roch space $\cL\left(nQ\right)$.
\end{lemma}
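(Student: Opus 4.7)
The plan is to verify three statements: (a) $|\cB|=2n+1$, (b) $\cB\subset\cL(nQ)$, and (c) $\cB$ is $\F_q$-linearly independent. Since $F$ has genus zero and $\deg Q=2$, one has $\ell(nQ)=2n+1$, so (a)--(c) together identify $\cB$ as a basis of $\cL(nQ)$. The cardinality (a) is immediate: $\prod_{i=1}^{r+1}p_i=n\cdot 2=2n$, and adding $\{1\}$ gives $2n+1$.

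For (b), the constant $1$ is trivially in $\cL(nQ)$, so I focus on a typical $\bX_{\be}=z_0^{(e_0)}\cdots z_{r-1}^{(e_{r-1})}z_r^{(e_r)}y_r$. By Lemma~\ref{lem:key lemma}(4) the principal divisor $(y_r)=2\sum_{P\in\calP}P-nQ$ gives a double zero at every $P\in\calP$ and a pole of order exactly $n$ at $Q$. Each $z_i^{(e_i)}$ with $i<r$ has at most simple poles at the rational places of $F$ lying over $P_{\lambda_{i+1,1}},\ldots,P_{\lambda_{i+1,e_i}}\in\calP^{(i)}$, and $z_r^{(e_r)}$ contributes at most a simple pole at each point of $\calP$; the double zero of $y_r$ absorbs all of these. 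Outside $\calP\cup\{Q\}$ no factor has a pole, and at $Q$ every factor other than $y_r$ is a unit (because $Q$ is quadratic, each $\lambda_{i+1,j}\in\F_q$, and $\nu_{Q_r}(x_r)=0$). Hence $\nu_Q(\bX_{\be})=-n$ and $\nu_P(\bX_{\be})\ge 0$ at every other $P$, so $\bX_{\be}\in\cL(nQ)$.

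The heart of the proof is (c). For $P\in\calP$ let $i^*(P)$ be the largest $i<r$ with $P^{(i)}\neq P_{i,\infty}$ (set $i^*(P_\infty)=-1$) and, when $i^*(P)\ge 0$, let $j^*(P)$ be determined by $P^{(i^*(P))}=P_{\lambda_{i^*(P)+1,j^*(P)}}$. A direct factor-by-factor valuation count using Lemma~\ref{lem:key lemma} yields
\[
\nu_P(\bX_{\be}) = \sum_{i>i^*(P)}e_i - \mathbf{1}\{j^*(P)\le e_{i^*(P)}\} - e_r + 2,
\]
with the indicator taken to be $0$ when $i^*(P)=-1$. Two immediate consequences: (i) every $\bX_\be\in\cB\setminus\{1\}$ satisfies $\nu_{P_\infty}(\bX_{\be})\ge 1$, so evaluating a purported relation $\sum c_{\be}\bX_{\be}+c_*\cdot 1 = 0$ at $P_\infty$ forces $c_*=0$; and (ii) every $\bX_{\be}$ with $e_r=0$ vanishes at all of $\calP$, so evaluations at $\calP$ only constrain the $e_r=1$ summands.

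The main obstacle is splitting the remaining relation into its $e_r=1$ and $e_r=0$ parts. I would first show that the $n\times n$ evaluation matrix $\bigl(\bX_{\be}(P)\bigr)_{P\in\calP,\,\be:\,e_r=1}$ is nonsingular. For each $\be$ with $e_r=1$, condition (iii) guarantees the existence of $P_{\be}\in\calP$ with $(i^*(P_{\be}),j^*(P_{\be}))=(i_{\max}(\be),\,e_{i_{\max}(\be)})$, where $i_{\max}(\be)=\max\{i<r:e_i\neq 0\}$; the valuation formula then exhibits a block-triangular pattern under the lex order on $(i_{\max},e_{i_{\max}})$, and nondegeneracy within each block reduces to a Vandermonde-type determinant coming from the free lower-level branching of the places in $\calP$ sharing the same $(i^*,j^*)$. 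Once all $c_{\be}$ with $e_r=1$ are killed, the residual relation becomes $y_r\sum_{\bbe}c_{(\bbe,0)}z_0^{(e_0)}\cdots z_{r-1}^{(e_{r-1})}=0$; dividing by $y_r$ leaves a relation among $\{z_0^{(e_0)}\cdots z_{r-1}^{(e_{r-1})}\}_{\bbe}$, which I would establish as an $F_r$-basis of $F$ via a partial-fraction decomposition along the tower $F_0\supset F_1\supset\cdots\supset F_r$, forcing the remaining $c_{(\bbe,0)}=0$.
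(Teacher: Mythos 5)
Your skeleton (cardinality, containment, linear independence, plus $\ell(nQ)=2n+1$) is the right one, and your valuation formula at the places of $\calP$ is correct as stated; parts (a) and (b) are fine (the containment in (b) tacitly needs that at each $P\in\calP$ at most one level $i<r$ contributes a pole, but that is exactly what your factor-by-factor count shows, so it is not the issue). The genuine gap is the heart of (c): the $n\times n$ evaluation matrix $\bigl(\bX_{\be}(P)\bigr)_{P\in\calP,\,\be:\,e_r=1}$ that you propose to prove nonsingular is in fact singular, so no block-triangular or Vandermonde refinement of those data can kill all the $e_r=1$ coefficients. First, $P_\infty\in\calP$ (the paper notes $P_{i,\infty}\in\calP^{(i)}$ for every $i$), and by your own formula $\nu_{P_\infty}(\bX_{\be})=\sum_{i<r}e_i-e_r+2\ge 1$ for every $\be$, so the row indexed by $P_\infty$ is identically zero — this is precisely your consequence (i), which contradicts nonsingularity. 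Second, the column indexed by $\bbe=(0,\dots,0)$, $e_r=1$, i.e.\ the function $x_ry_r$, is also identically zero, since $\nu_P(x_ry_r)=-1+2=1>0$ at every $P\in\calP$. Equivalently: by Lemma~\ref{lem:subbasisq+1} the $e_r=1$ elements span $\frac{u_{r,0}(x)}{Q^n(x)}\F_q[x]_{<n}$, all of whose members vanish at $P_\infty$, and $x_ry_r=\frac{u_{r,0}(x)\prod_{\alpha\in\hcalP}(x-\alpha)}{c_{r,0}Q^n(x)}$ vanishes on all of $\hcalP$ as well; so evaluation at $\calP$ provides at most $n-1$ independent conditions on the $n$ coefficients, and the coefficient of $x_ry_r$ is invisible to them.

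The independence has to be settled algebraically, and the correct argument is essentially a strengthening of your own final step. Prove first, by the partial-fraction/triangularity argument you sketch for the residual relation, that $\{z_0^{(e_0)}\cdots z_{r-1}^{(e_{r-1})}\}_{\bbe}$ is an $F_r$-basis of $F$ (multiply by $\prod_{k\le p_{i+1}-1}(x_i-\lambda_{i+1,k})$ and compare against the $F_{i+1}$-basis $1,x_i,\dots,x_i^{p_{i+1}-1}$ of $F_i$, level by level). Then a purported relation $c_*\cdot 1+\sum_{\bbe}\sum_{e_r\in\{0,1\}} c_{\bbe,e_r}\bX_{(\bbe,e_r)}=0$ groups as $\sum_{\bbe}\bigl(c_*\delta_{\bbe,\underline{0}}+(c_{\bbe,0}+c_{\bbe,1}x_r)\,y_r\bigr)\,z_0^{(e_0)}\cdots z_{r-1}^{(e_{r-1})}=0$ with coefficients in $F_r$, so each coefficient vanishes; since $1,x_r$ are linearly independent over $\F_q(y_r)$ (because $[F_r:\F_q(y_r)]=2$) and $y_r$ is transcendental over $\F_q$, all $c$'s are zero. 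This is in substance what the paper does: its proof never evaluates at $\calP$, but multiplies the relation by $(x_{m-1}-\lambda_{m,1})\cdots(x_{m-1}-\lambda_{m,p_m-1})$, compares coefficients in powers of $x_{m-1}$ over $F_m$, and then uses $[\F_q(x_m):\F_q(y_m)]=2$ together with the transcendence of $y_m$, assembling the global basis recursively through the intermediate spaces $\cL\bigl(\frac{n}{p_1\cdots p_{m-1}}Q_{m-1}\bigr)$. With your middle step replaced by such an algebraic argument, the rest of your outline goes through.
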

\begin{proof}
For $m=1, 2,\dots, r$, define the set $\cB_{m-1}$ as follows
\begin{equation}\label{eq:mbasis}
\cB_{m-1}=\underbrace{\{y_m, y_m^2,\dots, y_m^{\frac n{p_1\cdots p_{m}}}\}\cdot\{1, x_{m}\}}_\text{$\cB_{m-1,m}$}\cdot\underbrace{\{1,z_{m-1}^{(1)},\ldots,z_{m-1}^{(p_{m}-1)}\}}_\text{$\cB_{m-1,m-1}$},
\end{equation}
where the product ``$\cdot$" of sets means a set consisting of all possible products of elements
from these three sets. Then $|\cB_{m-1}| = 2n/(p_1\cdots p_{m-1})$. Let $\cB_{m-1,m}$ and $\cB_{m-1,m-1}$ be two sets defined in Equation~\eqref{eq:mbasis}. We claim that
 \begin{itemize}
\item[(i)] $\cB_{m-1}\cup \{1\}$ is a basis of the Riemann-Roch space $\cL\left(\frac{n}{p_1\cdots p_{m-1}}Q_{m-1}\right)$.
\item[(ii)] $\cB_{m-1,m}\cup\{1\}$ is a basis of the Riemann-Roch space $\cL\left(\frac{n}{p_1\cdots p_{m}}Q_{m}\right)$.
\end{itemize}
Then $\cB_{m}$ and $\cB_{m-1,m}$ are linearly equivalent and have same cardinality. By substituting the subset $\cB_{m-1,m}$ with $\cB_{m}$ in $\cB_{m-1}$, we get a new basis $\cB_{m-1}'\cup\{1\}$ of $\cL\left(\frac{n}{p_1\cdots p_{m-1}}Q_{m-1}\right)$, where $\cB_{m-1}'=\cB_{m}\cdot\cB_{m-1, m-1}$. In particular, for $m=0$, $\cB_0\cup\{1\}=(\cB_{0, 1}\cdot\cB_{0, 0})\cup\{1\}$ is a basis of $\cL\left(nQ\right)$. By substituting $\cB_{0, 1}$ with $\cB_{1}$, then we have $(\cB_{1}\cdot\cB_{0, 0})\cup\{1\}$ is also a basis of $\cL(nQ)$. Continuing in this fashion till to $m={r-1}$, then we get a basis of $\cL(nQ)$, i.e.,
\[\begin{split}\cB&=\cB_{r}\cdot \cB_{r-1,r-1}\cdots\cB_{1,1}\cdot\cB_{0,0}\cup\{1\}\\
=\{y_r, &x_ry_r\} \cdot\{1,z_{r-1}^{(1)},\ldots,z_{r-1}^{(p_r-1)}\}\cdots \{1,z_{0}^{(1)},\ldots,z_{0}^{(p_1-1)}\} \cup\{1\}.
\end{split}\]
Therefore, it suffices to prove the above claims (i) and (ii).

(i) For every element $y_m^ix_m^jz_{m-1}^{(e_{m-1})}$ in $\cB_{m-1}$, where $1\leq i\leq n/{p_1\cdots p_m}$, $j=0,1$ and $e_{m-1}\in\Z_{p_m}$, by Lemma~\ref{lem:key lemma}, we have
\[\nu_{Q_{m-1}}(y_m^ix_m^jz_{m-1}^{(e_{m-1})})=\nu_{Q_{m-1}}\left(y_m^i\right)=-i\cdot p_m\geq -n/(p_1\cdots p_{m-1}),\]
and it has no other poles except for $Q_{m-1}$. Thus $\cB_{m-1}\cup \{1\}\subsetneq\cL\left(\frac{n}{p_1\cdots p_{m-1}}Q_{m-1}\right)$. Assume
\begin{equation}\label{eq:usfi}
a+\sum_{i=1}^{n/(p_1\cdots p_m)}\sum_{j=0}^{1}\sum_{e=0}^{p_m-1} a_{i,j,e} \cdot y_m^ix_m^jz_{m-1}^{(e)}=0
\end{equation}
where all $a_{i,j,e}\in \F_q$ and $a\in \F_q$. Multiplying by $(x_{m-1}-\lambda_{m, 1})\cdots (x_{m-1}-\lambda_{m, p_m-1})$ on both sides of Equation~\eqref{eq:usfi}, we have
\begin{equation}\label{eq:usfii}
\begin{split}
&\left(a+\sum_{i=1}^{n/(p_1\cdots p_m)}\sum_{j=0}^{1} a_{i,j,0} \cdot y_m^ix_m^j\right)(x_{m-1}-\lambda_{m, 1})\cdots (x_{m-1}-\lambda_{m, p_m-1})\\
&+\sum_{i=1}^{n/(p_1\cdots p_m)}\sum_{j=0}^{1}\sum_{e>0}a_{i,j,e}y_m^ix_m^j(x_{m-1}-\lambda_{m, e+1})\cdots (x_{m-1}-\lambda_{m, p_m-1})=0\end{split}
\end{equation}
Since $[F_{m-1}: F_m]=p_m$, we have the coefficients of $x_{m-1}^{k}$ for $0\leq k\leq p_m-1$ are all zero; namely, $a+\sum_{i,j} a_{i,j,0} \cdot y_m^ix_m^j=0$ and $\sum_{i,j} a_{i,j,e} \cdot y_m^ix_m^j=0$ for every $0<e<p_m$.  As $[\F_q(x_m): \F_q(y_m)]=2$, we then have $a=0$ and $\sum_{i} a_{i,j,e} \cdot y_m^i=0$ for all $0\leq e\leq p_m-1$ and $j=0, 1$. As $y_m$ is transcendental over $\F_q$, we have all the coefficients $a_{i,j,e}=0$. Therefore, the elements in $\cB_{m-1}$ are linearly independent. By the Riemann-Roch theorem \cite{sti09}, $\cL\left(\frac{n}{p_1\cdots p_{m-1}}Q_{m-1}\right)$ has dimension:
\[\ell\left(\frac{n}{p_1\cdots p_{m-1}}Q_{m-1}\right)=2n/(p_1\cdots p_{m-1})+1=|\cB_{m-1}\cup \{1\}|.\]
 Thus, $\cB_{m-1}\cup \{1\}$ is a basis of $\cL\left(\frac{n}{p_1\cdots p_{m-1}}Q_{m-1}\right)$.

(ii) For every element $y_m^ix_m^j$ in the set $\cB_{m-1,m}=\{y_m, y_m^2,\dots, y_m^{\frac n{p_1\cdots p_{m}}}\}\times\{1, x_{m}\}$, by Lemma~\ref{lem:key lemma}, we have \[\nu_{Q_{m}}\left( y_m^ix_m^j \right)=\nu_{Q_{m}}( y_m^i)=-i\geq- n/(p_1\cdots p_m).\]
Thus $\cB_{m-1,m}\subsetneq\cL\left(\frac{n}{p_1\cdots p_{m}}Q_{m}\right)$. We have shown in (i) that $\cB_{m-1, m}\cup \{1\}$ is a linearly independent set. By the Riemann-Roch theorem \cite{sti09}, $\ell(\frac{n}{p_1\cdots p_{m}}Q_{m})=2n/{(p_1\cdots p_m)}+1=|\cB_{m-1,m}\cup\{1\}|$. Hence $\cB_{m-1,m}\cup\{1\}$ is also basis of $\cL\left(\frac{n}{p_1\cdots p_{m}}Q_{m}\right)$.
\end{proof}

Next, we take a subset of $\cB$ to construct a basis of $\F_q[x]_{<n}$.
\begin{lemma}\label{lem:subbasisq+1}
Let $z_i^{(e_i)}$ be defined as in Equation~\eqref{eq: defzk} for $0\leq i\leq r$ and $e_i\in \Z_{p_{i+1}}$. Let $x_r=\frac{u_{r,0}(x)}{\prod_{\alpha\in\hcalP}(x-\alpha)}$ by Lemma~\ref{lem:key lemma} for $u_{r,0}(x)\in\F_q[x]$ of degree $n$. Then
\[\tcB=\{z_0^{(e_0)}z_1^{(e_1)}\cdots z_{r-1}^{(e_{r-1})}x_ry_r \mid \be=(e_0,e_1,\ldots,e_{r-1})\in\Z_{p_1}\times\Z_{p_2}\times\ldots\times\Z_{p_r}\}\]
is a basis of $\frac {u_{r,0}(x)}{Q^n(x)}\F_q[x]_{<n}$. In other words, 
$\frac{Q^n(x)}{u_{r,0}(x)}\tcB$ is a basis of $\F_q[x]_{<n}$.
\end{lemma}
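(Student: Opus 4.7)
The plan is a dimension count combined with a divisor-theoretic check. Observe that $\tcB$ is exactly the ``$e_r = 1$ slice'' of the basis $\cB$ from Lemma~\ref{lem:basisq+1} (corresponding to $z_r^{(1)} = x_r$), so $\tcB$ is automatically $\F_q$-linearly independent. Its cardinality is $p_1\cdots p_r = n$, which matches $\dim_{\F_q}\!\bigl(\tfrac{u_{r,0}}{Q^n}\F_q[x]_{<n}\bigr) = n$ since multiplication by the nonzero rational function $u_{r,0}/Q^n$ is an $\F_q$-linear isomorphism from $\F_q[x]_{<n}$. Therefore, the lemma reduces to establishing the containment $\tcB \subseteq \tfrac{u_{r,0}}{Q^n}\F_q[x]_{<n}$.

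For any $g = \prod_{i=0}^{r-1} z_i^{(e_i)} \cdot x_r y_r \in \tcB$, substitute Lemma~\ref{lem:key lemma}(4) with $i=0$, which gives $x_r y_r = \tfrac{u_{r,0}(x)\prod_{\alpha\in\hcalP}(x-\alpha)}{c_{r,0}\,Q^n(x)}$. Then
\[
\tfrac{Q^n}{u_{r,0}}\,g \;=\; \tfrac{1}{c_{r,0}}\prod_{i=0}^{r-1} z_i^{(e_i)} \cdot \prod_{\alpha\in\hcalP}(x-\alpha),
\]
and I must show this rational function lies in $\F_q[x]_{<n}$. This will be verified by computing its divisor place-by-place.

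The main obstacle is controlling the finite poles of $\prod_i z_i^{(e_i)}$, which I plan to handle via the orbit structure of $G$ on the rational places of $F$. Because $P_\infty$ splits completely in $F/F^G$ (Lemma~\ref{lem:q+1subf}), the $G_i$-orbit $O_i := G_i\cdot P_\infty$ has size $|G_i|$, and the ascending chain $\{P_\infty\} = O_0 \subsetneq O_1 \subsetneq \cdots \subsetneq O_r = \calP$ induces the disjoint decomposition $\hcalP = \bigsqcup_{i=0}^{r-1} T_i$ with $T_i := O_{i+1}\setminus O_i$. The finite pole places $\{x_i=\lambda_{i+1,j}\}_{j=1}^{e_i}$ of $z_i^{(e_i)}$ in $F_i$ are among the places of $F_i$ lying above $P_{i+1,\infty}$ other than $P_{i,\infty}$, and each splits completely in $F/F_i$; its preimage in $F$ is a $G_i$-orbit inside $O_{i+1}\setminus O_i = T_i$. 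Consequently the pole loci of the various $z_i^{(e_i)}$ in $F$ are pairwise disjoint subsets of $\hcalP$, and each is at worst a simple pole. At every rational finite place $P_\alpha$ with $\alpha\in\hcalP$, the (at most) simple pole of $\prod_i z_i^{(e_i)}$ is precisely cancelled by the simple zero of $\prod_{\alpha'\in\hcalP}(x-\alpha')$ at $P_\alpha$, giving nonnegative valuation. At rational places $P_\beta$ with $\beta\in\F_q\setminus\hcalP$, both factors are regular. At any place of degree $\geq 2$ (including $Q$ itself), the $z_i^{(e_i)}$ are regular -- since their poles in $F$ lie above rational places of the $F_i$, hence are rational in $F$ -- and $\prod_{\alpha}(x-\alpha)\in\F_q[x]$ is a polynomial. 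So $\tfrac{Q^n}{u_{r,0}}g \in \F_q[x]$.

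For the degree bound, a single valuation at $P_\infty$ suffices. Since $x_i$ has a simple pole at $P_{i,\infty}$ and $P_\infty \mid P_{i,\infty}$ is unramified (splitting completely), $\nu_{P_\infty}(z_i^{(e_i)}) = e_i$, while $\nu_{P_\infty}\bigl(\prod_{\alpha\in\hcalP}(x-\alpha)\bigr) = -(n-1)$. Summing, $\nu_{P_\infty}\bigl(\tfrac{Q^n}{u_{r,0}}g\bigr) = \sum_i e_i - (n-1) \geq -(n-1)$, so $\tfrac{Q^n}{u_{r,0}}g$ has degree at most $n-1-\sum_i e_i < n$. Together with regularity at all finite places this places $\tfrac{Q^n}{u_{r,0}}g$ in $\F_q[x]_{<n}$, proving the containment and hence the lemma.
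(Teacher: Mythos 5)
Your proposal is correct and follows essentially the same route as the paper's proof: reduce to the containment $\tcB\subseteq \frac{u_{r,0}}{Q^n}\F_q[x]_{<n}$ by linear independence (as a subset of $\cB$) plus a dimension count, then show the simple poles of $\prod_i z_i^{(e_i)}$ form pairwise disjoint subsets of $\hcalP$ (your orbit decomposition $\hcalP=\bigsqcup_i (O_{i+1}\setminus O_i)$ is exactly the paper's disjoint union of the supports $Z_i^{(e_i)}$), so they are cancelled by $\prod_{\alpha\in\hcalP}(x-\alpha)$. The only cosmetic difference is the degree bound, which you obtain by a direct valuation at $P_\infty$ rather than by invoking $W_q(\cB\setminus\{1\})=\frac{1}{Q^n(x)}\F_q[x]_{<2n}$ as the paper does; both are fine.
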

\begin{proof}
Firstly, let $W_q(\tcB)$ be the $\F_q$-subspace of $\cL(nQ)$ spanned by $\tcB$. By Lemma~\ref{lem:basisq+1}, $\tcB$ is an $\F_q$-linearly independent subset of $\cB$. Then
\[\dim(W_q(\tcB))=p_1p_2\cdots p_r=n=\dim\left(\frac {u_{r,0}(x)}{Q^n(x)}\cdot\F_q[x]_{<n}\right).\]
It suffices to show that $\tcB\subset \frac {u_{r,0}(x)}{Q^n(x)}\cdot\F_q[x]_{<n}$.

By Lemma~\ref{lem:key lemma},
 \[z_r^{(1)}y_r =x_ry_r=\frac{u_{r,0}(x)}{\prod_{\alpha\in\hcalP}(x-\alpha)}\cdot \frac{\prod_{\alpha\in\hcalP}(x-\alpha)^2}{Q^n(x)}=\frac {u_{r,0}(x)}{Q^n(x)}{\prod_{\alpha\in\hcalP}(x-\alpha)}.\] 
\textbf{Claim}: assume $z_0^{(e_0)}z_1^{(e_1)}\cdots z_{r-1}^{(e_{r-1})}=N_{\be}(x)/D_{\be}(x)\in \F_q(x)$, where $N_{\be}(x), D_{\be}(x)\in \F_q[x]$. Then the denominator $D_{\be}(x)$ will be killed by ${\prod_{\alpha\in\hcalP}(x-\alpha)}$. 

By the above claim, we have $z_0^{(e_0)}z_1^{(e_1)}\cdots z_{r-1}^{(e_{r-1})}z_r^{(1)}y_r\in\frac{u_{r,0}(x)}{Q^n(x)}\F_q[x]$. Since $W_q\left(\cB\setminus\{1\}\right)=\frac{1}{Q^n(x)}\F_q[x]_{< 2n}$, then the numerator of $z_0^{(e_0)}z_1^{(e_1)}\cdots z_{r-1}^{(e_{r-1})}z_r^{(1)}y_r$ (as a rational polynomial function) belongs to $u_{r,0}(x)\F_q[x]_{< n}$. Thus, the lemma is proved if the claim is true.

\textbf{The proof of claim:} It is equivalent to show that the pole places of $z_0^{(e_0)}z_1^{(e_1)}\cdots z_{r-1}^{(e_{r-1})}$ is a subset of $\calP$. Since $z_{i}^{(e_i)}={1}/{(x_{i}-\lambda_{i+1,1})\cdots (x_{i}-\lambda_{i+1,e_i})}$, the pole places of $z_{i}^{(e_i)}$ in $\PP_{F}$ are those lying above $P_{\lambda_{i+1,k}}$ for $1\leq k\leq e_i$, where $P_{\lambda_{i+1,k}}\in \PP_{F_i}$ is the zero of  $x_{i}-\lambda_{i+1,k}$. Moreover, $P_{\lambda_{i+1,k}}\mid P_{i+1,\infty}$ and $P_{i+1,\infty}$ splits completely in $F/F_{i+1}$ by Lemma~\ref{lem:key lemma}, thus each $P_{\lambda_{i+1,k}}$ splits completely in $F/F_i$. Let $\left(z_{i}^{(e_i)}\right)_{\infty}\in \Div(F)$ be the pole place of $z_{i}^{(e_i)}$ and $Z_i^{(e_i)}:=\Supp\left(\left(z_{i}^{(e_i)}\right)_{\infty}\right)$ be its support set. Then $|Z_i^{(e_i)}|=e_i\cdot p_1\cdots p_i$. In particular, $Z_i^{(p_{i+1}-1)}$ contains $Z_i^{(e_i)}$ which has cardinality $\left(\prod_{k=1}^{i+1} p_k\right)-\left(\prod_{k=1}^{i}p_k\right)$. Moreover, let $\calP_{i}:=\{P\in\PP_F:\ P\mid P_{i,\infty}\}$. Then $Z_i^{(p_{i+1}-1)}\cap\calP_i=\emptyset$. For any $j< i$, since $Z_j^{(e_j)}\subset Z_j^{(p_{j+1}-1)}\subset \calP_i$, $Z_j^{(e_j)}$ and $Z_i^{(e_i)}$ are disjoint. This means the disjoint union $\bigsqcup_{i=0}^{r-1} Z_i^{(e_i)}$ is a subset of $\bigsqcup_{i=0}^{r-1} Z_i^{(p_{i+1}-1)}=\hcalP$. Thus, the denominator $D_{\be}$ can be killed by ${\prod_{\alpha\in\hcalP}(x-\alpha)}$. 

\end{proof}

Assume  $n=\prod_{i=1}^rp_i$ is $B$-smooth, i.e., $p_i<B$ for all $i=1,\ldots,r$. By Lemma~\ref{lem:q+1subf}, there are $\frac{q+1}n$ rational places in $\PP_{F_r}$ that are splitting completely in $F_0/F_r$, namely those places lying above the pole place of $\sum_{i=0}^{q}\sigma^i(x)$ in $F^{\langle \sigma\rangle}$. Denote the set consisting of these $\frac{q+1}n$ places by $\calQ$. Particularly, $\calQ=\{P_{r,\infty}\}$ if $n=q+1$, and contains $P_{r,\infty}$ if $n<q+1$. We make the following choices: 
\begin{equation}\label{eq:fP}
\begin{split}
&\fP\in \calQ\setminus \{P_{r,\infty}\},\ \text{if}\ n<q+1,\\
&\fP=P_{r,\infty},\ \text{if}\ n=q+1.\\
\end{split}
\end{equation}
Let $\calP\subset\PP_F$ be the set of all rational places of $F$ lying over $\fP$ (hence $|\calP|=n$). If $n<q+1$, by the proof of claim in Lemma~\ref{lem:subbasisq+1}, every base element $z_0^{(e_0)}z_1^{(e_1)}\cdots z_{r-1}^{(e_{r-1})}x_ry_r\in\tcB$ has no poles in $\calP$. 

\begin{theorem}\label{thm: q+1-case}
Let $G$ be a cyclic subgroup of $\langle \sigma \rangle$ of order $n=\prod_{i=1}^rp_i$. Let $\fP\in\PP_{F^G}$ and $\calP$ be defined as above. If $n$ is $B$-smooth, then the DFT of any polynomial $f\in\F_q[x]_{<n}$ at $\calP$ can be done in $O(Bn\log n)$ operations of $\F_q$. 
\end{theorem}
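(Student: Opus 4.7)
My plan is to perform a recursion along the tower $F=F_0\supsetneq F_1\supsetneq\cdots\supsetneq F_r$ parallel to the affine construction of Theorem~\ref{thm:main1}, but carried out one level up inside the Riemann-Roch space $\cL(nQ)$ instead of inside the polynomial space $\F_q[x]_{<n}$. Given $f\in\F_q[x]_{<n}$ expressed in the basis $\frac{Q^n(x)}{u_{r,0}(x)}\tcB$ of Lemma~\ref{lem:subbasisq+1}, set $\tilde f:=\frac{u_{r,0}(x)}{Q^n(x)}f$. Then $\tilde f$ is an $\F_q$-linear combination of the basis $\tcB$, and the equality $f(\alpha)=\frac{Q^n(\alpha)}{u_{r,0}(\alpha)}\tilde f(\alpha)$ reduces the MPE of $f$ at $\calP$ to the MPE of $\tilde f$ at $\calP$, at the cost of one $O(n)$ precomputation of the scalars $\frac{Q^n(\alpha)}{u_{r,0}(\alpha)}$ and one final round of $O(n)$ scalar multiplications.

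The MPE of $\tilde f$ at $\calP$ is computed recursively. Grouping the $\tcB$-expansion of $\tilde f$ according to $e_0$,
\[
\tilde f=\sum_{e_0=0}^{p_1-1}z_0^{(e_0)}\cdot\tilde g_{e_0},
\]
each $\tilde g_{e_0}$ is an $\F_q$-linear combination of $\{z_1^{(e_1)}\cdots z_{r-1}^{(e_{r-1})}x_ry_r\}$. By Lemma~\ref{lem:key lemma}(4) applied with $i=1$, the element $x_ry_r$ equals $\frac{u_{r,1}(x_1)\prod_{\alpha\in\hcalP^{(1)}}(x_1-\alpha)}{c_{r,1}Q_1^{n/p_1}(x_1)}$, hence lies in $F_1$, and a straightforward degree/support computation shows that $\tilde g_{e_0}$ lies in the $\F_q$-span of the analogue of $\tcB$ built one level down over $F_1$. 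By the last fact recalled in \S\ref{sec: AFF}, the value $\tilde g_{e_0}(P)$ for $P\in\calP$ depends only on $P^{(1)}:=P\cap F_1\in\calP^{(1)}$. Thus the MPE of each $\tilde g_{e_0}$ at $\calP^{(1)}$ is exactly a smaller G-FFT problem of the same type for the cyclic subgroup $G/G_1$ of order $n/p_1$; recursing yields all the values $\tilde g_{e_0}(P^{(1)})$. For every $P\in\calP$ I then combine via $\tilde f(P)=\sum_{e_0}z_0^{(e_0)}(P)\,\tilde g_{e_0}(P^{(1)})$ using $O(n)$ precomputed values $z_0^{(e_0)}(P)$, at a cost of $O(p_1 n)$ operations per level. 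The recursion $C(n)=p_1C(n/p_1)+O(p_1 n)$ unrolls to $C(n)=O(B\,n\log n)$.

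The main technical obstacle occurs in the case $n=q+1$ and $\fP=P_{r,\infty}$: then $\calP$ contains the places $P_{\lambda_{1,k}}$ at which some $z_0^{(e_0)}$ has a pole, while $\tilde g_{e_0}$ has a compensating zero, so the naive combination yields an indeterminate form. The rescue is that at each level of the recursion only one place is affected, namely the unique place lying over $P_{i,\infty}$; by precomputing the appropriate normalised leading coefficients of $z_i^{(e_i)}$ and the matching zero-order coefficients of the companion $\tilde g$'s along this single ``infinite branch'' of the recursion tree, one assembles $\tilde f$ at those places by a regularised sum. This bookkeeping adds only $O(\log n)$ additional precomputed constants and $O(n)$ extra operations, so the overall complexity remains $O(B\,n\log n)$ field operations of $\F_q$, which is the required bound.
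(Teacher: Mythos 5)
Your overall strategy coincides with the paper's: the same passage from $f$ to $\tilde f=\frac{u_{r,0}(x)}{Q^n(x)}f$ with an $O(n)$ precomputation of the scalars $\frac{Q^n(\alpha)}{u_{r,0}(\alpha)}$, the same grouping of the $\tcB$-expansion by $e_0$, the same recursion along the tower $F_0\supsetneq F_1\supsetneq\cdots\supsetneq F_r$ with cost recurrence $C(n)=p_1C(n/p_1)+O(p_1n)$, and the same recognition that the only delicate point is the case $\fP=P_{r,\infty}$ (i.e. $n=q+1$), where $\calP$ contains poles of the functions $z_0^{(e_0)}$.

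However, your treatment of that delicate case has a genuine gap. First, the affected places at level $i$ are not ``the unique place lying over $P_{i,\infty}$'': by Lemma~\ref{lem:key lemma}(2), $P_{i,\infty}$ splits into $p_i$ places of $F_{i-1}$, and the problematic ones are the $p_i-1$ places $P_{\lambda_{i,1}},\dots,P_{\lambda_{i,p_i-1}}$ (only when $p_i=2$ is this a single place); this does not hurt the complexity bound, but it shows the general $B$-smooth case is not actually worked out. More seriously, you propose to ``precompute \dots the matching zero-order coefficients of the companion $\tilde g$'s'': these depend on the input $f$, so they cannot be precomputed, and you give no argument that extracting them on the fly is cheap. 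The missing step is exactly the paper's valuation computation (Equations~\eqref{eq:valus} and \eqref{eq:fft1}): one checks that $\nu_{P_{\lambda_{1,i}}}\big(x_ry_r\bz^{(\bbe)}\big)\ge 0$ with equality only for $\bbe=(0\cdots 0k)$, whence $\big(z_0^{(k)}f_k\big)(P_{\lambda_{1,i}})=a_{0\cdots 0k}\cdot\big(x_ry_rz_0^{(k)}\big)(P_{\lambda_{1,i}})$ and $z_0^{(k)}f_k$ vanishes at $P_{0,\infty}$; that is, the regularised value at each exceptional place is a single input coefficient times a genuinely precomputable constant. Without this identity, your ``regularised sum'' is not justified to cost $O(1)$ per exceptional place per level, and that is precisely what keeps the total at $O(Bn\log n)$ operations with $O(n)$ storage.
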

\begin{proof}
Let $\tilde{f}=\frac{u_{r,0}(x)}{c_{r,0}Q^{n}(x)}f$, where $u_{r,0}(x)$ and $c_{r,0}Q^n(x)$ come from $x_r=\frac{u_{r,0}(x)}{\prod_{\Ga\in\hcalP}(x-\Ga)}$ and $y_r=\frac{\prod_{\Ga\in\hcalP}(x-\Ga)^2}{c_{r,0}Q^n(x)}$ by Lemma~\ref{lem:key lemma}(4). Then $\tilde{f}\in W_q(\tcB)$ by Lemma~\ref{lem:subbasisq+1}. We will first show that the MPE of $\tilde{f}$ at $\calP$ can be done in $O(n\log n)$. By pre-computing $\frac{c_{r,0}Q^{n}(\Ga)}{u_{r,0}(\Ga)}$ for all $\Ga\in\calP$, we can obtain $f(\Ga)=\frac{c_{r,0}Q^{n}(\Ga)}{u_{r,0}(\Ga)}\tilde{f}(\Ga)$ for every $\Ga\in\calP$. If $n=q+1$, we define $f(P_{\infty})=\infty$ since $\nu_{P_{\infty}}(f)<0$. 
Denote the complexity (i.e., the total number of operations of $\F_q$) to compute the $n$ evaluations of any $\tilde{f}\in W_q(\tcB)\subset\cL(nQ)$ at the set $\calP$ by $C(n)$. 

Under the basis $\tcB$ in Lemma~\ref{lem:subbasisq+1}, we represent $\tilde{f}$ as
\begin{equation}\label{eq:fft}
\begin{split}
\tilde{f}&=\sum_{\bbe\in\prod_{i=0}^{r-1} \Z_{p_{i+1}}} a_{\bbe} \cdot {x_ry_r\bz^{(\bbe)}}\\
&=\sum_{e_0\in\Z_{p_1}}z_0^{(e_0)} \cdot\sum_{\bbe'\in\prod_{i=1}^{r-1}\Z_{p_{i+1}}} a_{\underline{\be'e_0}} \cdot {x_ry_r\bz^{(\bbe')}} \\
&=z_0^{(0)}\cdot f_0+z_0^{(1)}\cdot f_1+\dots+z_0^{(p_1-1)}\cdot f_{p_1-1}\\
&=f_0+\frac 1{x-\lambda_{1,1}}\cdot f_1+\dots+\frac 1{(x-\lambda_{1,1})\cdots(x-\lambda_{1,p_1-1})}\cdot f_{p_1-1}.\end{split}
\end{equation}
where $f_{k}=\sum_{\bbe'\in\prod_{i=1}^{r-1}\Z_{p_{i+1}}} a_{\underline{\be'k}} \cdot{x_ry_r\bz^{(\bbe')}}$ for every $0\leq k<p_1$. Now, all $f_{k}\in F_1$ and $$\nu_{Q_1}(f_k)\geq \nu_{Q_1}(y_r)=-\frac n{p_1},\ \text{where}\ Q_1=Q\cap F_1.$$
Thus, $f_{k}\in \cL(\frac n{p_1}Q_1)$ for all $0\leq k\leq p_1-1$.

For any $P\in \calP$, let $P_1=P\cap F_1$. Then $f_k(P)=f_k(P_1)$ as $f_k\in F_1$. Let $\calP^{(1)}=\calP\cap F_1=\{P\cap F_1: P\in \calP\}$. Then $|\calP^{(1)}|=\frac{n}{p_1}$ by Lemma~\ref{lem:q+1subf}. Thus the evaluations of $\tilde{f}$ at $\calP$ can be reduced to evaluations of $f_0, \cdots, f_{p_1-1}$ at $\calP^{(1)}$, and then a combination of these values according to Equation~\eqref{eq:fft}. If $n<q+1$, by the choice of $\fP$, there are no places in $\calP$ that are poles of  $1/{(x-\lambda_{1,1})\cdots(x-\lambda_{1,p_1-1})}$. So the reduction works well and we can directly get the recursive formula~\eqref{eq:recq+1} which leads to the final result $C(n)=O(Bn\log n)$. However, if $n=q+1$, the pole places of $1/{(x-\lambda_{1,1})\cdots(x-\lambda_{1,p_1-1})}$ are contained in $\calP$ which need to be dealt with separately. 

We are left to consider the case $n=q+1$. According to Lemma~\ref{lem:key lemma} (2), let $\Lambda_{1}=\{P_{\lambda_{1,1}},\dots,P_{\lambda_{1,p_1-1}}, P_{0,\infty}\}$ be the set of all rational places of $F_0$ lying above $P_{1,\infty}$. For any $P_{\lambda_{1,i}}\in \Lambda_1$, by computations,
\begin{equation}\label{eq:valus}
\begin{split}
&\nu_{P_{\lambda_{1,i}}} (y_r)=\nu_{P_{r,\infty}} (y_r)=2,\ \nu_{P_{\lambda_{1,i}}} (x_r)=-1,\\
&\nu_{P_{\lambda_{1,i}}} (z_j^{(e_j)})=\nu_{P_{j,\infty}} (z_j^{(e_j)})=e_j\geq 0,\ \text{for}\ 1\leq j\leq r-1,\ e_j\in\Z_{p_{j+1}}\\
&\nu_{P_{\lambda_{1,i}}} (z_0^{(k)})=-1,\ \text{for}\ i\leq k\leq p_1-1,\\
 &\nu_{P_{\lambda_{1,i}}} (z_0^{(k)})=0,\ \text{for}\ 1\leq k< i,\\
\end{split}\end{equation}
Then, \[\nu_{P_{\lambda_{1,i}}} (f_k)\geq \min_{\bbe'\in\prod_{i=1}^{r-1}\Z_{p_{i+1}}}\{\nu_{P_{\lambda_{1,i}}} (a_{\underline{\be'k}} \cdot{x_ry_r\bz^{(\bbe')}})\}\geq 1,\]
namely, each $P_{\lambda_{1,i}}$ is a zero of $f_k=\sum_{\bbe'\in\prod_{i=1}^{r-1}\Z_{p_{i+1}}} a_{\underline{\be'k}} \cdot{x_ry_r\bz^{(\bbe')}}$. However, $\{P_{\lambda_{1,1}},\dots,P_{\lambda_{1,k}}\}$ are simple poles of $z_0^{(k)}$. We cannot evaluate $f_k$ and $z_0^{(k)}$ separately at $P_{\lambda_{1,i}}$ for $i\leq k$.

Note that, by Equation~\eqref{eq:valus}, $\nu_{P_{\lambda_{1,i}}} ({x_ry_r\bz^{\bbe})}\geq 0$ and take $0$ only at $\bbe=(0\cdots 0k)$ for $i\leq k\leq p_1-1$. Thus
\begin{small}
\begin{equation}\label{eq:fft1}
\begin{split}
{z_0^{(k)}f_k}(P_{\lambda_{1,i}})&=\big(a_{0\cdots 0k}x_ry_r\cdot z_0^{(k)}\big)(P_{\lambda_{1,i}})\\
&=\left(a_{0\cdots 0k}\frac{u_{r,0}(x)\prod_{\alpha\in\hcalP}(x-\alpha)}{Q^n(x)}\cdot \frac{1}{(x-\lambda_{1,1})\cdots(x-\lambda_{1,k})}\right)(P_{\lambda_{1,i}})\\
&=\begin{cases}
a_{0\cdots 0k}\cdot\frac{u_{r,0}(\lambda_{1,i})\prod_{\alpha\in\hcalP\setminus\{\lambda_{1,1},\ldots,\lambda_{1,k}\}}(\lambda_{1,i}-\alpha)}{Q(\lambda_{1,i})^{n}},\  & \text{if}\ i\leq k\leq p_1-1;\\
0,\ & \text{if}\ k< i\leq p_1-1.\end{cases}
\end{split}
\end{equation}
\end{small}
where the second ``=" follows from Lemma~\ref{lem:key lemma} (4).

Thus, for these $p_1-1$ places in $\Lambda_{1}$, we can get $\tilde{f}(P_{\lambda_{1,i}})$ according to equations~\eqref{eq:fft1} and \eqref{eq:fft} which can be done in $p_1^2$ by pre-computing of
\[\frac{u_{r,0}(\lambda_{1,i})\prod_{\alpha\in\hcalP\setminus\{\lambda_{1,1},\ldots,\lambda_{1,k}\}}(\lambda_{1,i}-\alpha)}{Q(\lambda_{1,i})^{n}}\ \text{for}\ 1\leq i\leq k\leq p_1-1.\]
It is not hard to see $\nu_{P_{0,\infty}} (z_0^{(k)}f_k)>0$, thus $z_0^{(k)}f_k(P_{0,\infty})=0$. As for other $n-p_1$ rational places in $\calP\setminus \Lambda_{1}$, the functions $\{z_0^{(k)}, f_k\}_{k=0}^{p_1-1}$ in equation~\eqref{eq:fft} are well-defined at them. Thus, we can get the corresponding $n-p_1$ evaluations of $\tilde{f}$ according to Equations~\eqref{eq:fft} by using $p_1\cdot O(n-p_1)$ operations in $\F_q$. Since each $f_k\in \cL(\frac{n}{p_1}Q_1)$ and $|\calP_1|=n/p_1$, by the above analysis, we get a recursive formula for $C(n)$:
\begin{equation}\label{eq:recq+1}
C(n)=p_1\cdot C\left({n}/{p_1}\right)+p_1\cdot O(n).
\end{equation}
Then we can continue to reduce the DFT of each $f_k$ at $\calP^{(1)}$ to the DFTs of $p_2$ functions in $\cL\big((n/{p_1p_2})Q_2\big)$ at $\calP^{(2)}=\calP\cap F_2$, till to the last step where we get $n$ functions in $\cL(Q_r)$ and evaluate them at $\calP^{(r)}=\{P_{r,\infty}\}$, which can be done in $O(n)$ operations. Overall, the recursive formula~\eqref{eq:recq+1} lead to a final complexity of $C(n)$:
\[\begin{split}
C(n)=p_1p_2\cdots p_r\cdot C(1)+(p_r+\dots+p_2+p_1)\cdot O(n)=O(B\cdot n\log n).\end{split}\]
Since we need some precomputations in each step of the reduction, we analyze the total storage for these precomputations. Precisely, in step $j$ of the reduction, we need to precompute
\[\left(y_rx_rz_j^{(k)}\right)(P_{\lambda_{j,i}})=\frac{u_{r,j}(\lambda_{j,i})\prod_{\alpha\in\hcalP^{(j)}\setminus\{\lambda_{j,1},\ldots,\lambda_{j,k}\}}(\lambda_{j,i}-\alpha)}{Q_j(\lambda_{j,i})^{n}}\ \text{for}\ 1\leq i\leq k\leq p_j-1.\]
Therefore, the storage for the precomputations in the process of reduction is:
\[\sum_{j=1}^r\sum_{k=1}^{p_j-1}\sum_{i\leq k} 1=O(B^4\log n).\]
By adding the storage for $\frac{u_{r,0}(\alpha)}{Q^{n}}(\Ga)$ for every $\Ga\in\calP$, the total storage for precomputation is $n+O(B^4\log n)=O(n)$.

\end{proof}

\begin{remark}
\begin{itemize}

\item[(1)] By the same analysis as in Remark~\ref{rmk:ifft}, the inverse G-FFT for the $q+1$ case can also be performed in $O(n\log n)$ operations.

\item[(2)] Let $f(x)\in\F_q[x]_{<n}$ be any nonzero polynomial. In the affine G-FFT, we showed that the transformation from the coefficients of $f(x)$ under the basis $\cB$ in Lemma~\ref{lem: new basis} to the coefficients of $f(x)$ under the standard basis can be done in at most $O(n\log n)$ operations.  However, in the $q+1$ case, this transformation will cost $O(M(n)\log n)$ operations in $\F_q$, where $M(n)$ denotes the cost of multiplication of two polynomials of degree less than $n$. For simplicity, we take $q+1=2^r$ as an example to illustrate. 
In this case, assume $x_{i+1}=\frac{u_{i,i+1}(x_i)}{x_i-\lambda_i}$ for $i=0,1,\ldots,r-1$. Assume $f(x)=(x^q-x)\cdot\sum_{\bbe\in \Z_2^{r}} a_{\bbe} \bz^{(\bbe)}$, where each $\bz^{(\bbe)}=z_0^{(e_0)}z_1^{(e_1)}\cdots z_{r-1}^{(e_{r-1})}$ is the base element in Lemma~\ref{lem:subbasisq+1}, and the evaluation set $\calP=\calP_{i,0}\sqcup \calP_{i,1}$, where $i=1,2,\ldots,r-1$ and $\calP_{i,0}$ (resp. $\calP_{i,1}$) $\subsetneq \PP_{F}$ is the set of places lying above $P_{i,\infty}$ (resp. $P_{i,\lambda_i}$). Since 
\[
\begin{split}
f(x)&=(x^q-x)\cdot\sum_{\bbe'\in \Z_2^{r-1}} a_{\underline{\be'0}}\bz^{(\bbe')}+(x^q-x)z_{r-1}^{(1)}\cdot\sum_{\bbe'\in \Z_2^{r-1}} a_{\underline{\be'1}}\bz^{(\bbe')}\\
&=f_0(x)\cdot\prod_{\alpha\in\calP_{r-1,1}}(x-\alpha)+f_1(x).
\end{split}\]
where \[f_0(x)=\prod_{\alpha\in\hcalP_{r-1,0}}(x-\alpha)\sum_{\bbe'\in \Z_2^{r-1}} a_{\underline{\be'0}}\bz^{(\bbe')},\ 
f_1(x)=\prod_{\alpha\in\hcalP_{r-1,0}}(x-\alpha)\sum_{\bbe'\in \Z_2^{r-1}} a_{\underline{\be'1}}\bz^{(\bbe')}.\] The pole places of $\bz^{(\bbe')}$ are in $\hcalP_{r-1,0}$ for all $\bbe'\in \Z_2^{r-1}$. Therefore, the computation of coefficients of $f(x)$ under the standard basis can be reduced to the computations of $f_0(x)$ and $f_1(x)$ under the standard basis. Then get $f$ by multiplying $f_0$ with $\prod_{\alpha\in\calP_{r-2,1}}(x-\alpha)$ and add $f_1$. Thus the transformation complexity, denoted by $B(n)$, satisfies 
\[B(n)=2B(n/2)+M(n)=O(M(n)\log n).\]
\end{itemize}
\end{remark}

\begin{example}
At last, let us present an example with $q+1=2^r$ to illustrate the G-FFT. Since $x_i=\frac{u_{i,i-1}(x_{i-1})}{x_{i-1}-\lambda_{i-1}}$ for $i=1,\dots,r$ by Lemma~\ref{lem:key lemma}, the pole place $P_{i,\infty}\in\PP_{F_i}$ of $x_i$ splits into two places in $\PP_{F_{i-1}}$, i.e., $P_{i-1,\infty}$ and $P_{i-1,\lambda_{i-1}}$. Let $c_i=\frac{y_rx_r}{x_i-\lambda_i}\mid_{x_i=\lambda_i}$ for $i=0,1,\ldots,r$. Then the G-FFT in the proof of Theorem~\ref{thm: q+1-case} can be implemented by the following Algorithm~1. Note that it can be also applied to the general case, i.e., $q+1=\prod_{i=1}^r p_i$ is $B$-smooth, by writing $\tilde{f}$ under the basis $\tcB$ in step 4 and adding the discussion of $p_{j}-1$ pole places $\{P_{j,1},\ldots,P_{j,p_j-1}\}\subset \calP^{(j-1)}$ of $x_{j}$ in step 8. For better understanding, we only present the simplest case $q+1=2^r$.
 \begin{algorithm}[!h]\label{alg: q+1}
  \caption{ \bf G-FFT($f, \calP$).}
  \label{alg:Framwork}
  \begin{algorithmic}[1]
    \Require
     $f=\sum_{\bbe\in\Z_{2}^r}a_{\bbe}\frac{x^q-x}{\prod_{i=0}^{r-1}(x_i-\lambda_i)^{e_i}}\in \F_q[x]_{<q}$ and $\calP=\F_q=\{P_{1},P_2,\dots,P_q\}$.
    \Ensure
      $\left(f(P_{1}),\dots,f(P_{q})\right)\in\F_q^{q}$.
      \begin{itemize}
   \item[--] Write $\tilde{f}=\frac{u_{r,0}(x)}{c_{r,0}Q^{q+1}(x)}f=\sum_{\bbe\in\Z_{2}^r}a_{\bbe}\frac{y_rx_r}{\prod_{i=0}^{r-1}(x_i-\lambda_i)^{e_i}}$.
    \item[--] For $j=0, 1,\dots,r$, let $\calP^{(j)}=\calP\cap F_j$. In particular, $\calP^{(0)}=\calP$.\end{itemize}
    \For {$j=0$ to $r-1$}
          \State Write $\tilde{f}=f_{0}+f_{1}\frac{1}{x_j-\lambda_j}$.  
           \State $j\gets j+1$
          \State Call G-FFT$\left(f_{k},\calP^{(j)}\right)$ for $k=0,1$.
          \State Return $\tilde{f}(P)=f_{0}(P)+f_{1}(P)\cdot \frac{1}{x_{j-1}(P)-\lambda_{j-1}}$ for $P\in \calP^{(j-1)}\setminus\{P_{\lambda_{j-1}}\}$;
          \State Return $\tilde{f}(P)=a_{\underline{0\cdots 01}}\cdot c_{j-1}$ for $P=P_{\lambda_{j-1}}$, where $a_{\underline{0\cdots 01}}$ is the coefficient of $f$ at the term $\frac{y_rx_r}{x_{j-1}-\lambda_{j-1}}$.
     \EndFor
   \State Output $f(P)=\frac{c_{r,0}Q^{q+1}(P)}{u_{r,0}(P)}\tilde{f}(P)$ for every $P\in\calP$.
  \end{algorithmic}
\end{algorithm}

More specifically, all data for the case $q=127$ are as follows. Let $F=\F_{127}(x)$. Firstly, we choose a primitive polynomial over $\F_{127}$: $m(x)=x^2 +126x +3$. The automorphism $\sigma=\left(\begin{matrix}0&1\\ 124&1\end{matrix}\right)\in\PGL_2(q)$ has order $128$. Since $q+1=2^7$ and $r=7$, we can construct a tower of fields, i.e., $\F_q(x_i)=F^{G_i}$ is the subfield fixed by $G_i=\langle \sigma^{2^{r-i}} \rangle$, $i=0,1,\ldots,7$. Then $x_i=\sum_{\tau\in G_i}\tau(x)$. In addition, let $Q(x)=x^2+42x+85$ and $Q$ be the quadratic place corresponding to $Q(x)$. Then $Q$ is totally ramified in $F/F^{G}$. Let $y_i=\prod_{\tau\in G_i}\tau(1/Q(x))$. The field $E_i=\F_q(y_i)$ is a subfield of $F_q(x_i)$. By computation, we have

(1) \textbf{The generator $x_i$ (resp. $y_i=1/Q_i(x_i)$) of the subfield $F_i=\F_q(x_i)$ (resp. $E_i=\F_q(y_i)$) for $i=0, 1, \ldots,7$}. By the definition~\eqref{eq:defxy}, we have
\begin{small}
\[\begin{split}
&x_0=x,\ x_1=\frac{x^2+42}{x+21},\ x_2=\frac{x^4 + 125x^2 + 71x + 33}{x^3 + 63x^2 + 28x + 100},\ x_3=\frac{x^8 + 33x^6 + 105x^5 + 24x^4 + \cdots + 122}{x^7 + 20x^6 + 69x^5 + \cdots + 20}\\
&x_4 = \frac{x^{16} + 87x^{14} +34x^{13} + 116x^{12} \cdots+ 43}{x^{15} + 61x^{14} + 91x^{13} + 34x^{12}+\cdots+ 44},\ x_5 = \frac{x^{32} + 4x^{30} + 29x^{29} + 119x^{28}+ \cdots+ 8}{x^{31} + 16x^{30} +22x^{29} + 47x^{28}+\cdots+116},\\
 &x_6 = \frac{x^{64} + 90x^{62} +15x^{61} + 89x^{60}+  \cdots+ 80}{x^{63} + 53x^{62}+67x^{61} + 64x^{60}+ \cdots+71},\ x_7=\frac{x^{128} + 42x + 85}{x^{127} + 126x}=\frac{u_{3,0}(x)}{x^q-x}.
\end{split}\]
\end{small}
and
\begin{small}
\[\begin{split}
&y_0=1/Q(x),\ y_1=\frac{x^2+42x+60}{25Q(x)^2},\ y_2=\frac{x^6 + 126x^5 + \cdots+ 94}{16Q(x)^4},\ y_3=\frac{x^{14} + 40x^{13} + \cdots + 19}{38Q(x)^8},\\
&y_4=\frac{x^{30} + 122x^{29} +\cdots+ 31}{16Q(x)^{16}},\ y_5=\frac{x^{62} + 32x^{61} + \cdots + 121}{100Q(x)^{32}},\ y_6=\frac{x^{126} + 106x^{125} + \cdots + 88}{4Q(x)^{64}},\\
&y_7=\frac{x^{254} + 125x^{128 }+ x^2}{100Q(x)^{128}}.
\end{split}\]
\end{small}

(2) \textbf{The basis of $\F_q[x]_{<128}$}. The pole place $P_{i,\infty}\in\PP_{F_i}$ of $x_i$ splits into two places in $\PP_{F_{i-1}}$, i.e., $P_{i-1,\infty}$ and $P_{i-1,\lambda_{i-1}}$, for every $i$. By computations, we have $\lambda_0= 106, \lambda_1 = 85, \lambda_2 = 43, \lambda_3 = 86, \lambda_4 = 45, \lambda_5 = 90, \lambda_6 = 53$. Thus, by Lemma~\ref{lem:subbasisq+1}, a basis of $\F_q[x]_{<128}$ is
\[\left\{\frac {x^{127}-x}{\prod_{i=0}^6(x_i-\lambda_i)^{e_i}} \mid \bbe=(e_0,e_1,\ldots,e_6)\in\Z_2^7\right\}\]

(3) \textbf{The precomputations}. In the proof of Theorem~\ref{thm: q+1-case}, we need to store $r$ values in precomputing process:
\[\begin{split}
&c_0=\frac{y_7x_7}{x-\lambda_0}\mid_{x=\lambda_0}=106,\ c_1=\frac{y_7x_7}{x_1-\lambda_1}\mid_{x_1=\lambda_1}=101,\ c_2=\frac{y_7x_7}{x_2-\lambda_2}\mid_{x_2=\lambda_2}=64,\\
&c_3=\frac{y_7x_7}{x_3-\lambda_3}\mid_{x_3=\lambda_3}=34,\ c_4=\frac{y_7x_7}{x_4-\lambda_4}\mid_{x_4=\lambda_4}=35,\  c_5=\frac{y_7x_7}{x_5-\lambda_5}\mid_{x_5=\lambda_5}=1,\\
&c_6=\frac{y_7x_7}{x_6-\lambda_6}\mid_{x_6=\lambda_6}=0.\end{split}\]

According to Algorithm 1, we can compute the MPE for any $f=\sum_{\bbe\in\Z_2^7} a_{\bbe} \frac{x^{127}-x}{\prod_{i=0}^6(x_i-\lambda_i)^{e_i}}\in\F_{127}[x]_{<127}$ at $\F_{127}$. In Appendix B, we take a random $f(x)$ (see its coefficients in Table 2) and set $\tilde{f}=\frac{x^{128}+42x+85}{100Q^{128}(x)}f$. Then we compute $\{\tilde{f}(P): P\in\F_{128}\}$ by G-FFT and get $f(P)=\frac{100Q^{128}(P)}{u_{7,0}(P)}\tilde{f}(P)$ for every $P\in\F_{127}$ (see Table~3 for the MPEs of $\tilde{f}$ and $f$ at $\F_q$).


\end{example}


\section{Conclusion}

In this work, we consider FFT for evaluations of polynomials at a chosen set $\calP\subseteq\F_q$.  We conclude that when either $q-1$, $q$ or $q+1$ is smooth, then it is possible to run FFT in $O(n\log n)$ operations of $\mathbb{F}_q$, which loosens the restriction of FFT on the finite field to some extent. Most importantly, if $n\mid (q+1)$ is smooth, we give a practical algorithm to implement FFT over $\F_q$ of length $n$ rather than turn to a quadratic extension field $\F_{q^2}$ to do FFT of length $n$. Our framework is based on the Galois theory and properties of the rational function field. The applications of our new algorithm to polynomial arithmetic and encoding/decoding of Reed-Solomon codes are not considered in this work and we leave them as a future research work.

\section*{Acknowledgments} We would like to thank Yi Kuang for his help in implementing the G-FFT algorithm. We would also like to thank Liming Ma and Fuchun Lin for their valuable comments on the manuscript of this work. This Research is partially supported by the National Key Research and Development Program under Grant 2022YFA1004900 and the National Natural Science Foundation of China under Grant numbers 123031011 and 12031011.

\vskip 1cm
\newcommand{\etalchar}[1]{$^{#1}$}

\appendix

\section{$(x^p-\alpha x)$-adic expansions of polynomials in $\F_q[x]$}

\begin{definition}[\cite{von13, Gao10}]
Let $\F_q$ be a finite field with characteristic $p$. Assume $f(x)\in \F[x]_{<n}$ is a polynomial of degree less than $n$. For any $\alpha\in \F_q$, the $(x^p-\alpha x)$-adic expansion of $f(x)$ is
\[f(x)=a_0(x)+a_1(x)\cdot(x^p-\alpha x)+\dots+a_m(x)\cdot(x^p-\alpha x)^m,\]
where $a_i(x)\in\F_q[x]_{<p}$ for $0\leq i\leq m$ and $m=\lfloor n/p\rfloor$.
\end{definition}
Mateer-Gao \cite{Gao10} showed that when $p=2$, the $(x^2-x)$-adic expansion of $f(x)$ can be computed in $O(n\log n)$. In this part, we will generalize this result to any constant characteristic $p$.

\begin{lemma}\label{lem:poly-adic}
Assume $\F_q$ is a finite field with constant characteristic $p$. Let $f(x)\in \F_q[x]_{<n}$ be a polynomial of degree less than $n$. For any $\alpha\in \F_q$, one can compute the $(x^p-\alpha x)$-adic representation of $f(x)$ in $O(n\log n)$ operations in $\F_q$.
\end{lemma}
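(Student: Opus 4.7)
The plan is to adapt the divide-and-conquer approach for characteristic $2$ from \cite{Gao10} to arbitrary constant characteristic $p$. The key algebraic input I will use is that in characteristic $p$ the Frobenius is an $\F_p$-algebra endomorphism on $\F_q[x]$, so
\[
T^{p^j} = (x^p - \alpha x)^{p^j} = x^{p^{j+1}} - \alpha^{p^j} x^{p^j} \qquad (j \ge 0),
\]
which is a \emph{binomial} (exactly two nonzero monomials) of degree $p^{j+1}$, regardless of $\alpha$. This is what makes a linear-time splitting step possible at every level of the recursion.

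After padding $f$ with zero coefficients I may assume $n = p^{k+1}$ for some integer $k \ge 0$, so the target $T$-adic expansion has exactly $p^k$ coefficient slots $a_0,\dots,a_{p^k-1}$. The top-level reduction I will carry out is first to compute polynomials $r_0,\ldots,r_{p-1}\in\F_q[x]_{<p^k}$ satisfying
\[
f(x) = \sum_{j=0}^{p-1} r_j(x)\cdot T^{j p^{k-1}},
\]
and then to recurse on each $r_j$ (whose own $T$-adic expansion has length $p^{k-1}$); concatenating the $p$ output lists produces the desired coefficient vector. The splitting will be done by performing $p-1$ successive Euclidean divisions of $f$ by the single binomial $T^{p^{k-1}} = x^{p^k} - \alpha^{p^{k-1}} x^{p^{k-1}}$: write $f = q_1 T^{p^{k-1}} + r_0$, $q_1 = q_2 T^{p^{k-1}} + r_1$, and so on up to $q_{p-1} = r_{p-1}$.

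The main step to verify carefully -- and the only place where constant characteristic $p$ is really used -- is that a single Euclidean division of a polynomial of degree $< n$ by the binomial $T^{p^{k-1}}$ can be carried out in $O(n)$ field operations in $\F_q$, rather than the generic $O(M(n))$ cost of polynomial division. I will argue this by a direct linear sweep from the highest coefficient of the dividend downward: because the divisor has only the two nonzero terms $x^{p^k}$ and $-\alpha^{p^{k-1}} x^{p^{k-1}}$, each top coefficient of the running dividend immediately reads off one quotient coefficient and then triggers exactly one scaled additive update of a strictly lower coefficient, giving $O(1)$ work per coefficient. The scalars $\alpha^{p^j}$ needed to write down the binomials across all levels of the recursion can be built up once at the start in $O(\log n)$ field operations, via repeated $p$-th powering.

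Granting the $O(n)$-time binomial division, the splitting step at size $n$ costs $O(p n)$, and the total cost $C(n)$ satisfies
\[
C(n) \le p\cdot C(n/p) + O(pn), \qquad C(1) = O(1),
\]
which unwinds to $C(n) = O(p n \log_p n)$. Since $p$ is constant by hypothesis, this is $O(n\log n)$, as claimed.
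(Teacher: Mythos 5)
Your proof is correct. The overall divide-and-conquer skeleton is the same as the paper's: both proofs split $f$ into $p$ ``digits'' with respect to $T^{p^{k-1}}$, where $T=x^p-\alpha x$, using the Frobenius identity $T^{p^{j}}=x^{p^{j+1}}-\alpha^{p^{j}}x^{p^{j}}$, and both get the recursion $C(n)=p\,C(n/p)+O(pn)$, hence $O(n\log n)$ for constant $p$. Where you differ is in how the splitting $f=\sum_{j=0}^{p-1} r_j\,T^{jp^{k-1}}$ is actually computed: the paper obtains the digits by writing $f$ in blocks $f_{k,\ell}(x)x^{\ell p^{r-2}+kp^{r-1}}$, substituting $x^{p^{r-1}}=T^{p^{r-2}}+\alpha^{p^{r-2}}x^{p^{r-2}}$, expanding $(T^{p^{r-2}}+\alpha^{p^{r-2}}x^{p^{r-2}})^k$ with binomial coefficients and regrouping (including the carry case $\ell+k-j\ge p$), whereas you perform $p-1$ successive Euclidean divisions by the two-term divisor $T^{p^{k-1}}$, each in $O(n)$ operations by a high-to-low coefficient sweep. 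Your route avoids the binomial-coefficient and carry bookkeeping entirely, and correctness of the recursion is immediate from uniqueness of the base-$T$ (and base-$T^{p^{k-1}}$) representation with digits of degree $<p$ (resp.\ $<p^k$); the one point you leave implicit, which is worth a sentence, is exactly this uniqueness argument showing that the recursively computed digits of the $r_j$ concatenate to the $T$-adic expansion of $f$. The paper's version, in exchange for messier algebra, produces explicit closed-form expressions for the digit polynomials $g_j$; both give the same asymptotic cost $O(pn)$ per level and the same final bound.
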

\begin{proof}
Let $c_n$ be the leading coefficient of $f(x)$. If $n\leq p$, then the $(x^p-\alpha x)$-adic expansion of $f$ is
\[f(x)=\begin{cases}
f(x), &\text{if}\ \deg(f)<p,\\
c_p(x^p-\alpha x)+ f(x)-c_p(x^p-\alpha x), &\text{if}\ \deg(f)=p.
\end{cases}\]
Next, we assume $\deg(f)>p$. Let $r\geq 2$ be a positive integer such that
\[p^{r-1}< n\leq p^r.\]
Write $f(x)$ as
\begin{equation}\label{eq:fexp0}
\begin{split}
f(x)&=\sum_{k=0}^{p-1}f_1(x)\cdot x^{p^{r-1}}+\dots+f_{p-1}(x)\cdot x^{(p-1)p^{r-1}},\\
&=\sum_{k=0}^{p-1}\sum_{\ell=0}^{p-1} f_{k,\ell}(x)\cdot x^{\ell p^{r-2}+k p^{r-1}},
\end{split}
\end{equation}
where $\deg(f_{k, \ell}) < p^{r-2}$ for $0\leq k, \ell \leq p-1$. For the convenience of writing, we sometimes denote $ x^p -\alpha x$ by $T$ and denote $\alpha^{p^m}$ by $\alpha_{m}$ for any integer $m\geq 0$. Since the characteristic of $\F_q$ is $p$, we have
\[x^{p^{r-1}}=(x^p-\alpha x)^{p^{r-2}}+\alpha_{r-2}x^{p^{r-2}}=T^{{p^{r-2}}}+\alpha_{r-2}x^{p^{r-2}}.\]
Thus, for each $k$ in $[0, p-1]$, we have
\begin{equation}\label{eq:T}
x^{kp^{r-1}}=(T^{p^{r-2}}+\alpha_{r-2}x^{p^{r-2}})^k=\sum_{j=0}^{k}\binom{k}{j}\cdot\alpha_{r-2}^{k-j}\cdot x^{(k-j)p^{r-2}} T^{jp^{r-2}}.
\end{equation}
Now, by substituting each $x^{kp^{r-1}}$ in equation \eqref{eq:fexp0} with the formula on the right side of equation \eqref{eq:fexp0}, we then have
\begin{equation}\label{eq:mid}
\begin{split}
f(x) &= \sum_{k, \ell}\sum_{j=0}^k f_{k,\ell}(x)\cdot \binom{k}{j}\cdot\alpha_{r-2}^{k-j}\cdot x^{(\ell+k-j)p^{r-2}} T^{jp^{r-2}}\\
&=\sum_{j=0}^{p-1} \sum_{k\geq j}^{p-1}\sum_{\ell=0}^{p-1}  f_{k,\ell}(x)\cdot \binom{k}{j}\cdot\alpha_{r-2}^{k-j}\cdot x^{(\ell+k-j)p^{r-2}} T^{jp^{r-2}}.
\end{split}
\end{equation}
If $\ell+k-j\geq p$, then
\[x^{(\ell+k-j)p^{r-2}}=x^{(\ell+k-j-p)p^{r-2}}(T^{p^{r-2}}+\alpha_{r-2}).\]
We split the sum over $\ell$ in equation \eqref{eq:mid} into two parts: a sum over $\ell$ satisfying $\ell\geq p-k+j$ and a sum over $\ell$ satisfying $\ell< p-k+j$. Then, by the above equation, we have
\begin{equation}\label{eq:final}
\begin{split}
f(x) &=\sum_{j=0}^{p-1}\left(\sum_{k\geq j}^{p-1}\sum_{\ell+k-j < p}  f_{k,\ell}(x)\cdot \binom{k}{j}\cdot\alpha_{r-2}^{k-j}\cdot x^{(\ell+k-j)p^{r-2}} \right)T^{jp^{r-2}} \\
&+\sum_{j=0}^{p-1} \left(\sum_{k\geq j}^{p-1}\sum_{\ell+k-j\geq p} f_{k,\ell}(x)\cdot \binom{k}{j}\cdot\alpha_{r-2}^{k-j+1}\cdot x^{(\ell+k-j-p+1)p^{r-2}}\right)T^{jp^{r-2}} \\
&+\sum_{j=1}^{p-1}\left( \sum_{k\geq j-1}^{p-1}\sum_{\ell+k-j+1\geq p}  f_{k,\ell}(x)\cdot \binom{k}{j-1}\cdot\alpha_{r-2}^{k-j+1}\cdot x^{(\ell+k-j-p+1)p^{r-2}} \right)T^{jp^{r-2}}.
\end{split}
\end{equation}
Note that the three coefficients of $T^{jp^{r-2}}$ in the above equation all are polynomials of degree less than $p^{r-1}$. Thus their sum, denoted by $g_j(x)$, is also a polynomial of degree less than $p^{r-1}$. To explicitly compute the coefficients polynomial $g_j(x)$ of $T^{jp^{r-2}}$, we need to add some polynomial terms $f_{k, \ell}x^{sp^{r-2}}$ and $f_{k',\ell'}x^{s'p^{r-2}}$ for $0\leq s, s'\leq p-1$ in equation \eqref{eq:final}. Since $\deg(f_{\ell, k}), \deg(f_{\ell', k'})<p^{r-2}$, the additions are only needed when $s=s'$. In this case, $f_{k, \ell}x^{sp^{r-2}}+f_{k',\ell'}x^{sp^{r-2}}$ can be computed in at most $p^{r-2}$ additions in $\F_q$. By counting, there are at most $p^2$ polynomials in the form $f_{k,\ell}x^{sp^{r-2}}$ for $0<\ell, k\leq p-1$. Thus we need at most $p\cdot p^{r}$ additions in $\F_q$ to compute the coefficient $g_j(x)$ of $T^{jp^{r-2}}$. Next, we count the total multiplications in $\F_q$ to compute $g_j(x)$. Since there are most $p^2$ polynomials $f_{k,\ell}$ needed to be multiplied by a scalar $\binom{k}{j}\alpha_{r-2}^{k-j} $ and each scalar multiplication needs at most $p^{r-2}$ multiplications in $\F_q$. Thus the total multiplications are also $p^{r+1}$. After the combinations, we get
\[f(x)= \sum_{j=0}^{p-1} g_j(x)T^{jp^{r-2}}, \ \deg(g_j)< p^{r-1}\ \text{for}\ 0\leq j\leq p-1.\]
Thus, the $(x^p-\alpha x)$-adic expansion of $f(x)$ of degree less than $p^r$ is reduced to $p$ $(x^p-\alpha x)$-adic expansions of polynomials of degree $< p^{r-1}$. We can continue this procedure recursively to $g_0, g_1, \dots, g_{p-1}$. Then in at most $r$ recursions, all the polynomials will have degrees less than $p$ and we obtain the $(x^p-\alpha x)$-expansion of $f(x)$. Let $A_p(n)$ and $M_p(n)$ be the complexity of the $(x^p-\alpha x)$-adic expansion of a polynomial in $\F_q[x]_{\leq n}$. Through the above analysis, we have
 \[A_p(n)=p\cdot A_p(n/p)+O(n),\ M_p(n)=p\cdot M_p(n/p)+O(n).\]
The recursive formula finally leads to $A_p(n)=O(n\cdot\log n)$ and $M_p(n)=O(n\cdot\log n)$.
\end{proof}

\section{An example of G-FFT with $q+1=128$}
\begin{table}[htbp]
  \centering
  \caption{The coefficients of $f(x)$}
    \begin{tabular}{|r|rrrrrrrrrrrrrrrrr|}
\hline
\hline
$e$  &     0     & 1     & 2     & 3     & 4     & 5     & 6     & 7     & 8     & 9     & 10    & 11    & 12    & 13    & 14    & 15    & 16 \\
$a_{\bbe}$ &      15    & 4     & 37    & 109   & 3     & 87    & 116   & 18    & 10    & 90    & 73    & 51    & 92    & 66    & 121   & 86    & 70 \\ \hline
 $e$  &    17    & 18    & 19    & 20    & 21    & 22    & 23    & 24    & 25    & 26    & 27    & 28    & 29    & 30    & 31    & 32    & 33 \\
 $a_{\bbe}$ &     13    & 21    & 95    & 29    & 122   & 78    & 122   & 78    & 41    & 26    & 49    & 44    & 66    & 19    & 66    & 40    & 121 \\  \hline
$e$  &       34    & 35    & 36    & 37    & 38    & 39    & 40    & 41    & 42    & 43    & 44    & 45    & 46    & 47    & 48    & 49    & 50 \\
   $a_{\bbe}$ &   81    & 3     & 116   & 4     & 50    & 40    & 121   & 85    & 25    & 66    & 38    & 55    & 42    & 98    & 37    & 116   & 15 \\ \hline
$e$  &     51    & 52    & 53    & 54    & 55    & 56    & 57    & 58    & 59    & 60    & 61    & 62    & 63    & 64    & 65    & 66    & 67 \\
  $a_{\bbe}$ &    49    & 33    & 100   & 86    & 120   & 104   & 61    & 114   & 0     & 10    & 17    & 68    & 91    & 81    & 98    & 124   & 44 \\ \hline
 $e$  &    68    & 69    & 70    & 71    & 72    & 73    & 74    & 75    & 76    & 77    & 78    & 79    & 80    & 81    & 82    & 83    & 84 \\
   $a_{\bbe}$ &    5     & 23    & 119   & 115   & 25    & 73    & 10    & 113   & 17    & 91    & 11    & 86    & 118   & 8     & 31    & 63    & 32 \\ \hline
$e$  &      85    & 86    & 87    & 88    & 89    & 90    & 91    & 92    & 93    & 94    & 95    & 96    & 97    & 98    & 99    & 100   & 101 \\
  $a_{\bbe}$ &     21    & 62    & 77    & 51    & 90    & 53    & 89    & 48    & 97    & 11    & 15    & 77    & 8     & 64    & 63    & 7     & 62 \\ \hline
 $e$  &     102   & 103   & 104   & 105   & 106   & 107   & 108   & 109   & 110   & 111   & 112   & 113   & 114   & 115   & 116   & 117   & 118 \\ 
  $a_{\bbe}$ &     55    & 92    & 116   & 116   & 118   & 53    & 80    & 39    & 47    & 84    & 53    & 100   & 4     & 97    & 40    & 106   & 108 \\ \hline
 $e$  &     119   & 120   & 121   & 122   & 123   & 124   & 125   & 126   & 127   &       &       &       &       &       &       &       &  \\
  $a_{\bbe}$ &     39    & 107   & 25    & 67    & 51    & 87    & 90    & 111   & 93    &       &       &       &       &       &       &       &  \\ \hline
    \end{tabular}%
  \label{tab:addlabel}%
\end{table}%

\begin{table}[htbp]
  \centering
  \caption{The MPEs of $f(x)$ and $\tilde{f}(x)$}
    \begin{tabular}{|r|rr|r|rr|r|rr|r|rr|r|rr|}
\hline 
\hline
$\alpha$ &  $f(\alpha)$ & $\tilde{f}(\alpha)$ & $\alpha$ & ${f}(\alpha)$ & $\tilde{f}(\alpha)$  & $\alpha$ & ${f}(\alpha)$ & $\tilde{f}(\alpha)$ & $\alpha$ & ${f}(\alpha)$ & $\tilde{f}(\alpha)$ & $\alpha$ & ${f}(\alpha)$ & $\tilde{f}(\alpha)$\\ \hline
{$\infty$} & 0     & 0     & 81    & 74    & 71    & 8     & 55    & 114   & 59    & 85    & 121   & 6     & 54    & 32 \\
    106   & 123   & 89    & 107   & 126   & 91    & 92    & 75    & 101   & 66    & 11    & 112   & 11    & 3     & 62 \\
    101   & 94    & 2     & 24    & 91    & 33    & 46    & 108   & 57    & 5     & 92    & 104   & 38    & 61    & 123 \\
    111   & 123   & 108   & 49    & 3     & 63    & 117   & 99    & 31    & 22    & 85    & 35    & 41    & 10    & 79 \\
    21    & 42    & 64    & 68    & 105   & 95    & 9     & 96    & 83    & 20    & 68    & 5     & 19    & 119   & 34 \\
    54    & 102   & 106   & 110   & 42    & 107   & 84    & 22    & 18    & 93    & 74    & 14    & 26    & 117   & 53 \\
    31    & 9     & 99    & 1     & 40    & 102   & 37    & 68    & 9     & 25    & 98    & 111   & 27    & 25    & 48 \\
    64    & 98    & 107   & 76    & 93    & 5     & 99    & 77    & 6     & 53    & 96    & 62    & 124   & 68    & 43 \\
    34    & 114   & 12    & 48    & 109   & 91    & 12    & 31    & 68    & 75    & 124   & 119   & 10    & 40    & 22 \\
    94    & 38    & 103   & 113   & 10    & 75    & 86    & 36    & 50    & 115   & 41    & 5     & 97    & 40    & 104 \\
    89    & 89    & 93    & 13    & 16    & 100   & 72    & 30    & 124   & 23    & 68    & 48    & 32    & 116   & 22 \\
    100   & 120   & 30    & 109   & 21    & 100   & 103   & 68    & 94    & 91    & 48    & 71    & 60    & 72    & 66 \\
    51    & 76    & 8     & 73    & 113   & 84    & 0     & 86    & 61    & 40    & 83    & 72    & 63    & 20    & 68 \\
    118   & 37    & 117   & 126   & 48    & 109   & 2     & 79    & 89    & 116   & 96    & 30    & 80    & 3     & 31 \\
    112   & 0     & 0     & 14    & 29    & 116   & 70    & 23    & 71    & 30    & 23    & 17    & 65    & 54    & 60 \\
    123   & 24    & 95    & 69    & 36    & 71    & 82    & 120   & 118   & 108   & 69    & 33    & 119   & 40    & 11 \\
    4     & 0     & 0     & 57    & 77    & 83    & 43    & 89    & 11    & 33    & 75    & 81    & 45    & 111   & 107 \\
    105   & 111   & 59    & 78    & 106   & 106   & 56    & 105   & 52    & 122   & 73    & 33    & 96    & 48    & 15 \\
    35    & 109   & 6     & 39    & 35    & 22    & 98    & 15    & 65    & 3     & 104   & 43    & 62    & 126   & 74 \\
    67    & 51    & 57    & 95    & 65    & 114   & 125   & 8     & 126   & 15    & 110   & 69    & 121   & 58    & 17 \\
    17    & 62    & 44    & 77    & 73    & 22    & 44    & 58    & 1     & 83    & 46    & 47    & 52    & 93    & 9 \\
    102   & 105   & 77    & 120   & 53    & 90    & 47    & 10    & 16    & 85    & 16    & 97    & 90    & 36    & 105 \\
    36    & 69    & 52    & 7     & 101   & 101   & 74    & 110   & 72    & 29    & 76    & 109   & 55    & 77    & 79 \\
    61    & 40    & 48    & 28    & 77    & 83    & 79    & 69    & 55    & 42    & 122   & 85    & 104   & 34    & 77 \\
    18    & 89    & 92    & 16    & 41    & 35    & 58    & 77    & 31    & 87    & 6     & 31    &       &       &  \\
    50    & 123   & 86    & 71    & 84    & 82    & 88    & 4     & 10    & 114   & 43    & 17    &       &       &  \\ \hline
    \end{tabular}%
  \label{tab:addlabel}%
\end{table}%

\end{document}